\newcommand{\subp}{\ensuremath{Q}}
\newtheorem{theorem}{Theorem} 
\newtheorem{lemma}[theorem]{Lemma}
\newtheorem{corollary}[theorem]{Corollary}
\newtheorem{definition}[theorem]{Definition}
\newbox\ProofSym \setbox\ProofSym=\hbox{%
	\unitlength=0.18ex%
	\begin{picture}(10,10) \put(0,0){\framebox(9,9){}}
	\put(0,3){\framebox(6,6){}}
	\end{picture}}
\title{A New Balanced Subdivision of a Simple Polygon for Time-Space
  Trade-off Algorithms%
  \thanks{This research was supported by the MSIT(Ministry of Science
    and ICT), Korea, under the SW Starlab support
    program(IITP-2017-0-00905) supervised by the IITP(Institute for
    Information \& communications Technology Promotion)}}
\author{Eunjin Oh\thanks{Max Planck Institute for Informatics, Saarbr\"ucken, Germany, Email: \texttt{eoh@mpi-inf.mpg.de}} \and Hee-Kap Ahn\thanks{Department of Computer Science and
		Engineering, Pohang, POSTECH, Korea, Email: \texttt{heekap@postech.ac.kr}}}
\newcommand{\bd}{\ensuremath{\partial}}
\newcommand{\lcc}{\ensuremath{($\textsc{l}$,$\textsc{cc}$)}}
\newcommand{\lc}{\ensuremath{($\textsc{l}$,$\textsc{c}$)}}
\newcommand{\rcc}{\ensuremath{($\textsc{r}$,$\textsc{cc}$)}}
\newcommand{\rc}{\ensuremath{($\textsc{r}$,$\textsc{c}$)}}
\begin{document}
\date{}
\maketitle

\begin{abstract}
  We are given a read-only memory for input and a write-only stream
  for output.  For a positive integer parameter $s$, an $s$-workspace
  algorithm is an algorithm using only $O(s)$ words of workspace in
  addition to the memory for input.  In this paper, we present an
  $O(n^2/s)$-time $s$-workspace algorithm for subdividing a simple
  polygon into $O(\min\{n/s,s\})$ subpolygons of complexity
  $O(\max\{n/s,s\})$.	
  As applications of the subdivision, the previously best known
  time-space trade-offs for the following three geometric problems are
  improved immediately by adopting the proposed subdivision: 
  (1) computing the shortest path between two
  points inside a simple $n$-gon, (2) computing the shortest path tree
  from a point inside a simple $n$-gon, (3) computing a triangulation
  of a simple $n$-gon.  In addition, we improve the algorithm for
  problem (2) further by applying different approaches
  depending on the size of the workspace. 
\end{abstract}

\section{Introduction}
In the algorithm design for a given task, we seek to construct an
\emph{efficient} algorithm with respect to the time and space complexities.
However, one cannot achieve both goals at the same time in many cases:
one has to use more memory space for storing information necessary 
to achieve a faster algorithm and spend more time if less amount of memory is allowed.
Therefore, one has to make a compromise between the time and
space complexities, considering the goal of the task and the system
resources where the algorithm under design is performed.  With this reason, a
number of time-space trade-offs were considered even as early as in
1980s.  For example, Frederickson~\cite{Frederickson-Upperbounds-1987}
presented optimal time-space trade-offs for sorting and selection
problems in 1987. After this work, a significant amount of research
has been done for time-space trade-offs in the design of algorithms.

The model we consider in this paper is formally described as follows.
An input is given in a read-only memory.  For a positive integer
parameter $s$ which is determined by users, a memory space of
$O(s)$ words are available as workspace 
(read-write memory under a \emph{random access model})
in addition to the memory for input. 
We assume that a word is large enough to store a number
or a pointer. During the process, the output is to
be written to a write-only stream without repetition.
We assume that input is given in a \emph{read-only memory} under a
random-access model.  The assumption on the read-only memory
has been considered in applications where the input is required to be
retained in its original state or more than one program access the
input simultaneously.  An algorithm designed in this setting is
called an \emph{$s$-workspace algorithm}.
It is generally assumed that $s$ is sublinear in the size of input. 


Many classical algorithms 
require workspace of at least the size of input
in addition to the memory for input. However, 
this is not always possible
because the amount of data collected and used by various applications has
significantly increased over the last years and the memory resource
available in the system gets relatively smaller compared to the
amount of data they use.
The $s$-workspace algorithms deal with the case that the size of workspace is limited. Thus we assume that $s$ is at most the size of input throughout this paper.

\subsection{Previous Work}
In this paper, we consider time-space trade-offs for constructing a
few geometric structures inside a simple polygon: the shortest path
between two points, the shortest path tree from a point, and a
triangulation of a simple polygon.  With linear-size workspace,
optimal algorithms for these problems are known.  The shortest path
between two points and the shortest path tree from a point inside a
simple $n$-gon can be computed in $O(n)$
time using $O(n)$ words of workspace~\cite{guibas_linear_1987}.
A triangulation of a simple $n$-gon
can also be computed in $O(n)$
time using $O(n)$ words of workspace~\cite{Chazelle-triangulating-1991}.

For a positive integer parameter $s$, the following $s$-workspace
algorithms are known.
\begin{itemize}
\item \textbf{The shortest path between two points inside a simple
    polygon:} The first non-trivial $s$-workspace algorithm for
  computing the shortest path between any two points in a simple
  $n$-gon was given by Asano et
  al.~\cite{asano_memory-constrained_2013}.  Their algorithm consists
  of two phases. In the first phase, they subdivide the input simple
  polygon into $O(s)$ subpolygons of complexity $O(n/s)$ in $O(n^2)$
  time.  In the second phase, they compute the shortest path between
  the two points in $O(n^2/s)$ time using the subdivision.  In the
  paper, they asked whether the first phase can
  be improved to take $O(n^2/s)$ time.  This
  problem is still open while there are several partial results.
	
  Har-Peled~\cite{har-peled_shortest_2015} presented an $s$-workspace
  algorithm which takes $O(n^2/s+n\log s \log^4(n/s))$ expected time.
  Their algorithm takes $O(n^2/s)$ expected time for the case of
  $s=O(n/\log^2 n)$.  For the case that the input polygon is monotone,
  Barba et al.~\cite{barba_spacetime_2015} presented an $s$-workspace
  algorithm which takes $O(n^2/s+(n^2\log n)/2^s)$ time.  Their
  algorithm takes $O(n^2/s)$ time for $\log\log n\leq s <n$.
	
\item \textbf{The shortest path tree from a point inside a simple
    polygon:} The shortest path tree from a point $p$ inside a simple polygon
    is defined as the union of the shortest paths from $p$ to all vertices of the simple polygon.
Aronov et al.~\cite{aronov_time-space} presented an
  $s$-workspace algorithm for computing the shortest path tree from a
  given point.  Their algorithm reports the edges of the shortest path
  tree without repetition in an arbitrary order in
  $O((n^2\log n)/s + n\log s\log^5(n/s))$ expected time.
	
\item \textbf{A triangulation of a simple polygon:} Aronov et
  al.~\cite{aronov_time-space} presented an $s$-workspace algorithm
  for computing a triangulation of a simple $n$-gon.  Their algorithm
  returns the edges of a triangulation without repetition in
  $O(n^2/s + n\log s\log^5{(n/s)})$ expected time.  Moreover, their
  algorithm can be modified to report the resulting triangles of a
  triangulation together with their adjacency information in the same
  time if $s\geq \log n$.
	
  For a monotone $n$-gon, Barba et al.~\cite{barba_spacetime_2015}
  presented an $O(s\log_s n)$-workspace algorithm for triangulating the
  polygon in $O(n\log_s n)$ time for a parameter $s\in\{1,\ldots,n\}$.
  Later, Asano and Kirkpatrick~\cite{Asano-time-space-2013} showed how
  to reduce the workspace to $O(s)$ words without increasing the
  running time.
\end{itemize}

\subsection{Our Results}
In this paper, we present an $s$-workspace algorithm to subdivide a simple polygon
with $n$ vertices into $O(\min\{n/s,s\})$ subpolygons of complexity
$O(\max\{n/s,s\})$ in $O(n^2/s)$ deterministic time.  We obtain this
subdivision in three steps. First, we choose every $\max\{n/s,s\}$th
vertex of the simple polygon which we call \emph{partition vertices}.
In the second step, for every pair of consecutive partition vertices
along the polygon boundary,
we choose $O(1)$ vertices which we call \emph{extreme vertices}.  Then
we draw the vertical extensions from each partition vertex and each
extreme vertex, one going upwards and one going downwards, until the
extensions escape from the polygon for the first time. 
These extensions subdivide the polygon into subpolygons. In the subdivision,
however, some subpolygons may still have complexity strictly larger than 
$O(\max\{n/s,s\})$.
In the third step,
we subdivide each such subpolygon further into subpolygons of complexity
$O(\max\{n/s,s\})$.  Then we show that the resulting subdivision has
the desired complexity.

By using this subdivision method, we improve
the running times for the following three problems without increasing
the size of the workspace.
\begin{itemize}
\item \textbf{The shortest path between two points inside a simple
    polygon:} We can compute the shortest path between any two points
  inside a simple $n$-gon in $O(n^2/s)$ deterministic time using
  $O(s)$ words of workspace.  The previously best known $s$-workspace
  algorithm~\cite{har-peled_shortest_2015} takes
  $O(n^2/s + n\log s \log^4(n/s))$ expected time.
\item \textbf{The shortest path tree from a point inside a simple
    polygon:} 
  The previously best known $s$-workspace
  algorithm~\cite{aronov_time-space} takes
  $O((n^2\log n)/s + n\log s\log^5{(n/s)})$ expected time. It 
  uses the algorithm in~\cite{har-peled_shortest_2015} as a subprocedure for 
  computing the shortest path between two points. 
    If the subprocedure is replaced by our shortest path algorithm, 
  the algorithm is improved to take
  $O((n^2\log n)/s)$ expected time.
\item \textbf{A triangulation of a simple polygon:} The previously
  best known $s$-workspace algorithm~\cite{aronov_time-space} takes
  $O(n^2/s + n\log s \log^4(n/s))$ expected time, which 
  uses the shortest path algorithm in~\cite{har-peled_shortest_2015} as a subprocedure. 
  If the subprocedure is replaced by our shortest path algorithm, 
  the triangulation algorithm is improved to take only $O(n^2/s)$ deterministic time.
\end{itemize}

We also improve the algorithm for computing the shortest path tree from a given point
even further to take $O(n^2/s + (n^2\log n)/s^c)$ expected time for an arbitrary positive constant $c$.
The improved result is based on the constant-workspace algorithm by Aronov et al.~\cite{aronov_time-space} 
for computing the shortest path
tree rooted at a given point. Depending on the size of workspace, 
we use two different approaches. For the case of $s=O(\sqrt{n})$, we decompose the polygon
into subpolygons, each associated with a vertex, and for each subpolygon we compute the shortest path tree 
rooted at its associated vertex inside the subpolygon recursively. Due to the workspace constraint, we stop
the recursion at a constant depth once one of the stopping criteria is satisfied. Then we show how to
report the edges of the shortest path tree without repetition efficiently using $O(s)$ words of workspace.
For the case of $s=\Omega(\sqrt{n})$, we can store all edges of each subpolygon in the workspace. We decompose
the polygon into subpolygons associated with vertices and solve each subproblem directly using the algorithm
by Guibas et al.~\cite{guibas_linear_1987}.

\section{Preliminaries}
Let $P$ be a simple polygon with $n$ vertices.  Let
$v_0,\ldots,v_{n-1}$ be the vertices of $P$ in clockwise order along
$\bd P$.  The vertices of $P$ are stored in a read-only memory in this
order. For a subpolygon $S$ of $P$, we use $\bd S$ to denote the
boundary of $S$ and $|S|$ to denote the number of vertices of $S$.  For any
two points $p$ and $q$ in $P$, we use $\pi(p,q)$ to denote the
shortest path between $p$ and $q$ contained in $P$.  To ease the
description, we assume that no two distinct vertices of $P$ have the
same $x$-coordinate.  We can avoid this assumption by using a shear
transformation~\cite[Chapter 6]{CGbook}.

Let $v$ be a vertex of $P$. We consider two vertical extensions from
$v$, one going upwards and one going downwards, until they escape from
$P$ for the first time.  A vertical extension from $v$ contains no
vertex of $P$ other than $v$ due to the assumption we made above.
We call the point of $\bd P$ where an extension from $v$ escapes from
$P$ for the first time a \emph{foot point} of $v$.  Note that a foot
point of a vertex might be the vertex itself.  
The following two lemmas show how to compute and report the foot points
of vertices using $O(s)$ words of workspace 
\begin{lemma}
  \label{lem:compute-extreme-small}
  For a polygonal chain $\gamma\subseteq\bd P$ of size $O(s)$, we can compute the
  foot points of all vertices of $\gamma$ in $O(n)$ deterministic time
  using $O(s)$ words of workspace.
\end{lemma}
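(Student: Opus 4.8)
The plan is to obtain all foot points of $\gamma$ in a single scan of $\bd P$, keeping a running estimate of the two foot points of every vertex of $\gamma$ in the $O(s)$ words of workspace and reporting them at the end.

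First I would copy the $O(s)$ vertices of $\gamma$ into the workspace. For each vertex $v$ of $\gamma$, its two incident edges of $\bd P$ already reveal, in $O(1)$ time, whether the upward vertical extension of $v$ enters the interior of $P$ or leaves $P$ immediately, and likewise for the downward extension; in the latter case the corresponding foot point is $v$ itself and nothing more is needed. So it suffices to locate, for every vertex $v$ of $\gamma$ whose upward extension enters $P$, the point $(x_v,y^{\ast})$ with $y^{\ast}=\min\{\,y>y_v:(x_v,y)\in\bd P\,\}$ --- equivalently, the lowest edge of $\bd P$ crossing the vertical line through $v$ strictly above $v$ --- and symmetrically for the downward extensions. (This is well defined because a vertical extension meets no vertex of $P$ besides $v$, so all the relevant crossings lie in edge interiors; and the lowest such crossing is the foot point because the vertical segment between $v$ and it meets $\bd P$ nowhere, hence stays inside $P$.)

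To resolve all vertices of $\gamma$ at once, I would read the $n$ edges of $\bd P$ in $\lceil n/s\rceil$ batches of $O(s)$ consecutive edges. A whole batch fits in the workspace together with $\gamma$, so inside a batch I can run an ordinary left-to-right plane sweep --- after sorting the vertices of $\gamma$ by $x$-coordinate once at the outset, it only remains to sort the $O(s)$ endpoints of the batch and to maintain the batch's currently active edges in a height-ordered balanced search tree of size $O(s)$ --- and, at each vertex $v$ of $\gamma$, read off the lowest active edge of the batch lying above $v$ and the highest lying below $v$. Keeping, for every vertex of $\gamma$, the best of these per-batch answers over all batches gives its two foot points; correctness is immediate since every edge of $\bd P$ belongs to exactly one batch, and the workspace stays $O(s)$ because only one batch, the chain $\gamma$, and the $O(s)$ running answers are ever held simultaneously.

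The step I expect to be the real obstacle is the $O(n)$ running time. A single edge of $\bd P$ can cross the vertical lines through all $\Theta(s)$ vertices of $\gamma$ at once --- think of a spiral-shaped $P$ with $\gamma$ on its innermost turn --- so we cannot afford to touch every crossed vertex for every edge, and the obvious bound is only $\Theta(ns)$, while the batched sweep as described still spends $\Theta(\log s)$ per edge. Getting down to $O(n)$ requires charging the work to the $O(s)$ edges of a batch rather than to the vertices they cross and, on top of that, eliminating the logarithmic overhead --- which I expect to need the fact that the edges form the boundary of a \emph{simple} polygon and that $\gamma$ is a connected sub-chain, rather than treating the edges as an arbitrary set of segments. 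A clean treatment of the degenerate cases in the classification step (a foot point coinciding with its vertex, or an extension grazing the boundary) is a smaller but necessary part of the argument.
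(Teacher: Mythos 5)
Your high-level structure is exactly the paper's: classify in $O(1)$ whether a foot point is the vertex itself, split $\bd P$ into $O(n/s)$ batches of $O(s)$ consecutive edges, compute for every vertex of $\gamma$ its first hit within the current batch, and keep the closest hit over all batches. The problem is the step you yourself flag as ``the real obstacle'': your per-batch computation is a plane sweep with a height-ordered balanced search tree, which costs $\Theta(s\log s)$ per batch and hence $\Theta(n\log s)$ overall, not the claimed $O(n)$. You correctly guess that simplicity of $P$ and connectivity of the chains must be exploited to remove the logarithm, but you do not supply that argument, so the proof as written does not establish the lemma; moreover the extra $\log s$ factor is not harmless, since Lemma~\ref{lem:compute-footpoints} invokes this lemma $O(n/s)$ times and the paper's headline $O(n^2/s)$ bounds would degrade to $O((n^2\log s)/s)$.

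The missing idea in the paper's proof is to replace the sweep by a known linear-time vertical-decomposition (trapezoidation) subroutine: for each batch $\beta_i$, the vertical decomposition of the two non-crossing polygonal chains $\gamma$ and $\beta_i$ (each of size $O(s)$) is computed in $O(s)$ time and $O(s)$ space by adapting the algorithm of~\cite{chazelle_triangulation_1984}, and the first hit $\beta_i(v)$ of the upward ray from each $v\in\gamma$ is read off directly from the trapezoids incident to $v$. This charges $O(1)$ amortized work per edge of the batch, with no comparison-based ordering of the active edges, which is precisely where the $\log s$ disappears. If you want to complete your proof without citing such a subroutine as a black box, you would have to reprove a linear-time trapezoidation result for two non-crossing chains, which is a substantial piece of work in its own right; with the citation, your batching-and-minimum framework becomes essentially identical to the paper's argument.
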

\begin{proof}
  We show how to compute the foot point of every vertex $v$ of
  $\gamma$ lying above $v$ only. The other foot points can be computed
  analogously.  The foot point of a vertex $v$ of $\gamma$ might be
  $v$ itself. We can determine whether the foot point of $v$ is $v$
  itself or not in $O(1)$ time by considering the two edges incident
  to $v$.  
	
  We split the boundary of $P$ into $O(n/s)$ polygonal chains each of
  which contains $O(s)$ vertices. Let $\beta_0,\ldots,\beta_t$ be the
  resulting polygonal chains with $t=O(n/s)$.  For a vertex
  $v\in\gamma$ whose foot point is not $v$ itself, let $\beta_i(v)$
  denote the first point of $\beta_i$ (excluding $v$) 
  hit by the upward vertical ray 
  from $v$ for each $i=0,\ldots,t$. If there is no such point, we let $\beta_i(v)$ denote a
  point at infinity.  We observe that the foot point of a vertex
  $v\in\gamma$ is the one closest to $v$ among $\beta_i(v)$'s for
  $i=0,\ldots,t$ unless its foot point is $v$ itself.
	
  For any fixed index $i\in\{0,\ldots, t\}$, we can compute $\beta_i(v)$ for all
  vertices $v\in\gamma$ whose foot points are not themselves in $O(s)$
  time using $O(s)$ words of workspace using the algorithm
  in~\cite{chazelle_triangulation_1984}.  This algorithm computes the
  vertical decomposition of a simple polygon in linear time using
  linear space, but it can be modified to compute the vertical
  decomposition of any two non-crossing polygonal curves without
  increasing the time and space complexities.
  Since both $\beta_i$ and $\gamma$ have size of $O(s)$,
  we can apply the vertical decomposition algorithm in~\cite{chazelle_triangulation_1984}
  in $O(s)$ time using $O(s)$ words of workspace.
	
  We apply this algorithm to $\beta_0$. For each vertex $v$ of
  $\gamma$ whose foot point is not $v$ itself, we store $\beta_0(v)$
  in the workspace.  Now we assume that we have the one closest to $v$
  among $\beta_i(v)$'s, for $i=0,\ldots,j-1$, stored in the workspace. To compute the one
  closest to $v$ among $\beta_i(v)$'s for $i=1,\ldots,j$, we compute
  $\beta_j(v)$.  This can be done in $O(s)$ time for all vertices on
  $\gamma$ whose foot points are not themselves using the algorithm
  in~\cite{chazelle_triangulation_1984}. Then we compare $\beta_j(v)$
  and the one stored in the workspace, 
  choose the one closer to $v$ between them and store it in the workspace. 
  
  Once we do this for all polygonal chains $\beta_i$, we obtain the
  foot points of all vertices of $\gamma$ by the observation. Since we
  spend $O(s)$ time for each polygonal chain $\beta_i$, the total running time is
  $O(n)$.
\end{proof}

\begin{lemma}
	\label{lem:compute-footpoints}
	We can report the foot points of all vertices of $P$ 
	in $O(n^2/s)$ deterministic time using $O(s)$ words of workspace. 
\end{lemma}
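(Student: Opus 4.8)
The plan is to reduce the global problem to repeated applications of Lemma~\ref{lem:compute-extreme-small}. First I would split the boundary $\bd P$ into $\lceil n/s \rceil$ polygonal chains $\gamma_1,\ldots,\gamma_k$, each consisting of $O(s)$ consecutive vertices of $P$; since the vertices are stored in clockwise order in the read-only memory, the endpoints of these chains can be addressed directly by index without any extra bookkeeping beyond $O(1)$ words. For each chain $\gamma_j$ in turn, I invoke Lemma~\ref{lem:compute-extreme-small} on $\gamma_j$: this computes the foot points of all $O(s)$ vertices of $\gamma_j$ in $O(n)$ deterministic time using $O(s)$ words of workspace. As soon as the foot points for $\gamma_j$ are available in the workspace, I write them to the write-only output stream and then discard them, freeing the workspace for the next chain. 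Iterating over all $k = O(n/s)$ chains yields all foot points of $P$, each reported exactly once (each vertex belongs to exactly one chain).

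For the complexity analysis: each of the $O(n/s)$ invocations of Lemma~\ref{lem:compute-extreme-small} costs $O(n)$ time, so the total time is $O(n/s)\cdot O(n) = O(n^2/s)$. The workspace is reused across iterations, so at any moment we use only the $O(s)$ words required by a single invocation of Lemma~\ref{lem:compute-extreme-small} plus $O(1)$ words for the loop index and chain boundaries, giving $O(s)$ words overall. The output is written without repetition because the chains partition the vertex set of $P$.

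The only point that needs a little care — and I expect it to be the main (minor) obstacle — is making sure the split of $\bd P$ into chains of size $O(s)$ is consistent with the hypothesis of Lemma~\ref{lem:compute-extreme-small}, namely that each $\gamma_j$ is genuinely a connected polygonal subchain of $\bd P$; this is immediate here since we cut $\bd P$ at vertices. One should also note that a vertex lying on the boundary between two chains (an endpoint shared by $\gamma_j$ and $\gamma_{j+1}$) should be assigned to exactly one of them to avoid reporting its foot point twice; this is handled by the usual half-open convention on the index ranges and costs nothing. No other subtlety arises, since Lemma~\ref{lem:compute-extreme-small} already encapsulates the geometric work of locating foot points against the entire boundary $\bd P$.
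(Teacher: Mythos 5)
Your proposal is correct and follows exactly the paper's argument: the paper likewise applies Lemma~\ref{lem:compute-extreme-small} to successive blocks of $s$ consecutive vertices of $P$, making $O(n/s)$ invocations of $O(n)$ time each for a total of $O(n^2/s)$ time within $O(s)$ words. Your additional remarks on reusing the workspace and assigning shared chain endpoints to a single chain are fine but not needed beyond what the paper states.
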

\begin{proof}
	We apply the procedure in Lemma~\ref{lem:compute-extreme-small} 
	to the first $s$ vertices of $P$, the next $s$ vertices
	of $P$, and so on. In this way we apply this procedure $O(n/s)$ times.
	Thus we can find all foot points in $O(n^2/s)$ time.
\end{proof}

The extensions from some vertices of $P$ induce a subdivision of $P$
into subpolygons. 
Notice that the number of subpolygons in the subdivision
is linear to the number of extensions. In the following sections, we compute
$O(\min \{n/s, s\})$ extensions from vertices of $P$ and use them to subdivide
$P$ into $O(\min \{n/s, s\})$ subpolygons.
We store the endpoints of the extensions of the subdivision together with
the extensions themselves in clockwise order along $\bd P$ in the
workspace. Then
we can traverse the boundary of the subpolygon starting from a given edge of
the subpolygon in time linear to the complexity
of the subpolygon. 

\section{Balanced Subdivision of a Simple Polygon}
\label{sec:subdivision}
We say that a subdivision of $P$ with $n$ vertices
\emph{balanced} if it subdivides $P$ into $O(\min\{n/s,s\})$ subpolygons
of complexity $O(\max\{n/s,s\})$.
In this section, we present an $s$-workspace algorithm that computes
a balanced subdivision using $O(\min\{n/s,s\})$ extensions in $O(n^2/s)$ time.
In the following sections, we present
a subdivision procedure in three steps. Then we show that the subdivision
is balanced.

\begin{figure}
  \begin{center}
    \includegraphics[width=0.7\textwidth]{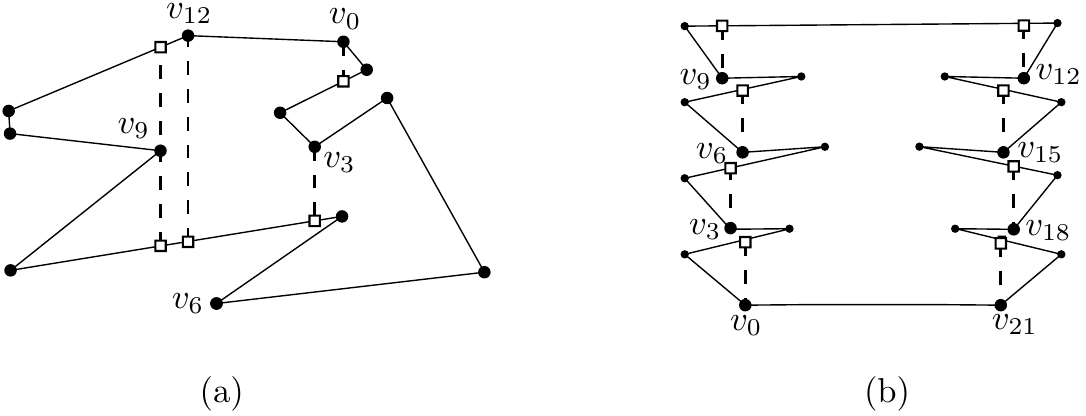}
    \caption{\label{fig:subdivide}\small (a) Subdivision of $P$ with
      $\triangle=3$ induced by partition vertices.  (b) The subpolygon
      in the middle is incident to 
      $n/\triangle$ vertical extensions, and therefore it has complexity 
      strictly larger than $O(\triangle)$
        for a constant $\triangle$.}
  \end{center}
\end{figure}
\subsection{Subdivision in Three Steps}
We first present an $s$-workspace algorithm to subdivide $P$ into
$O(n/\triangle)$ subpolygons of complexity $O(\triangle)$ using
$O(n/\triangle)$ extensions in $O(n^2/s)$ time, where $\triangle$ is a
positive integer satisfying
$\max\{n/s, (s\log n)/n\}\leq \triangle \leq n$
which is determined by $s$.  Since
  $n/s\leq \triangle$, we have $n/\triangle \leq s$.
Thus, we can keep all such extensions in the
workspace of size $O(s)$.  We will set the value of $\triangle$ in
Theorem~\ref{thm:subdivide} so that we can obtain a subdivision of our
desired complexity.
\label{sec:balanaced-subdivision-3steps}
\paragraph{The first step: Subdivision by partition vertices.}
We first consider every $\triangle$th vertex of $P$ from $v_0$ in
clockwise order, that is,
$v_0,v_{\triangle},v_{2\triangle},\ldots,v_{\lfloor
  n/\triangle\rfloor\triangle}$.  We call them
\emph{partition vertices}.  The number of partition vertices is
$O(n/\triangle)$.  We compute the foot points of each
partition vertex, which can be done for all partition vertices in
$O(n^2/s)$ time in total using $O(s)$ words of workspace by
Lemma~\ref{lem:compute-footpoints}.  We sort the foot points along
$\bd P$ in $O((n/\triangle) \log (n/\triangle))$ time, which is
$O(n^2/s)$ by the fact that $\triangle\geq (s\log n)/n$. We
store them together with their vertical extensions using
$O(n/\triangle)=O(s)$ words of workspace.

The vertical extensions of the partition vertices subdivide $P$ into
$O(n/\triangle)$ subpolygons.  See Figure~\ref{fig:subdivide}(a).
However, there might be a subpolygon with complexity
  strictly larger than $O(\triangle)$.
See Figure~\ref{fig:subdivide}(b).
Recall that our goal is to subdivide $P$ into $O(n/\triangle)$
subpolygons each of complexity $O(\triangle)$.  To achieve this
complexity, we subdivide each subpolygon further.

\paragraph{The second step: Subdivision by extreme vertices.}
The \emph{\lc-extreme vertex} and \emph{\lcc-extreme vertex} of a
polygonal chain $\gamma$ of $\bd P$ are defined as follows.
Let $V_\gamma$ be the set of all vertices of $\gamma$ both of whose
foot points are on $\bd P\setminus \gamma$ and whose extensions lie
locally to the \emph{left} of $\gamma$.  The \lc-extreme vertex (or
the \lcc-extreme vertex) of $\gamma$ is the vertex in $V_\gamma$
defining the first extension we encounter while we traverse $\bd P$ in
clockwise (or counterclockwise) order from $v_0$.  See
Figure~\ref{fig:second-third}(a) for an illustration.  Similarly, we
define the \emph{\rc-extreme vertex} and \emph{\rcc-extreme vertex} of
$\gamma$.  In this case, we consider the vertices of $\gamma$ whose
extensions lie locally to the \emph{right} of $\gamma$.  We simply
call the \lc-,\lcc-,\rc- and \rcc-extreme vertices
\emph{extreme vertices} of $\gamma$.  Note that $\gamma$ may not have
any extreme vertex.

In the second step, we consider every polygonal
chain of $\bd P$ connecting two consecutive partition vertices along
$\bd P$ and compute the extreme vertices of the chain. 
Then we have $O(n/\triangle)$
extreme vertices. We compute the foot points of all extreme vertices
and store them together with their vertical extensions using
$O(n/\triangle)=O(s)$ words of workspace in $O(n^2/s)$ time using
Lemma~\ref{lem:compute-footpoints} and
Lemma~\ref{lem:compute-extreme}.

\begin{figure}
  \begin{center}
    \includegraphics[width=0.75\textwidth]{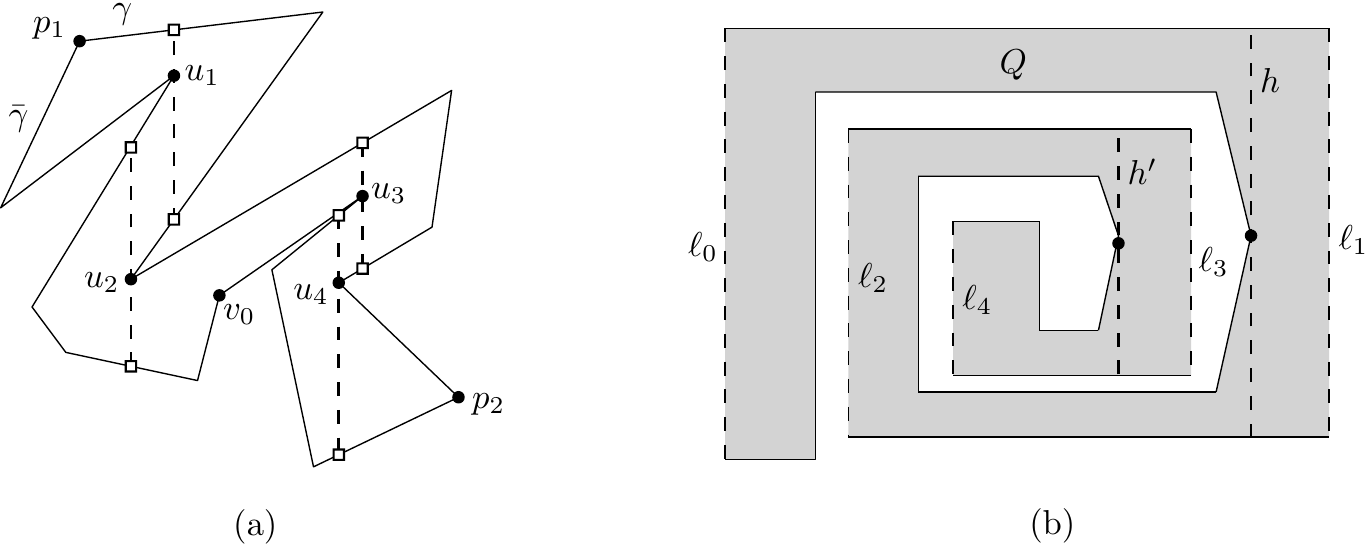}
    \caption{\label{fig:second-third} \small (a) Two chains $\gamma$
      and $\bar{\gamma}$ connecting two vertices $p_1$ and $p_2$.  The
      set $V_{\gamma}=\{u_2,u_4\}$.  The \lc-extreme vertex of
      $\gamma$ is $u_2$, and the \lcc-extreme vertex of $\gamma$ is
      $u_4$.  The \rc-extreme vertex of $\bar{\gamma}$ is $u_1$, and
      the \rcc-extreme vertex of $\bar{\gamma}$ is $u_3$.  (b) In the
      third step, we compute the vertical extension $h$ for
        $(\ell_0,\ell_1,\ell_2)$ and the vertical extension $h'$ for
        $(\ell_2,\ell_3,\ell_4)$ that subdivide $\subp$ into five
        subpolygons.}   \end{center}
\end{figure}

\begin{lemma}
  \label{lem:compute-extreme}
  We can find the extreme vertices of every polygonal chain of $\bd P$
  connecting two consecutive partition vertices along $\bd P$ in $O(n^2/s)$
  total time using $O(s)$ words of workspace.
\end{lemma}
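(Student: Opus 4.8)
The plan is to process all $O(n/\triangle)$ chains simultaneously while reusing the foot-point machinery of Lemma~\ref{lem:compute-extreme-small}, rather than treating each chain separately (that would cost $\Theta(n)$ per chain, hence $\Theta(n^2/\triangle)$ in total, which is too much when $s$ is large). First I would note that there are only $O(n/\triangle)=O(s)$ chains, each of size $O(\triangle)$, and each contributes at most four extreme vertices, so the \lc-, \lcc-, \rc-, and \rcc-extreme candidates of all chains together occupy only $O(n/\triangle)=O(s)$ words; I keep these in the workspace throughout. The second observation is that, once the two foot points of a vertex $v$ are known together with the edges of $\bd P$ carrying them, deciding whether $v$ lies in the set $V_\gamma$ of its chain $\gamma$ is $O(1)$ work: whether both foot points lie on $\bd P\setminus\gamma$ is read off from the edge indices (the endpoints of $\gamma$ are the partition vertices $v_{k\triangle}$ and $v_{(k+1)\triangle}$ with $k=\lfloor\mathrm{index}(v)/\triangle\rfloor$), and whether the extensions of $v$ lie locally to the left (resp.\ right) of $\gamma$ is read off from the two edges of $\gamma$ incident to $v$. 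Moreover, comparing two members of $V_\gamma$ to decide which one defines the extension met first in clockwise (resp.\ counterclockwise) order from $v_0$ is again $O(1)$, since it only compares the positions along $\bd P$ of the vertices and their foot points.

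Given these observations, the algorithm is the batched scheme behind Lemma~\ref{lem:compute-footpoints}, augmented with an update step. I would split the $n$ vertices of $P$ into $O(n/s)$ groups of $s$ consecutive vertices. For each group $\beta$, I apply Lemma~\ref{lem:compute-extreme-small} to compute the upward and downward foot points of all vertices of $\beta$ in $O(n)$ time and $O(s)$ words, recording for each foot point the edge of $\bd P$ on which it lies. Then I scan the $s$ vertices of $\beta$ in order; for each $v$ I locate its chain $\gamma$ in $O(1)$ time, test in $O(1)$ time whether $v\in V_\gamma$ and on which side of $\gamma$ its extensions lie, and update the stored \lc-, \lcc-, \rc-, \rcc-extreme candidate of $\gamma$ accordingly. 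Having finished the group, I discard its foot points and move on. When all groups are processed, every vertex of every chain has been examined exactly once, so the stored candidates are precisely the extreme vertices of all chains.

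For the complexity, the workspace holds the $O(s)$ structures of Lemma~\ref{lem:compute-extreme-small}, the $O(s)$ foot points of the current group, and the $O(n/\triangle)=O(s)$ candidates, i.e.\ $O(s)$ words in total (using $n/\triangle\le s$, which follows from $\triangle\ge n/s$). The running time is $O(n)$ per group for the foot-point computation plus $O(1)$ per vertex for the scan and update, hence $O(n)$ per group, and summing over $O(n/s)$ groups gives $O(n^2/s)$.

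I expect the main obstacle to lie entirely in the bookkeeping that justifies the $O(1)$ per-vertex cost: since we cannot afford to store all $n$ foot points at once, the classification of a vertex and the update of its chain's candidates must be interleaved with the batched computation of Lemma~\ref{lem:compute-extreme-small}, and one has to check carefully that the membership test for $V_\gamma$ and the ``first extension in clockwise / counterclockwise order'' comparison can each be carried out locally, from the edge indices of the foot points and the incident edges of $\gamma$, without ever revisiting the rest of $\bd P$. Once that is verified, the time and space bounds follow immediately.
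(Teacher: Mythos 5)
Your proposal is correct and follows essentially the same route as the paper: run the batched foot-point computation of Lemma~\ref{lem:compute-footpoints} (i.e., Lemma~\ref{lem:compute-extreme-small} on groups of $s$ consecutive vertices) and, as each vertex's foot points are reported, update in $O(1)$ time a stored per-chain candidate for each type of extreme vertex, using the candidate's stored foot points for the ``first extension from $v_0$'' comparison. The paper's proof does exactly this (stating the $O(1)$ membership test and comparison a bit more tersely than you do), so no gap remains.
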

\begin{proof}
  Let $\beta_i$ be the polygonal chain of $\bd P$ connecting two
    consecutive partition vertices $v_{i\triangle}$ and
    $v_{(i+1)\triangle}$ ($v_{i\triangle}$ and $v_0$ if
    $i=\lfloor n/\triangle\rfloor$) along $\bd P$ for
  $i=0,\ldots, \lfloor n/\triangle\rfloor$.  We show how to compute
  the \lc-extreme vertices of $\beta_i$ for all $i$.  The other
  types of extreme vertices can be computed analogously.
	
  We apply the algorithm in Lemma~\ref{lem:compute-footpoints} that
  reports the foot points of every vertex of $P$.  During the
  execution of the algorithm, for every $i$, we store one vertex for
  $\beta_i$ together with its foot points as a candidate of the
  $\lc$-extreme vertex of $\beta_i$.  These vertices are updated
  during the execution of the algorithm. At the end of the execution,
  we guarantee that the vertex stored for $\beta_i$ is the
  \lc-extreme vertex of $\beta_i$ for every $i$ from 0 to $\lfloor n/\triangle\rfloor$.
	
  Assume that the algorithm in Lemma~\ref{lem:compute-footpoints}
  reports the foot points of a vertex $v\in\beta_i$.  If the
  extensions of $v$ lie locally to the left of $\beta_i$, we update
  the vertex for $\beta_i$ as follows.  We compare $v$ and the vertex
  $v'$ stored for $\beta_i$. Specifically, we check if we encounter the extension of $v$
  before the extension of $v'$ during the traversal of $\bd P$ from
  $v_0$ in clockwise order.
  We can check this in constant time
  	using the foot points of $v'$ which are stored for $\beta_i$ together with $v'$. If so, we store $v$ for $\beta_i$ together with
  its foot points instead of $v'$.  Otherwise, we just keep $v'$ for
  $\beta_i$.
	
  In this way, for every chain $\beta_i$, we consider the foot points
  of all vertices on $\beta_i$ whose extensions lie to the left of
  $\beta_i$, and keep the extension which comes first from $v_0$ in
  clockwise order.  Thus, at the end of the algorithm, we have the
  \lc-extreme vertex of every polygonal chain $\beta_i$ by
  definition.  This takes $O(n^2/s)$ time in total, which is the time
  for computing the foot points of all vertices of $P$ by
  Lemma~\ref{lem:compute-footpoints}.
\end{proof}

\paragraph{The third step: Subdivision by a vertex on a chain
  connecting three extensions.}
After applying the first and second steps, we obtain the subdivision
induced by the extensions from the partition and extreme vertices.
Let $\subp$ be a subpolygon in this subdivision.  We will see later in
Lemma~\ref{lem:no-consecutive} that $\subp$ has the following
property: every chain connecting two consecutive extensions along
$\bd \subp$ has no extreme vertex, except for two such chains.

However,
it is still possible that $\subp$ contains 
more than a constant number of extensions on its boundary.
For instance, Figure~\ref{fig:second-third}(b) shows a spiral-like subpolygon
in the subdivision constructed after the first and second steps that has five
extensions on its boundary. The input polygon can easily be modified to
have more than a constant number of extensions on the boundary of such 
a spiral-like subpolygon. 

In the third step, we subdivide each subpolygon further so that every subpolygon has
$O(1)$ extensions on its boundary.
The boundary of $\subp$ consists of vertical extensions and polygonal
chains from $\bd P$ whose endpoints are partition vertices,
extreme vertices, or their foot points.  We treat the upward
  and downward extensions defined by one partition  or extreme vertex
(more precisely, the union of them) as one vertical extension.

For every triple $(\ell,\ell',\ell'')$ of consecutive vertical
extensions appearing along $\bd
  \subp$ 
in clockwise order, we consider the part (polygonal chain) of
$\bd \subp$ from $\ell$ to $\ell''$ in clockwise order (excluding
$\ell$ and $\ell''$).  Let $\Gamma$ be the set of all such polygonal
chains.  For every $\gamma\in\Gamma$, we find a vertex, denoted by $v(\gamma)$, of
$\bd \subp\setminus \gamma$ such that one of its foot points lies in
$\gamma$ between $\ell$ and $\ell'$, and the other foot point lies in
$\gamma$ between $\ell'$ and $\ell''$ if it exists.  If there are more
than one such vertex, we choose an arbitrary one.

The extensions of $v(\gamma)$ subdivide $\subp$ into three subpolygons
each of which contains one of $\ell, \ell'$ and $\ell''$ on its boundary.  In other
words, the extensions from $v(\gamma)$ \emph{separate} $\ell, \ell'$
and $\ell''$.  In Figure~\ref{fig:second-third}(b), the
  vertical extension $h$ for $(\ell_0,\ell_1,\ell_2)$ and the vertical
  extension $h'$ for $(\ell_2,\ell_3,\ell_4)$ together subdivide
  $\subp$ into five subpolygons.  We can compute $v(\gamma)$ and
their extensions for every $\gamma\in\Gamma$ in
$O(|\subp|^2/s + m(\subp))$ time in total, where $m(\subp)$ denotes the number of the extensions on
the boundary of $Q$.

\begin{lemma}
  \label{lem:compute-extreme-complement}
  We can find $v(\gamma)$ for every $\gamma\in\Gamma$ in
  $O(|\subp|^2/s+m(\subp))$ total time using $O(s)$ words of
  workspace.
\end{lemma}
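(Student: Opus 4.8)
The plan is to mirror the structure of the proof of Lemma~\ref{lem:compute-extreme-small} and Lemma~\ref{lem:compute-extreme}, but working entirely within the subpolygon $\subp$ rather than within $P$. First I would observe that the boundary $\bd \subp$ has complexity $O(|\subp|)$ and that it can be traversed in time linear in $|\subp|$ (using the stored endpoints of the extensions of the current subdivision, as noted at the end of Section~\ref{sec:subdivision}); in particular we can enumerate, in clockwise order, the vertical extensions $\ell_0,\ell_1,\ldots$ appearing on $\bd \subp$, and hence the set $\Gamma$ of chains, each chain being specified by a triple of consecutive extensions. The total number of such chains is $O(m(\subp))$, and the chains are edge-disjoint and together cover $\bd \subp$ minus the first and last extensions, so their sizes sum to $O(|\subp|)$. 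The additive $O(m(\subp))$ term in the claimed bound accounts for the bookkeeping of iterating over the $O(m(\subp))$ triples, and for outputting the at most $O(m(\subp))$ extensions of the chosen vertices.

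Next I would compute the foot points of all vertices of $\subp$. This is exactly the task of Lemma~\ref{lem:compute-footpoints} applied with $\subp$ in the role of $P$: we process the vertices of $\bd \subp$ in blocks of $s$, applying the procedure of Lemma~\ref{lem:compute-extreme-small} (itself built on the vertical-decomposition algorithm of \cite{chazelle_triangulation_1984}, which works for non-crossing polygonal curves) to each block against all $O(|\subp|/s)$ blocks of $\bd \subp$. This yields the two foot points of every vertex of $\subp$ in $O(|\subp|^2/s)$ deterministic time using $O(s)$ words of workspace. (Here one uses $|\subp| = O(\max\{n/s,s\}) = \Omega(s)$ so that a block of $s$ vertices fits in the workspace; if $|\subp| < s$ the problem fits entirely in workspace and is trivial in $O(|\subp|)$ time.)

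Then, as in Lemma~\ref{lem:compute-extreme}, I would fold the selection of the $v(\gamma)$'s into this foot-point computation: maintain, for each chain $\gamma \in \Gamma$ (keyed by its defining triple of extensions), a slot that will hold a witness vertex. Whenever the foot-point routine reports the two foot points of a vertex $v$ of $\subp$, we check for each chain $\gamma$ whether $v \in \bd\subp \setminus \gamma$ and whether one foot point of $v$ lies on the portion of $\gamma$ strictly between $\ell$ and $\ell'$ and the other on the portion strictly between $\ell'$ and $\ell''$; if so, and the slot for $\gamma$ is empty, we store $v$ there. To do this check in $O(1)$ amortized time per chain I would precompute, for each extension $\ell_j$ on $\bd\subp$, a pointer to its position in the clockwise order along $\bd\subp$, so that given a foot point (which lies on a known edge or extension of $\bd\subp$) one can in $O(1)$ time decide which sub-arc between two consecutive extensions it falls into. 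The main obstacle is precisely this constant-time localization of foot points relative to the extensions: one must set up the data structure so that "between $\ell$ and $\ell'$ along $\gamma$" is an $O(1)$ test, and one must be careful that a single vertex $v$ need only be matched against the $O(1)$ chains whose middle extension separates its two foot points — so the per-vertex work is $O(1)$, not $O(m(\subp))$ — giving $O(|\subp|)$ total extra work across all vertices on top of the $O(|\subp|^2/s)$ foot-point cost, plus the $O(m(\subp))$ term for iterating over empty/filled slots and emitting the answer. Summing, the procedure runs in $O(|\subp|^2/s + m(\subp))$ time with $O(s)$ words of workspace, as claimed.
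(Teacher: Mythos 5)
Your overall strategy is the same as the paper's: run the foot-point computation of Lemma~\ref{lem:compute-footpoints} on $\subp$ itself, and whenever the foot points of a vertex $v$ are reported, decide on the fly whether $v$ can serve as $v(\gamma)$ for some $\gamma\in\Gamma$, charging $O(|\subp|^2/s)$ to the foot-point computation and $O(m(\subp))$ to preprocessing/bookkeeping. However, there is a genuine gap at exactly the point you yourself flag as ``the main obstacle'': the constant-time localization of a foot point relative to the extensions. Your proposed fix --- storing, for each extension $\ell_j$ on $\bd\subp$, a pointer to its position in clockwise order --- does not solve the problem: a foot point is an arbitrary point on an arbitrary edge of $\bd\subp$, and knowing where each extension sits in the cyclic order does not tell you, in $O(1)$ time, between which two extensions that edge lies. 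A per-edge lookup table would need $\Theta(|\subp|)$ words, which can exceed the $O(s)$ workspace (after the second step $\subp$ may still be large --- that is the reason the third step exists), and a binary search over the $m(\subp)$ extension endpoints costs $O(\log m(\subp))$ per vertex, which is not covered by the claimed $O(|\subp|^2/s+m(\subp))$ bound in all regimes of $s$.

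The missing idea is precisely the paper's Lemma~\ref{lem:point-chain}: exploit the fact that the partition vertices are uniformly spaced at intervals of $\triangle$ along $\bd P$, so the chain of $\bd P$ between consecutive partition vertices that contains a given boundary point can be found in $O(1)$ time by index arithmetic; then observe that each such chain of $\bd P$ intersects at most two chains of $\Gamma$, so after an $O(m(\subp))$-time pass over $\Gamma$ one can store pointers $f_1(\cdot),f_2(\cdot)$ only for the $O(m(\subp))$ relevant chains, yielding $O(1)$-time queries within $O(s)$ words. With that lemma in hand, your argument goes through as in the paper (the check against the at most two candidate chains per vertex is then $O(1)$). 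Two further minor inaccuracies: the chains in $\Gamma$ are not edge-disjoint --- consecutive triples share the portion between two consecutive extensions, so each boundary piece is covered up to twice (the $O(|\subp|)$ total-size bound still holds); and for a reported vertex $v$ the relevant chains are those containing \emph{both} of its foot points (at most two by construction), not ``the chains whose middle extension separates its foot points,'' which is the condition you subsequently test on those at most two candidates.
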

\begin{proof}
  The algorithm is similar to the one in
  Lemma~\ref{lem:compute-extreme}.  We apply the algorithm in
  Lemma~\ref{lem:compute-footpoints} to compute the foot points of
  every vertex of $\subp$ with respect to $\subp$.  Assume that the
  algorithm in Lemma~\ref{lem:compute-footpoints} reports the foot
  points of a vertex $v$ of $\subp$. We find the polygonal chains
  $\gamma\in\Gamma$ containing both foot points of $v$ if they exist.
  There are at most two such polygonal chains by the
    construction of $\Gamma$.  We can find them in constant time
  after an $O(m(\subp))$-time preprocessing for $\subp$ by
  Lemma~\ref{lem:point-chain}.  Let $\ell,\ell'$ and $\ell''$ be the
  three extensions defining $\gamma$.  Then we check whether one foot
  point of $v$ lies on the part of $\gamma$ between $\ell$ and
  $\ell'$, and the other foot point of $v$ lies on the part of
  $\gamma$ between $\ell'$ and $\ell''$.  If so, we denote this vertex
  by $v(\gamma)$ and keep it for $\gamma$.  Otherwise, we do
  nothing. In this way, we can find $v(\gamma)$ if it exists since we
  consider every vertex whose foot points lie on $\gamma$.  This takes
  $O(|\subp|^2/s+m(\subp))$ time in total, which is the time for
  computing the foot points of all vertices of $\subp$ plus the
  preprocessing time for $\subp$.
\end{proof}

\begin{lemma}
  \label{lem:point-chain}
  For any point $p$ on $\bd \subp$, we can find the polygonal chains in
  $\Gamma$ containing $p$ in constant time, if they exist, after an
  $O(m(\subp))$-time preprocessing for $\subp$,
  where $m(\subp)$ denotes the number of the extensions on the boundary of $Q$.
\end{lemma}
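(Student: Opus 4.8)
\medskip
\noindent\textbf{Proof proposal.}
The plan is to first unwind the definition of $\Gamma$ so that the query becomes a point location problem among $O(m(\subp))$ arcs of $\bd\subp$, and then to make that query run in constant time. Write $m:=m(\subp)$ and let $\ell_0,\ell_1,\dots,\ell_{m-1}$ be the vertical extensions on $\bd\subp$ in clockwise order, with indices taken cyclically (the upward and downward parts defined by one vertex counted as a single extension, as in the construction of $\Gamma$). For each $i$ let $a_i$ be the maximal piece of $\bd\subp$ lying between $\ell_i$ and $\ell_{i+1}$; it is a connected polygonal subchain of $\bd P$. By the definition of $\Gamma$ in the third step, the chain $\gamma_i\in\Gamma$ of the triple $(\ell_i,\ell_{i+1},\ell_{i+2})$ is exactly $a_i\cup\{\ell_{i+1}\}\cup a_{i+1}$ with $\ell_i$ and $\ell_{i+2}$ removed. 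Hence, if $p\in\bd\subp$ lies in the relative interior of an arc $a_j$ it belongs to exactly the two chains $\gamma_{j-1}$ and $\gamma_j$; if $p$ lies on an extension $\ell_k$ it belongs to exactly the one chain $\gamma_{k-1}$; and at a shared endpoint of an arc and an extension it belongs to at most these same chains, filtered by an $O(1)$ test. In every case at most two chains contain $p$, and it suffices to identify in $O(1)$ time the index of the arc (or extension) that contains $p$.

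For the preprocessing I would scan the $m$ extensions on $\bd\subp$ in clockwise order once, in $O(m)$ time, recording for each $\ell_j$ its index $j$, its defining vertex, and its two endpoints on $\bd P$, each stored as the edge of $\bd P$ carrying it together with its position along that edge. An endpoint of an extension lying in the interior of an edge $e$ of $\bd P$ is the endpoint of exactly one arc, so bucketing the recorded endpoints by their host edge cuts each such edge into one more piece than the number of recorded endpoints it carries, with each piece contained in a single arc; since there are $O(m)$ endpoints, all buckets, with their pieces ordered along the host edge, are built in $O(m)$ time. For a query point $p$: if $p$ lies on an extension, its index is read off directly; otherwise $p$ lies on an edge $e$ of $\bd P$, we locate $p$ among the pieces of $e$ to obtain the arc index $j$, and we return $\gamma_{j-1}$ and $\gamma_j$, discarding the chain that does not actually contain $p$ (when $p$ is an endpoint of an arc) by an $O(1)$ check on the stored endpoints.

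The delicate point, which I expect to be the main obstacle, is making this last location step cost $O(1)$ rather than $O(\log m)$: an edge $e$ of $\bd P$ may carry many extension feet (the subpolygon $\subp$ can be spiral-like, so the arcs need not appear along $\bd P$ in the order in which $\bd\subp$ visits them), which rules out a binary search over the pieces of $e$, while a table indexed by all $n$ edges of $\bd P$ is too large for an $O(m)$-time preprocessing. I would resolve this through the context in which the lemma is used: in Lemma~\ref{lem:compute-extreme-complement} the query point $p$ is a foot point of a vertex of $\subp$ produced by the algorithm of Lemma~\ref{lem:compute-footpoints} run on $\subp$, and that algorithm delivers $p$ together with the actual segment of $\bd\subp$ on which it lies; since each segment of $\bd\subp$ is contained in a single arc $a_j$, the index $j$ is then immediate once the segments of $\bd\subp$ carry their arc labels, and those labels can be attached during the single traversal of $\bd\subp$ that the construction already performs (equivalently, whenever a full traversal of $\bd\subp$ is at hand, the arc labels can simply be stamped onto its segments). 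Combined with the reduction of the first paragraph, this gives an $O(m(\subp))$-time preprocessing and $O(1)$-time queries, as claimed; the combinatorial reduction itself is routine once the description of $\Gamma$ is unpacked.
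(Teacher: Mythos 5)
Your first paragraph (a point interior to an arc of $\bd\subp$ between two consecutive extensions lies in exactly the two chains of $\Gamma$ built on that arc, so the query reduces to locating that arc) is fine and consistent with what the paper needs. The genuine gap is exactly the step you flag yourself: getting a constant-time location within an $O(m(\subp))$-time preprocessing, and your fallback does not close it. Stamping arc labels onto the segments of $\bd\subp$ requires writable storage proportional to $|\subp|$ (the input is read-only, so the labels must live in the workspace), and $|\subp|$ can be far larger than both $m(\subp)$ and $s$ --- the whole reason the third step exists is that a subpolygon $\subp$ of the second-step subdivision may have complexity up to $\Theta(n)$ while only $O(s)$ words of workspace are available. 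So your preprocessing is neither $O(m(\subp))$ time nor $O(s)$ space; moreover, you have quietly replaced the stated lemma (any point on $\bd\subp$, after an $O(m(\subp))$-time preprocessing) by a weaker, contextual statement about query points that arrive pre-annotated with their containing segment.

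The paper's proof rests on an idea you did not find: index by the subdivision of $\bd P$ into the $O(n/\triangle)$ chains delimited by the partition vertices. Because the partition vertices are spaced uniformly, every $\triangle$ vertices, the chain $\beta$ containing a query point is found in $O(1)$ time by arithmetic on the index of the edge of $P$ carrying the point --- no per-edge table and no binary search over extension feet. By the construction of $\Gamma$, each $\gamma\in\Gamma$ is covered by at most two such consecutive chains of $\bd P$, and conversely each chain $\beta$ meets at most two chains $f_1(\beta),f_2(\beta)\in\Gamma$; so a single $O(m(\subp))$-time pass over $\Gamma$ records, for each of the $O(m(\subp))$ relevant chains $\beta$, its at most two candidates (the array is indexed by the $O(n/\triangle)=O(s)$ chains, which fits the workspace), and a query simply looks up $\beta$ and checks whether $f_1(\beta)$ or $f_2(\beta)$ contains the point, all in $O(1)$ time. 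This uniform-spacing trick is precisely the missing ingredient that replaces the per-edge/per-segment bucketing you were trying, unsuccessfully, to make constant-time.
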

\begin{proof}
  Imagine that we 
  subdivide $\bd P$ with respect to the partition vertices of $P$ into
  $O(n/\triangle)$ chains. 
  Each chain $\beta$ in the subdivision of $\bd P$
  intersects at most two chains $f_1(\beta),f_2(\beta)\in\Gamma$
  by the construction of $\Gamma$.  As a preprocessing, for each
  chain $\beta$ in the subdivision of $\bd P$ by
    the partition vertices, we store $f_1(\beta)$ and $f_2(\beta)$.
  There are $O(n/\triangle)$ chains of $\bd P$,
  but only $O(m(\subp))$ of them have
    their $f_1(\cdot)$ and $f_2(\cdot)$.
  Thus, we can find and store for
  all such chains their $f_1(\cdot)$ and $f_2(\cdot)$ in $O(m(\subp))$ time as follows.
  For each $\gamma\in\Gamma$, we find
  two chains $\beta_1$ and $\beta_2$ of $\bd P$
  containing $\gamma$ in constant time, and set $f_i(\beta_1)=\gamma$ and
  $f_i(\beta_2)=\gamma$ for $i=1,2$, accordingly.
	
  For any point $q$ on $\bd P$, we can find the subchain $\beta$ in
  the subdivision of $\bd P$ containing $q$ in constant time because
  the partition vertices are distributed uniformly 
  at intervals of $\triangle$ vertices along $\bd P$. Then we check
  whether $f_1(\beta)$ and $f_2(\beta)$ contain $p$ in constant time.
\end{proof}

The sum of $|\subp|$ over all subpolygons $\subp$ is $O(n)$ and the number
of the subpolygons from the second step is $O(n/\triangle)$ since we
construct $O(n/\triangle)$ extensions in the first and second
steps. Therefore, we can apply the third step of the subdivision 
for all subpolygons in the subdivision from the second step in
$O(n^2/s+n)=O(n^2/s)$ time using $O(s)$ words of workspace.

\subsection{Balancedness of the Subdivision}
\label{sec:balanced-subdivision-analysis}
We obtained $O(n/\triangle)$ vertical extensions in $O(n^2/s)$ time
using $O(s)$ words of workspace.  In this section, we show that these
vertical extensions subdivide $P$ into $O(n/\triangle)$ subpolygons of
complexity $O(\triangle)$.  We call this subdivision the
\emph{balanced subdivision} of $P$.
For any two points $a, b$ on $\bd P$, we use $P[a,b]$ to denote the
polygonal chain from $a$ to $b$ (including $a$ and $b$) in clockwise
order along $\bd P$. 

We use a few technical lemmas (Lemma~\ref{lem:contain-par} to
Lemma~\ref{lem:num-third}) to show that each subpolygon in the final
subdivision is incident to $O(1)$ extensions and has complexity of
$O(\triangle)$.  Then we obtain Theorem~\ref{thm:path} by setting a
parameter $\triangle$.

\begin{figure}
  \begin{center}
    \includegraphics[width=0.8\textwidth]{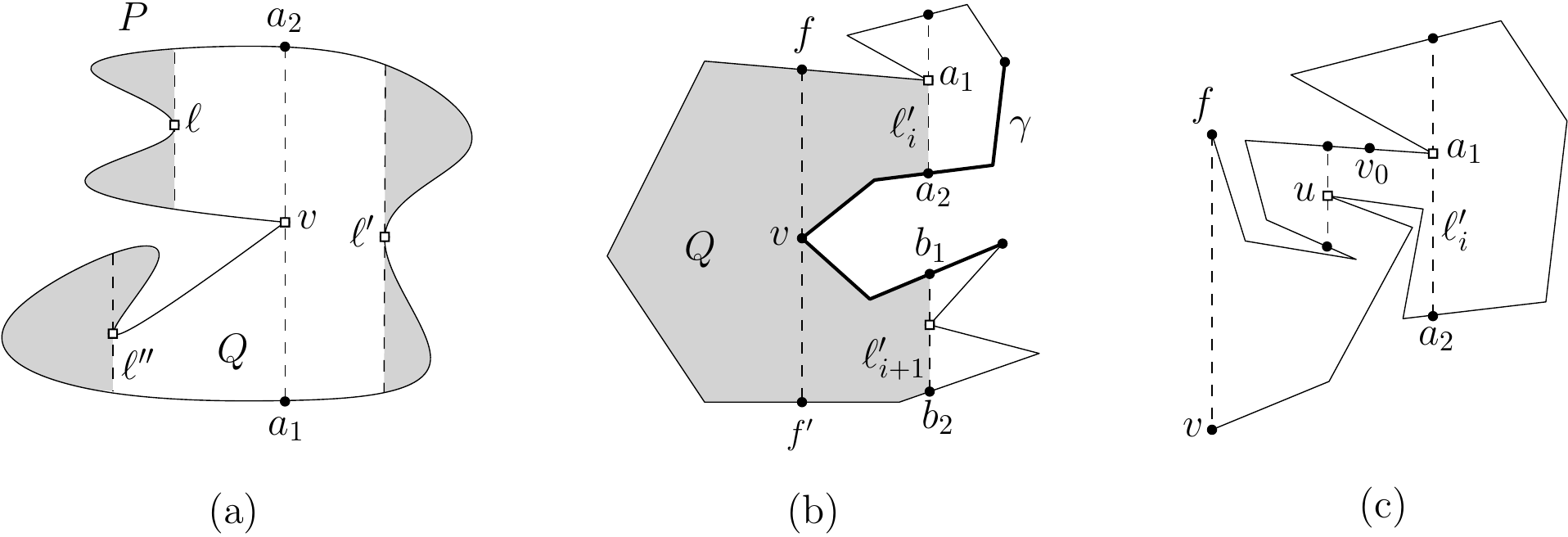}
    \caption{\label{fig:analysis-consecutive} \small (a) Each
    	gray region has a partition vertex on its boundary.
    	(b) If
      $P[a_2,b_1]$ has an \lc- or \lcc-extreme vertex, $v_0$ lies
      on $P[a_1,b_2]$, which implies that $i=0$ or $i=k'$.  (c) If
      $v_0$ lies on $P[f,a_1]$, an extension which separates $\ell_i'$
      and $\ell_{i+1}'$ is constructed in the second step.  }
  \end{center}
\end{figure}

\begin{lemma}\label{lem:contain-par}
  Let $a_1a_2$ be any extension constructed from a vertex $v$
    during any of the three steps such that $P[a_1,a_2]$ contains $v$.
    Then both $P[a_1,v]$ and $P[v,a_2]$ contain partition vertices.
\end{lemma}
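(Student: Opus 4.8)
The plan is to analyze each of the three steps separately, since an extension $a_1a_2$ is created in exactly one step, from a well-defined vertex $v$, and the claim is a statement purely about which chain of $\bd P$ between the two foot points contains partition vertices. Throughout I will use the hypothesis that $P[a_1,a_2]$ is the chain containing $v$ (as opposed to its complement), so the two subchains $P[a_1,v]$ and $P[v,a_2]$ are exactly the two pieces into which $v$ splits that chain; I must show each contains at least one partition vertex. The key structural fact I want to exploit is that partition vertices are spaced exactly $\triangle$ apart along $\bd P$, together with $n/\triangle \le s$, so that ``short'' subchains of $\bd P$ cannot avoid a partition vertex while ``long'' ones trivially contain one — but the real content is that in each step the defining vertex $v$ and the chain it is chosen from force enough of $\bd P$ to lie strictly between $v$ and each of its foot points.

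First, the first step: here $v = v_{j\triangle}$ is itself a partition vertex and $a_1,a_2$ are its two foot points. Then one of $P[a_1,v],P[v,a_2]$, say $P[v,a_2]$, goes from $v_{j\triangle}$ forward along $\bd P$ until the downward (say) extension escapes; I claim this chain must pass the next partition vertex $v_{(j+1)\triangle}$, and symmetrically $P[a_1,v]$ must contain $v_{(j-1)\triangle}$. The reason is that a foot point of $v_{j\triangle}$ lying on the open chain strictly between two consecutive partition vertices is impossible to reach without passing one of those partition vertices — more carefully, if the foot point $a_2$ lay on the open subchain $\beta_j$ between $v_{j\triangle}$ and $v_{(j+1)\triangle}$, then $v_{j\triangle}$ would be a candidate \lc- or \rc-extreme vertex of $\beta_j$ (an interior vertex of $\beta_j$ both of whose — here just one relevant — feet lie on $\bd P\setminus\beta_j$, with its extension local to the correct side), contradicting that $v_{j\triangle}$ is an endpoint of $\beta_j$, not an interior vertex. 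Hence $a_2 \notin \beta_j$, and walking clockwise from $v_{j\triangle}$ we reach $a_2$ only after passing $v_{(j+1)\triangle}$; symmetrically for $a_1$. Actually the cleanest phrasing avoids invoking extreme vertices and just observes that the foot point, being on $\bd P$ and not strictly between two consecutive partition vertices without crossing one, forces a partition vertex onto each side — I would pick whichever phrasing is cleanest and reuse it in the remaining two cases.

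Second, the second step: $v$ is an extreme vertex of some chain $\beta$ connecting two consecutive partition vertices $p,p'$, and by the definition of extreme vertex \emph{both} foot points $a_1,a_2$ of $v$ lie on $\bd P\setminus\beta$. Since $v$ is interior to $\beta$ and $p,p'$ are the endpoints of $\beta$, the chain $\bd P\setminus\beta$ is reached from $v$ only by first walking past $p$ (going one way) or past $p'$ (the other way); thus $P[a_1,v]$ contains one of $p,p'$ and $P[v,a_2]$ contains the other — both partition vertices, as required. Third, the third step: $v = v(\gamma)$ is a vertex of $\bd\subp \setminus \gamma$ whose two feet lie on $\gamma$, separated by the middle extension $\ell'$. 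I reduce this to the previous cases: the extensions of $\subp$ are themselves extensions produced in steps one and two, hence (by the already-proved cases, applied to $\ell'$ viewed as an extension $b_1b_2$) the chain of $\bd P$ between the feet of the defining vertex of $\ell'$, on each side, contains a partition vertex; since $\gamma$ runs ``around'' $\ell'$ and $v(\gamma)$ lies outside $\gamma$ while both its feet lie on $\gamma$ on opposite sides of $\ell'$, the two subchains $P[a_1,v(\gamma)]$ and $P[v(\gamma),a_2]$ each contain one of those two partition vertices. The main obstacle I anticipate is the third step: getting the orientation and ``which side'' bookkeeping exactly right — one has to be careful that $P[a_1,a_2]$ (the chain through $v(\gamma)$) really does contain both partition vertices guaranteed by the analysis of $\ell'$, rather than one of them leaking into the complementary chain — and I expect this is where a picture (Figure~\ref{fig:second-third}(b)) and a careful clockwise/counterclockwise argument are needed. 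The first two steps should be short once the right phrasing of ``a foot point cannot be strictly between consecutive partition vertices without a partition vertex on the near side'' is fixed.
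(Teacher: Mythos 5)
Your reduction in the third-step case is based on a geometric claim that is false, and this is the substantive case of the lemma. You apply the already-proved cases to the \emph{middle} extension $\ell'$ (defined by a vertex $u$ with foot points $b_1,b_2$) and assert that the two partition vertices guaranteed on $P[b_1,u]$ and $P[u,b_2]$ land one each on $P[a_1,v(\gamma)]$ and $P[v(\gamma),a_2]$. They do not: the chains $P[b_1,u]$ and $P[u,b_2]$ are the portions of $\bd P$ cut off \emph{behind} $\ell'$, i.e.\ on the side of $\ell'$ away from $\subp$, and since the extension of $v(\gamma)$ lies inside $\subp$ while $\ell'$ lies in the closure of the piece of $P$ bounded by that extension and the chain from $a_1$ to $a_2$ \emph{not} containing $v(\gamma)$, everything behind $\ell'$ lies on that complementary chain. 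Concretely, in the spiral of Figure~\ref{fig:second-third}(b), walking along $\bd P$ from $a_1$ toward the $\ell_1$-side you pass behind $\ell_1$ and re-emerge onto the other side of $\gamma$ and reach $a_2$ without ever meeting $v(\gamma)$; so \emph{both} partition vertices you invoke ``leak'' into the chain not containing $v(\gamma)$ --- exactly the bookkeeping failure you flagged as the anticipated obstacle. The correct argument (the paper's) uses the two \emph{outer} extensions $\ell$ and $\ell''$: the subpolygon $Q$ bounded by $\ell,\ell',\ell''$ has the property that the components of $P\setminus Q$ incident to $\ell''$ and to $\ell$ have their $\bd P$-portions contained in $P[a_1,v]$ and in $P[v,a_2]$ respectively, and each such portion contains a partition vertex because $\ell$ and $\ell''$ were constructed in the first or second step (for a first-step extension the defining partition vertex itself lies on the cut-off chain; for a second-step extension your case-2 argument applies). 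Your proposal has no argument of this kind, so the third-step case is a genuine gap.

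Your first-step argument is also incorrect, although the case itself is trivial. You claim a foot point of the partition vertex $v_{j\triangle}$ cannot lie on the open chain $\beta_j$ between $v_{j\triangle}$ and $v_{(j+1)\triangle}$; nothing forbids this --- a vertical extension from a partition vertex can perfectly well terminate a few edges away on the same chain, and the ``extreme vertex'' contradiction you sketch does not exist (extreme vertices require \emph{both} foot points off the chain, and in any case no contradiction arises). The case needs none of this: since $P[a,b]$ includes its endpoints, $v$ is itself a partition vertex lying on both $P[a_1,v]$ and $P[v,a_2]$, which is the paper's one-line argument. This matters beyond aesthetics because you propose to ``reuse'' the false phrasing in the other cases; only your second-step argument (both feet on $\bd P\setminus\beta$, hence an endpoint of $\beta$ is passed on each side) is correct, and it matches the paper.
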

\begin{proof}
  If $a_1a_2$ is constructed in the first step, $v$ is a
  partition vertex and lies on $P[a_1,v]$ and $P[v,a_2]$, and we are done.
  If $a_1a_2$ is constructed in the second step, $v$ is an extreme vertex of a
  polygonal chain which connects two consecutive
  partition vertices. One of the two partition vertices lies on
  $P[a_1,v]$ and the other lies on $P[v,a_2]$, thus the claim holds.
        
  Now, consider the case that $a_1a_2$ is constructed in the third
  step.  In this case, $a_1a_2$ separates three consecutive extensions
  $\ell,\ell'$ and $\ell''$ which are constructed in the first or second step
  of the subdivision. See Figure~\ref{fig:analysis-consecutive}(a). 
  
  Let $Q$ be the subpolygon of $P$ bounded
  by the three extensions.  
  Then every connected component of $P\setminus Q$ contains a partition vertex on
  its boundary contained in $\bd P$ because each component is incident to an extension
  constructed in the first or second step.
  In Figure~\ref{fig:analysis-consecutive}(a),
  the component of $P\setminus Q$ incident to $\ell''$ has a partition vertex on its boundary contained 
  in $P[a_1,v]$.
  Similarly, the component of $P\setminus Q$ incident to $\ell$ has a partition vertex on
  its boundary contained in $P[v,a_2]$.
  Thus, both $P[a_1,v]$ and $P[v,a_2]$ contain partition vertices.
\end{proof}

Let $S$ be a subpolygon in the final subdivision and $\subp$ be the
subpolygon in the subdivision from the second step containing $S$. We
again treat the two (upward and downward) vertical extensions defined
by one vertex as one vertical extension.  We label the extensions
lying on $\bd S$ as follows.  Let $\ell_0$ be the first extension on
$\bd S$ we encounter while we traverse $\bd P$ from $v_0$ in clockwise
order.  We let $\ell_1,\ell_2,\ldots, \ell_k$ be the extensions
appearing on $\bd S$ in clockwise order along $\bd S$ from $\ell_0$.
Similarly, we label the extensions lying on $\bd \subp$ from $\ell_0'$
to $\ell_{k'}'$ in clockwise order along $\bd \subp$ such that
$\ell_0'$ is the first one we encounter while we traverse $\bd P$ from
$v_0$ in clockwise order.  Then we have the following lemmas.

\begin{lemma}
  \label{lem:no-consecutive}
  For any $1\leq i<k'$, let $a_1a_2=\ell_i'$ and $b_1b_2=\ell_{i+1}'$
  such that $a_1, a_2, b_1$ and $b_2$ appear on $\bd P$ (and
  on $\bd \subp$) in clockwise order.  Then $P[a_2,b_1]$ has no
  extreme vertex.
\end{lemma}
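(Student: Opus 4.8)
The plan is a proof by contradiction. Suppose $P[a_2,b_1]$ contains an extreme vertex, say an extreme vertex $u$ of the chain $\beta_j$ of $\bd P$ connecting two consecutive partition vertices. Since the four kinds of extreme vertices are defined in a symmetric way, it suffices to treat the case in which $u$ is the \lc- or \lcc-extreme vertex of $\beta_j$, the other two kinds being handled by analogous arguments. Then both foot points of $u$ lie on $\bd P\setminus\beta_j$, the two extensions of $u$ lie locally to the left of $\beta_j$, and in the second step we constructed the vertical extension $\ell_u$ of $u$, with foot points $f_1$ above and $f_2$ below $u$.

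The argument then splits into two halves. In the first half I would show that $v_0$ must lie on the arc $P[a_1,b_2]$. Assume not, so $v_0$ lies on the complementary arc $P[b_2,a_1]$. I would first locate the foot points of $\ell_u$: because $\ell_i'$ and $\ell_{i+1}'$ are consecutive along $\bd\subp$, the segment $\ell_u$ cannot meet the interior of $\subp$ along the arc $P[a_2,b_1]$ (it would otherwise yield an extension of $\subp$ strictly between $\ell_i'$ and $\ell_{i+1}'$), so $\ell_u$ stays outside $\subp$ near $u$ and, tracing it, one foot point $f$ of $u$ lies on the complementary arc on the side of $\ell_i'$; hence $P[f,a_1]$ is a sub-arc of $P[b_2,a_1]$. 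Now I would invoke the selection rule for extreme vertices: among the vertices of $\beta_j$ whose extensions lie to the left of $\beta_j$ and whose feet avoid $\beta_j$, the clockwise-first one from $v_0$ is the \lc-extreme vertex and the clockwise-last one is the \lcc-extreme vertex. When $v_0\in P[f,a_1]$, this rule forces the selected vertex of $\beta_j$ to be one whose extension enters $\subp$ and separates $\ell_i'$ from $\ell_{i+1}'$ (Figure~\ref{fig:analysis-consecutive}(c)); that extension is drawn in the second step, so $\ell_i'$ and $\ell_{i+1}'$ are not consecutive along $\bd\subp$, a contradiction. The placements of $v_0$ on $P[b_2,a_1]$ not covered by $P[f,a_1]$ are handled symmetrically with the other foot point of $u$ and the \lcc-extreme vertex. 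Hence $v_0\in P[a_1,b_2]$.

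In the second half I would finish by a short case analysis on which of the three sub-arcs $P[a_1,a_2]$, $P[a_2,b_1]$, $P[b_1,b_2]$ of $P[a_1,b_2]$ contains $v_0$; by general position the shared endpoints may be ignored. In each case, tracing $\bd P$ clockwise from $v_0$, the first point we meet that is an endpoint of an extension bounding $\subp$ belongs to $\ell_i'$ or to $\ell_{i+1}'$ (here one uses that $P[a_2,b_1]$ carries no extension of $\subp$ in its interior and that the open arcs $(a_1,a_2)$ and $(b_1,b_2)$ are not part of $\bd\subp$). By the definition of $\ell_0'$ this forces $\ell_0'\in\{\ell_i',\ell_{i+1}'\}$, i.e. $i=0$ or $i=k'$ (Figure~\ref{fig:analysis-consecutive}(b)), contradicting the hypothesis $1\le i<k'$. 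Running the same argument for the \rc- and \rcc-extreme cases completes the proof.

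The main obstacle is the geometric claim at the heart of the first half: that a mislocated $v_0$ (on $P[f,a_1]$) forces the clockwise-first, resp. clockwise-last, left-candidate of $\beta_j$ to be a vertex whose extension genuinely cuts through $\subp$ between $\ell_i'$ and $\ell_{i+1}'$, rather than escaping into another cell or merely touching $\bd\subp$. Establishing this requires combining several cyclic-order facts carefully: that $u$'s extension lies on the left of $\beta_j$ while $u$ itself lies on the sub-boundary $P[a_2,b_1]$ of $\subp$; the precise sub-arcs of $\bd P$ into which the foot points of $u$ (and of the selected extreme vertex) fall relative to $a_1,a_2,b_1,b_2$; and Lemma~\ref{lem:contain-par}, which places partition vertices on both sides of the vertices generating $\ell_i'$ and $\ell_{i+1}'$ and thereby constrains how $\beta_j$ can straddle the arc $P[a_2,b_1]$. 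Managing this bookkeeping is where essentially all of the work lies.
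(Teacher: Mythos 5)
Your proof starts from the wrong premise, and this is not a presentational issue but the crux of the lemma. You assume that the extreme vertex lying on $P[a_2,b_1]$ is an extreme vertex $u$ of a partition chain $\beta_j$ (a chain between two consecutive partition vertices), so that its vertical extension was actually constructed in the second step. But the statement ``$P[a_2,b_1]$ has no extreme vertex'' refers to the chain $P[a_2,b_1]$ itself: no vertex of this chain may have both of its foot points on $\bd P\setminus P[a_2,b_1]$. Such a vertex is in general \emph{not} an extreme vertex of any $\beta_j$ and carries no constructed extension; the paper's proof even notes explicitly that its witness $v$ ``is not necessarily an extreme vertex of $\gamma$.'' This intrinsic reading is also the one needed downstream: Lemma~\ref{lem:num-third} takes a vertex of $\eta_i$ whose foot points leave $\eta_i$ and concludes via Lemma~\ref{lem:no-consecutive} that this cannot happen, and Lemma~\ref{lem:one-of} uses the same statement for $\gamma_1$ and $\gamma_2$. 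Under your reading the lemma is nearly vacuous: a step-2 extension whose defining vertex lies in the interior of $P[a_2,b_1]\subseteq\bd\subp$ would locally enter the interior of $\subp$ (equivalently, would have to appear as an extension of the subdivision strictly between $\ell_i'$ and $\ell_{i+1}'$), immediately contradicting that these two extensions are consecutive on $\bd\subp$ --- which is why your lengthy case analysis is doing little real work, and indeed your intermediate claim that $\ell_u$ ``stays outside $\subp$ near $u$'' cannot occur at all under your own premise.

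What is missing is exactly the reduction that constitutes the paper's proof: starting from an arbitrary vertex $v$ of $P[a_2,b_1]$ with both foot points off the chain (and with no constructed extension), pass to the maximal partition-vertex-free chain $\gamma\supseteq P[a_2,b_1]$, which is one of the step-2 chains and satisfies $\gamma\subseteq P[a_1,b_2]$ by Lemma~\ref{lem:contain-par}; observe that $v\in V_\gamma$, so $\gamma$ does possess an \lc- or \lcc-extreme vertex $u$ --- generally different from $v$ --- and then show, by a case analysis on where $v_0$ lies relative to the foot points of $v$, that the step-2 extension of this $u$ must separate $\ell_i'$ from $\ell_{i+1}'$, contradicting their consecutiveness when $1\le i<k'$. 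Your sketch gestures at the right pictures (your ``$i=0$ or $i=k'$'' half matches Figure~\ref{fig:analysis-consecutive}(b), your selection-rule half matches case (c)), but because you have assumed from the outset that the offending vertex already carries a constructed extension, the essential step --- deducing the existence of a \emph{constructed} separating extension from a vertex that has none --- is never addressed.
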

\begin{proof}
  Assume to the contrary that for some $1\leq i<k'$, $P[a_2,b_1]$ has
  an extreme vertex. 
  For an illustration, see Figure~\ref{fig:analysis-consecutive}(b). 
  By definition, no partition vertex lies on
  $P[a_2,b_1]\setminus\{a_2,b_1\}$.  Consider the maximal polygonal
  chain $\gamma\subset \bd P$ containing no partition vertex in its
  interior and containing $P[a_2,b_1]$. Note that
  $\gamma\subseteq P[a_1,b_2]$ since both $P[a_1,a_2]$ and
  $P[b_1,b_2]$ contain partition vertices by
  Lemma~\ref{lem:contain-par}.

  Let $v$ be an extreme vertex of $P[a_2,b_1]$. (Recall that $v$
  exists by the assumption made in the beginning of the proof.)
  Without loss of generality, we assume that $P[a_2,b_1]$ lies locally
  to the right of the extension of $v$.  The foot points of $v$ lie on
  $\bd P\setminus \gamma$ while $v$ lies on $\gamma$.  Therefore,
  $\gamma$ has an extreme vertex. (But $v$ is not necessarily an
  extreme vertex of $\gamma$ by definition.)  The
  extension of $v$ subdivides $P$ into three subpolygons.  Let $f$ be
  the foot point of $v$ incident to the subpolygon containing
  $\ell_i'$ on its boundary and $f'$ be the other foot point of $v$,
  as shown in Figure~\ref{fig:analysis-consecutive}(b).

  Since $1\leq i<k'$, $v_0$ lies on $P[f',f]$, $P[f,a_1]$ or
  $P[b_2,f']$. (Recall
  that the vertices of $P$ are labeled from $v_0$ to $v_{n-1}$ in
  clockwise order.)  We show that for any case, there is an
  extreme vertex on $\gamma$ whose extension separates $\ell_i'$ and
  $\ell_{i+1}'$. Note that these extensions are constructed in the
  second step, which contradicts the assumption that $\subp$ contains
  both $\ell_i'$ and $\ell_{i+1}'$ on its boundary.
  \begin{itemize}
  \item \text{Case 1.} $v_0$ is in $P[f',f]$: Then $v$ is the \lc-
    and \lcc-extreme vertex of $\gamma$ by definition.  The
    extension of $v$ separates $\ell_i'$ and $\ell_{i+1}'$, which is a
    contradiction.
  \item \text{Case 2.} $v_0$ is in $P[f,a_1]$: By definition, the
    foot points of the \lcc-extreme vertex $u$ of $\gamma$ lie on
    $P[f,v_0]$. See Figure~\ref{fig:analysis-consecutive}(c).
    Moreover, $u$ lies on $P[a_2,v]$.  Thus, the extension of $u$
    separates $\ell_i'$ and $\ell_{i+1}'$, which is a contradiction.
  \item \text{Case 3.} $v_0$ is in $P[b_2,f']$: A contradiction can be shown
    in a way similar to Case 2. The only difference is that we
    consider the \lc-extreme vertex instead of the
    \lcc-extreme vertex.
  \end{itemize}
  Therefore, $P[a_2,b_1]$ has no extreme vertex.
\end{proof}

\begin{figure}
  \begin{center}
    \includegraphics[width=0.9\textwidth]{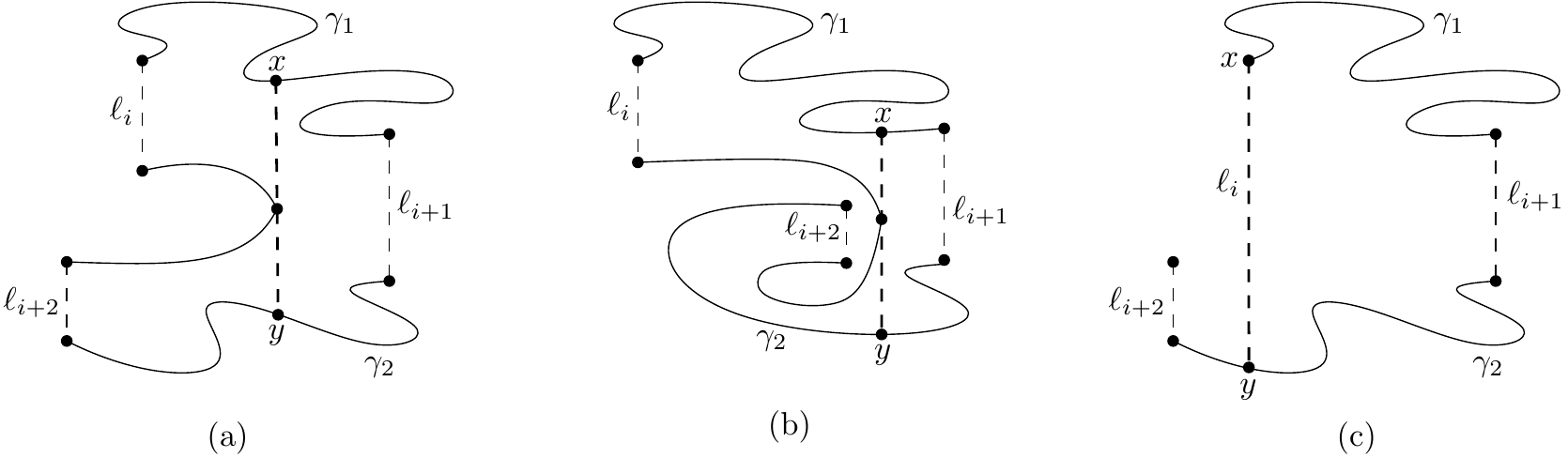}
    \caption{\label{fig:exist} \small (a-b) The segment $xy$
      intersects a point in $\bd \subp\setminus \gamma$ (and therefore in $\bd S\setminus \gamma$) 	
      in its interior.  (c) If
      $v(\gamma)$ does not exist, $xy$ coincides with $\ell_i$ or
      $\ell_{i+2}$. This is a contradiction. }
  \end{center}
\end{figure}
We need a few more technical lemmas, which are given in
the following, to conclude that
the subdivision proposed in the previous section is balanced.

\begin{lemma}
  \label{lem:one-of}
  For any $1\leq i<k-1$, one of $\ell_i,\ell_{i+1}$ and $\ell_{i+2}$
  is constructed in the third step.
\end{lemma}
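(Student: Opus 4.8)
The plan is to argue by contradiction: suppose that for some $1\leq i<k-1$, none of $\ell_i,\ell_{i+1},\ell_{i+2}$ is constructed in the third step, so all three are extensions from partition or extreme vertices and hence already appear on $\bd \subp$. Since $\ell_i,\ell_{i+1},\ell_{i+2}$ are consecutive along $\bd S$, and $\subp$ is obtained from $S$ only by \emph{removing} pieces cut off by third-step extensions, these three extensions are also consecutive along $\bd \subp$ (any extension of $\bd\subp$ strictly between two of them would survive into $\bd S$, contradicting consecutiveness there). So we may write $\ell_i=\ell_j'$, $\ell_{i+1}=\ell_{j+1}'$, $\ell_{i+2}=\ell_{j+2}'$ for consecutive indices $j,j+1,j+2$ with $1\leq j<k'-1$ — here I need to handle the boundary indices $j=0$ or $j+2=k'$ separately, which is where the hypothesis $1\le i<k-1$ (guaranteeing $\ell_{i}$ and $\ell_{i+2}$ are not the ``wrap-around'' extensions at $v_0$) will be used, together with Lemma~\ref{lem:no-consecutive}.

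Next I would invoke the third step directly. Consider the triple $(\ell,\ell',\ell'')=(\ell_i,\ell_{i+1},\ell_{i+2})$ of consecutive extensions along $\bd\subp$; let $\gamma\in\Gamma$ be the associated polygonal chain of $\bd\subp$ from $\ell_i$ to $\ell_{i+2}$ (excluding the endpoints) passing through $\ell_{i+1}$. By the third step, if $v(\gamma)$ exists, its extension separates $\ell_i,\ell_{i+1},\ell_{i+2}$, so $\ell_i,\ell_{i+1},\ell_{i+2}$ cannot all lie on the boundary of a single subpolygon of the final subdivision — contradicting that they all lie on $\bd S$. Hence it suffices to show $v(\gamma)$ exists, i.e. there is a vertex of $\bd\subp\setminus\gamma$ one of whose foot points lies on $\gamma$ between $\ell_i$ and $\ell_{i+1}$ and the other between $\ell_{i+1}$ and $\ell_{i+2}$.

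To produce such a vertex I would use the picture in Figure~\ref{fig:exist}: take the extension $\ell_{i+1}=b_1b_2$ and look at the endpoints $x,y$ of the two ``missing'' sides of $S$ adjacent to $\ell_{i+1}$ — more precisely, let $x$ be the point where the chain of $\bd S$ leaving $\ell_{i+1}$ toward $\ell_i$ first returns to meet an extension, and symmetrically $y$ toward $\ell_{i+2}$; the segment $xy$ then crosses $\ell_{i+1}$ and, because $\ell_{i+1}$ itself is a \emph{maximal} vertical extension inside $\subp$ (it runs until it escapes $\subp$), the segment $xy$ must meet $\bd\subp\setminus\gamma$ in its interior. Sweeping $xy$ and taking the vertex of $\bd\subp$ realizing the relevant crossing gives a vertex $v$ with one foot point on the $\ell_i$–$\ell_{i+1}$ part of $\gamma$ and the other on the $\ell_{i+1}$–$\ell_{i+2}$ part; were no such vertex to exist, $xy$ would coincide with $\ell_i$ or $\ell_{i+2}$ (Figure~\ref{fig:exist}(c)), contradicting that $\ell_i,\ell_{i+1},\ell_{i+2}$ are three distinct consecutive extensions. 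Thus $v(\gamma)$ exists, giving the desired contradiction.

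The main obstacle I anticipate is the geometric existence argument in the last paragraph — making precise why the segment $xy$ (or a suitable vertical visibility argument between the two pieces of $\gamma$ flanking $\ell_{i+1}$) must be blocked by a vertex whose two foot points land on the correct two sub-arcs of $\gamma$, rather than, say, both on the same side of $\ell_{i+1}$ or on the extensions themselves. Getting the case analysis of where $v_0$ and the foot points sit right, and confirming that the blocking vertex is genuinely in $\bd\subp\setminus\gamma$ (so that it is eligible to be chosen as $v(\gamma)$ in the third step), is the delicate part; the index bookkeeping reducing $\bd S$ to $\bd\subp$ and invoking Lemma~\ref{lem:no-consecutive} at the endpoints is routine by comparison.
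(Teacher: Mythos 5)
Your overall strategy matches the paper's: assume all three of $\ell_i,\ell_{i+1},\ell_{i+2}$ predate the third step, observe they are then consecutive along $\bd \subp$ as well, show that the chain $\gamma\in\Gamma$ they define admits a vertex $v(\gamma)$, and conclude that the third step would have separated them --- a contradiction. (One small slip: you have the containment backwards; $S$ is obtained from $\subp$ by the third-step cuts, not the other way around, though your parenthetical argument for consecutiveness along $\bd\subp$ is still the right one.) The problem is that the crux --- the existence of $v(\gamma)$ --- is exactly the step you leave open, and the construction you sketch for it does not work. Your points $x$ and $y$ are taken where the boundary chains leaving $\ell_{i+1}$ ``first return to meet an extension,'' i.e.\ essentially endpoints of $\ell_i$ and $\ell_{i+2}$; the segment $xy$ joining them is not vertical in general, need not lie inside $P$ (let alone inside $\subp$), and a crossing of $\bd\subp\setminus\gamma$ with such a segment says nothing about where the \emph{foot points} of the crossing vertex land. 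Recall $v(\gamma)$ must be a vertex of $\bd\subp\setminus\gamma$ whose two \emph{vertical} foot points lie on the two sub-arcs of $\gamma$ flanking $\ell_{i+1}$, so any witness segment has to be a vertical visibility segment between those two sub-arcs; ``sweeping $xy$'' is not a substitute for producing one.

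The paper's existence argument supplies precisely this missing object. Let $\gamma_1$ and $\gamma_2$ be the parts of $\gamma$ between $\ell_i,\ell_{i+1}$ and between $\ell_{i+1},\ell_{i+2}$ (after noting, via Lemma~\ref{lem:no-consecutive}, that neither has an extreme vertex). Take $x\in\gamma_1$ to be the point closest to $\ell_i$ along $\gamma_1$ among the points of $\gamma_1$ having a foot point on $\gamma_2$, and let $y\in\gamma_2$ be that foot point. Now $xy$ is a vertical segment contained in $P$ joining the two sub-arcs, so by the general-position assumption any point of $\bd\subp\setminus\gamma$ in its interior is a (reflex) vertex whose upward and downward extensions escape exactly at $x$ and $y$; such a vertex is $v(\gamma)$ by definition. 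If no such point exists, $xy$ must coincide with $\ell_i$ or $\ell_{i+2}$, which would mean one of these extensions separates the other two, contradicting that all three appear on $\bd S$ (and $\bd\subp$). Without this choice of $x$ and $y$ --- or some equivalent vertical-visibility argument between $\gamma_1$ and $\gamma_2$ --- your proof has a genuine gap at its central step, as you yourself anticipated.
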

\begin{proof}
  Assume to the contrary that all of $\ell_i,\ell_{i+1}$ and
  $\ell_{i+2}$ are constructed prior to the third step for some
  index $1\leq i<k-1$.  Then the three extensions are consecutive along
    $\bd \subp$ as well since there is no vertical extensions added to the
  part of $\bd Q$ from $\ell_i$ to $\ell_{i+2}$ in clockwise order in the third step.
Let $\gamma_1$ be the part of
$\gamma$ lying between $\ell_i$ and $\ell_{i+1}$ excluding the two extensions, and let $\gamma_2$
be the part of $\gamma$ lying between $\ell_{i+1}$ and $\ell_{i+2}$ excluding the two extensions.
  By Lemma~\ref{lem:no-consecutive}, $\gamma_1$ and $\gamma_2$ have no
  extreme vertex. 
  Thus, $\gamma_1\cup\ell_{i+1}\cup\gamma_2$ has no
  extreme vertex.
	
  We claim that $v(\gamma)$ exists. 
  Consider the point $x\in\gamma_1$ closest to an endpoint of $\ell_i$
  along $\gamma_1$ among the points in $\gamma_1$ one of whose foot
    points  is on $\gamma_2$.
   Let $y$ be the foot point of $x$ lying on $\gamma_2$. See Figure~\ref{fig:exist}(a-b). If $xy$ intersects
  some point in $\bd \subp\setminus \gamma$ (and therefore in $\bd S\setminus \gamma$) in its interior, such a point is
  $v(\gamma)$.  Otherwise, $xy$ coincides with $\ell_i$ or
  $\ell_{i+2}$. See
  Figure~\ref{fig:exist}(c).  This means that $\ell_i$ separates
  $\ell_{i+1}$ and $\ell_{i+2}$, or $\ell_{i+2}$ separates $\ell_i$
  and $\ell_{i+1}$.  This contradicts that $\ell_i,\ell_{i+1}$ and
  $\ell_{i+2}$ appear on $\bd S$ (and on $\bd \subp$).  Thus, the claim
  holds.
	
  In the third step, we construct the extensions of $v(\gamma)$, which
  separate $\ell_i,\ell_{i+1}$ and $\ell_{i+2}$.  This is a
  contradiction.
\end{proof}

\begin{lemma}
  \label{lem:num-third}
  $S$ has $O(1)$ extensions constructed in the third step on its boundary.
\end{lemma}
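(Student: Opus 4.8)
The plan is to bound the number of third-step extensions on $\bd S$ by combining Lemma~\ref{lem:one-of} with a counting argument on the extensions constructed prior to the third step. First I would recall the labeling $\ell_0,\ell_1,\ldots,\ell_k$ of the extensions on $\bd S$ in clockwise order. By Lemma~\ref{lem:one-of}, among any three consecutive extensions $\ell_i,\ell_{i+1},\ell_{i+2}$ (for $1\le i<k-1$) at least one is constructed in the third step; equivalently, one cannot have three consecutive extensions on $\bd S$ all coming from the first or second step. Hence if $t$ is the number of extensions on $\bd S$ that were constructed prior to the third step, the third-step extensions must ``separate'' these so that no run of three pre-third-step extensions occurs, which forces the total number of extensions $k+1$ on $\bd S$ to be $O(t)$, and in particular the number of third-step extensions to be $O(t)$ as well (up to the boundary indices $\ell_0,\ell_1$ and $\ell_{k-1},\ell_k$, which contribute only an additive constant).

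So the task reduces to showing $t=O(1)$, i.e. that only a constant number of pre-third-step extensions lie on $\bd S$. Here I would invoke Lemma~\ref{lem:no-consecutive}: the extensions $\ell_0',\ldots,\ell_{k'}'$ of the containing second-step subpolygon $\subp$ have the property that for every $1\le i<k'$ the chain of $\bd\subp$ between $\ell_i'$ and $\ell_{i+1}'$ has no extreme vertex. Since $\bd S\subseteq\bd\subp$ except along third-step extensions, the pre-third-step extensions on $\bd S$ are exactly some subset of $\{\ell_0',\ldots,\ell_{k'}'\}$, so it suffices to bound $k'$. The key point is that a second-step subpolygon $\subp$ can have at most a constant number of extensions on its boundary: between two consecutive extensions of $\subp$ there is no extreme vertex, and each partition vertex contributes an extension — more precisely, I would argue that a chain of $\bd P$ between two consecutive extensions of $\subp$ that contains at least one partition vertex in its interior would itself, together with the chain structure, force an extreme vertex or another first/second-step extension inside, contradicting the no-extreme-vertex property and the maximality of the construction. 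Counting partition vertices around $\bd\subp$ then pins $k'$ to $O(1)$.

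The main obstacle I anticipate is the last step: rigorously showing that a second-step subpolygon has only $O(1)$ boundary extensions. The subdivision after step two is built from $O(n/\triangle)$ extensions total, so on average a subpolygon has $O(1)$ extensions, but a single spiral-like subpolygon (Figure~\ref{fig:second-third}(b)) could in principle collect many — and the whole point of step three is to break those up. Thus the $O(1)$ bound for $\subp$ must come from the structural Lemma~\ref{lem:no-consecutive} together with Lemma~\ref{lem:contain-par}: each extension $\ell_i'$ of $\subp$ sits between two partition vertices by Lemma~\ref{lem:contain-par}, consecutive gaps $P[a_2,b_1]$ carry no extreme vertex by Lemma~\ref{lem:no-consecutive}, and only the first and last such gaps (relative to $v_0$) are exceptional. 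A careful case analysis — mirroring the one in the proof of Lemma~\ref{lem:no-consecutive}, tracking the position of $v_0$ — should show that if $k'$ were large, some gap between consecutive $\ell_i'$ would be forced to contain an extreme vertex, a contradiction. Once $k'=O(1)$ is established, $t=O(1)$ follows immediately, and then the Lemma~\ref{lem:one-of} argument gives the $O(1)$ bound on third-step extensions, completing the proof.
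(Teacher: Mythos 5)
Your plan has two genuine gaps, and the second one is fatal. First, the counting step is reversed: Lemma~\ref{lem:one-of} says that among any three consecutive extensions on $\bd S$ at least one is from the third step, which bounds the number of \emph{first/second-step} extensions in terms of the number of third-step ones (runs of pre-third-step extensions have length at most two, so $t=O(m)$ where $m$ is the third-step count), not the other way around. Nothing in that lemma rules out a long consecutive run of third-step extensions on $\bd S$, so your inference ``the number of third-step extensions is $O(t)$'' is unsupported; Lemma~\ref{lem:one-of} is what the paper combines \emph{with} Lemma~\ref{lem:num-third} afterwards (Corollary~\ref{lem:incident-vertical-line}), it cannot substitute for it. Second, your route to $t=O(1)$ goes through proving $k'=O(1)$, i.e.\ that the second-step subpolygon $\subp$ has only constantly many extensions on its boundary. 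That claim is false: the paper states explicitly, and Figure~\ref{fig:second-third}(b) illustrates, that a spiral-like $\subp$ can have more than any constant number of extensions on its boundary --- this is precisely why the third step exists. So the ``careful case analysis'' you hope will pin $k'$ to $O(1)$ cannot exist, and Lemmas~\ref{lem:contain-par} and~\ref{lem:no-consecutive} do not yield it.

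The paper's proof bounds the third-step extensions on $\bd S$ directly, without ever bounding $k'$. Writing $\eta_0,\ldots,\eta_{k'}$ for the chains of $\bd\subp$ between consecutive extensions $\ell_i'$, it first shows that the vertex $v$ defining any third-step extension incident to $S$ must lie on $\eta_0$ or $\eta_{k'}$: if $v$ lay on $\eta_i$ with $1\leq i<k'$, its foot points would lie outside $\eta_i$, making $\eta_i$ contain an extreme vertex and contradicting Lemma~\ref{lem:no-consecutive}. It then argues that each of $\eta_0$ and $\eta_{k'}$ can contain at most two vertices whose foot points both lie off that chain and whose extensions are incident to $S$ (two such extensions already separate the chain from the rest of $\bd S$, so no third vertex qualifies), giving at most four third-step extensions on $\bd S$. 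If you want to salvage your write-up, this localization to the two exceptional chains is the missing idea.
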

\begin{proof}
  Consider an extension $\ell$ incident to $S$ constructed in the
  third step.  Let $v$ be the vertex defining the extension $\ell$.
  Recall that the boundary of $\subp$ consists of the extensions
  $\ell_0',\ldots,\ell_{k'}'$ and the polygonal chains of $\bd P$ 
  connecting the pairs of the extensions in consecutive order. Let $\eta_i$ be the polygonal chain of
  $\bd P$ connecting $\ell_i'$ and $\ell_{i+1}'$, excluding the
  extensions, for $0\leq i<k'$, and $\eta_{k'}$ be the polygonal chain
  connecting $\ell_{k'}'$ and $\ell_0'$, excluding the extensions.

  We claim that $v$ is contained in $\eta_0$ or $\eta_{k'}$. Assume to
  the contrary that $v$ is contained in $\eta_i$ for $1\leq i<k'$. Then the
  foot points of $v$ lie outside of $\eta_i$ by the third step of the
  subdivision. Thus, $\eta_i$ has an extreme vertex, which
  contradicts Lemma~\ref{lem:no-consecutive}.
	
  We also claim that there exist at most two vertices in $\eta_0$
  that has both foot points in $\bd \subp\setminus\eta_0$
  and an extension incident to $S$. To see this, let
  $u_1, u_2\in\eta_0$ be such vertices if they exist.  Let $h_1$ and
  $h_2$ be the extensions from $u_1$ and $u_2$, respectively, incident
  to $S$. Since no foot point of $u_1$ and $u_2$ is in $\eta_0$,
   one of the two polygonal chains connecting $h_1$ and $h_2$ along
  $\bd S$ (but not containing them in its interior) is contained in
  $\eta_0$ and the other is disjoint with $\eta_0$. 
  Therefore, no other vertex in
  $\eta_0$ that has both foot points in $\bd S\setminus\eta_0$ and has an extension
  incident to $S$. This proves the claim. The same holds for
  $\eta_{k'}$.
	
  Therefore, there are at most four extensions on $\bd S$ constructed
  in the third step: two of them are extensions of vertices of
  $\eta_0$ and the other two are extensions of vertices of
  $\eta_{k'}$. Thus the lemma holds.
\end{proof}

Due to Lemma~\ref{lem:one-of} and Lemma~\ref{lem:num-third}, the
following corollary holds.
\begin{corollary}
  \label{lem:incident-vertical-line}
  Every subpolygon in the final subdivision has $O(1)$
  extensions on its boundary.
\end{corollary}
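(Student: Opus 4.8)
The plan is to bound separately the number of extensions on $\bd S$ coming from each of the three steps, and then add the three bounds. By construction, every extension on $\bd S$ was created in the first, second, or third step, so it suffices to show that each of these contributes only $O(1)$ extensions to $\bd S$. The third step is already handled: Lemma~\ref{lem:num-third} gives that $S$ has $O(1)$ extensions constructed in the third step on its boundary. So the real content is to bound the extensions inherited from the first and second steps, i.e.\ the extensions $\ell_0',\dots,\ell_{k'}'$ of the enclosing second-step subpolygon $\subp$ that survive on $\bd S$, and for this I would lean on Lemma~\ref{lem:one-of} together with Lemma~\ref{lem:no-consecutive}.

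First I would use Lemma~\ref{lem:one-of}: among any three consecutive extensions $\ell_i,\ell_{i+1},\ell_{i+2}$ on $\bd S$ (for $1\le i<k-1$), at least one is constructed in the third step. Equivalently, on the cyclic sequence $\ell_0,\ell_1,\dots,\ell_k$ of all extensions along $\bd S$, one cannot have three consecutive entries all coming from the first or second step; so the first-and-second-step extensions on $\bd S$ never occur in a run of length $\ge 3$, and every maximal run of them has length at most $2$. Combined with Lemma~\ref{lem:num-third}, which says there are only $O(1)$ third-step extensions on $\bd S$, the third-step extensions cut the cyclic sequence into $O(1)$ arcs, each of which is a run of first-and-second-step extensions of length at most $2$. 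Hence the total number of first-and-second-step extensions on $\bd S$ is $O(1)$ as well, and therefore $k = O(1)$: $S$ has $O(1)$ extensions on its boundary in total.

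The one point that needs a little care — and what I expect to be the main (minor) obstacle — is the bookkeeping at the ends of the cyclic sequence and the edge cases where $k$ is itself small (say $k\le 3$), so that the ``three consecutive'' hypothesis $1\le i<k-1$ of Lemma~\ref{lem:one-of} is vacuous. In those cases $S$ trivially has $O(1)$ extensions and there is nothing to prove, so I would simply dispatch them first and then run the counting argument above in the regime where $k$ is large enough for Lemma~\ref{lem:one-of} to apply. No new geometric idea is needed beyond Lemmas~\ref{lem:one-of} and~\ref{lem:num-third}; the corollary is essentially the pigeonhole combination of ``no three consecutive non-third-step extensions'' with ``only constantly many third-step extensions.''
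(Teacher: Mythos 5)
Your argument is exactly the one the paper intends: the corollary is stated as following "due to Lemma~\ref{lem:one-of} and Lemma~\ref{lem:num-third}," and your pigeonhole combination (no three consecutive non-third-step extensions along $\bd S$, plus only $O(1)$ third-step extensions) is the correct way to spell that out. The proposal is correct and takes essentially the same approach as the paper.
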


\begin{lemma}
  \label{lem:complexity}
  Every subpolygon in the final subdivision has complexity of
  $O(\triangle)$.
\end{lemma}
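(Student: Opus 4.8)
The plan is to bound the number of vertices of a subpolygon $S$ in the final subdivision by splitting them into two groups: vertices that come from the original boundary $\bd P$, and ``artificial'' vertices that are endpoints of extensions (foot points). The second group is easy: by Corollary~\ref{lem:incident-vertical-line}, $S$ has $O(1)$ extensions on its boundary, and each extension contributes $O(1)$ endpoints to $\bd S$, so the artificial vertices number $O(1)$. The real work is bounding the number of original vertices of $P$ lying on $\bd S$, and for this I would use the fact that $S$ is cut out of $\subp$ (the subpolygon from the second step) by the $O(1)$ third-step extensions, so it suffices to show that each maximal polygonal chain of $\bd P$ running between two consecutive extensions along $\bd \subp$ has complexity $O(\triangle)$, and then note that $\bd S$ is covered by $O(1)$ such chains (or pieces of them).

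First I would recall the structure of $\bd \subp$: between any two consecutive extensions along $\bd \subp$ sits a polygonal chain $\eta$ of $\bd P$, and by Lemma~\ref{lem:no-consecutive} (applied to $\subp$ and its extensions $\ell_0',\dots,\ell_{k'}'$) all but at most two of these chains contain \emph{no} extreme vertex. The first-step partition vertices are spaced exactly $\triangle$ apart along $\bd P$, so any subchain of $\bd P$ that contains no partition vertex in its interior has complexity at most $\triangle + O(1)$. Thus I would argue: a chain $\eta$ of $\bd \subp$ that contains two or more partition vertices in its interior would, by the definition of the first step, already have been cut by the extensions of those partition vertices — contradicting that $\eta$ is a single chain between consecutive extensions of $\subp$. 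Hence every chain $\eta$ between consecutive extensions of $\subp$ contains at most one partition vertex in its interior, and therefore has complexity $O(\triangle)$. Summing over the $O(1)$ chains (by Corollary~\ref{lem:incident-vertical-line}, $\subp$ and hence $S$ meet only $O(1)$ extensions, so only $O(1)$ chains are relevant to $\bd S$) gives $|S| = O(\triangle)$.

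The step I expect to be the main obstacle is pinning down precisely why a chain between two consecutive extensions of $\subp$ cannot contain many partition vertices — one has to be careful that ``consecutive along $\bd\subp$'' interacts correctly with the first-step construction, since a partition vertex's extension might escape $P$ on a part of $\bd P$ far from the vertex, and one must make sure that such an extension really does separate the two sides of $\subp$'s boundary (this is exactly where Lemma~\ref{lem:contain-par} is needed, to guarantee that the pieces $P[a_1,v]$ and $P[v,a_2]$ on either side of an extension both carry partition vertices, so the extensions do subdivide things as claimed). A secondary subtlety is the at-most-two exceptional chains from Lemma~\ref{lem:no-consecutive} and the $\eta_0,\eta_{k'}$ chains singled out in Lemma~\ref{lem:num-third}: on those chains extreme vertices may exist, but the second-step construction guarantees their extreme vertices have already been used, so the same ``at most one partition vertex in the interior'' bound still applies after accounting for the $O(1)$ extra extensions; I would handle these by the same counting argument, absorbing the exceptional chains into the $O(1)$ total. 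Once every relevant chain has complexity $O(\triangle)$ and there are $O(1)$ of them plus $O(1)$ foot points, the bound $|S|=O(\triangle)$ follows.
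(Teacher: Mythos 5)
Your overall skeleton is the same as the paper's proof: by Corollary~\ref{lem:incident-vertical-line} the cell $S$ has $O(1)$ extensions, hence $O(1)$ polygonal chains of $\bd P$ on its boundary, and the bound reduces to showing that each such chain has $O(1)$ partition vertices in its interior, since consecutive partition vertices are only $\triangle$ apart along $\bd P$. The gap is in the one step that carries the lemma. Your justification that a chain with two or more partition vertices in its interior ``would already have been cut by the extensions of those partition vertices'' tacitly assumes that a partition vertex always produces a nontrivial cut. It need not: both foot points of a vertex can be the vertex itself (the paper notes this explicitly); for instance, at a convex, locally $x$-extreme vertex both vertical rays leave $P$ immediately, and then the first step constructs no chord at that vertex at all. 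For a partition vertex with a nontrivial extension your argument is fine --- the extension emanates from the vertex and locally enters the interior of $P$, which near an interior point of the chain coincides with the interior of the cell, contradicting that no first- or second-step extension crosses the cell --- but that very argument shows that any partition vertex interior to a chain must be of the degenerate kind, and for those your cutting argument says nothing. This is exactly why the paper claims only ``at most one partition vertex in the interior'' (one degenerate partition vertex is harmless) and derives the contradiction for two or more from an extension ``constructed in the first \emph{or second} step'': when two consecutive partition vertices lie inside the chain, the entire second-step chain between them lies on the cell boundary, and it is the extreme-vertex (second-step) extensions associated with that chain that provide the crossing. Your proposal never brings the second-step extreme vertices into play at this point (you cite Lemma~\ref{lem:no-consecutive} only to say that most chains have no extreme vertex, which does not bound complexity), so the crucial case is unhandled.

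Two secondary slips. First, you assert that $\subp$ meets only $O(1)$ extensions ``by Corollary~\ref{lem:incident-vertical-line}''; that corollary applies only to cells of the \emph{final} subdivision, and a second-step cell can be incident to many extensions --- this is precisely the reason the third step exists (cf.\ the spiral example in the paper). Your argument only needs the statement for $S$, so this is repairable. Second, the obstacle you anticipate (extensions escaping far from the vertex, to be handled via Lemma~\ref{lem:contain-par}) is not where the difficulty lies --- that lemma is used in proving Lemma~\ref{lem:no-consecutive}, not here; the real difficulty is the degenerate case (foot point equal to the vertex) described above.
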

\begin{proof}
  Consider a subpolygon $S$ in the final subdivision.  By
  Corollary~\ref{lem:incident-vertical-line}, the boundary of $S$
  consists of $O(1)$ vertical extensions and $O(1)$ polygonal chains
  from the boundary of $P$ connecting two consecutive endpoints of
  vertical extensions along $\bd S$.  Each polygonal chain from the boundary
  of $P$ contains at most one partition vertex in its
  interior. Otherwise, a vertical extension intersecting the interior
  of $S$ is constructed in the first or second step, which contradicts
  that $S$ is a subpolygon in the final subdivision.  The number of
  vertices between two consecutive partition vertices along $\bd S$ is
  $O(\triangle)$.  Therefore, $S$ has $O(\triangle)$ vertices on its
  boundary.
\end{proof}

Therefore, we have the following lemma and theorem.
\begin{lemma}
  \label{lem:subdivide-parameter}
  Given a simple $n$-gon and a parameter $\triangle$ with
  $\max\{n/s, (s\log n)/n\}\leq \triangle \leq n$,
  we can compute a set of $O(n/\triangle)$ extensions
  which subdivides the polygon into $O(n/\triangle)$ subpolygons of
  complexity $O(\triangle)$ in $O(n^2/s)$ time using $O(s)$ words of
  workspace.
\end{lemma}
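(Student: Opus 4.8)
The plan is to simply assemble the three-step construction described in Section~\ref{sec:balanaced-subdivision-3steps} and certify both its resource bounds and its output guarantees using the technical lemmas already proved. First I would verify the workspace bound. Since $\triangle \geq n/s$ we have $n/\triangle \leq s$, so the $O(n/\triangle)$ partition vertices, the $O(n/\triangle)$ extreme vertices (four per chain between consecutive partition vertices, by Lemma~\ref{lem:compute-extreme}), and all of their foot points and vertical extensions together occupy only $O(s)$ words; these are the only persistent data structures the algorithm stores. Every subroutine invoked --- Lemma~\ref{lem:compute-footpoints}, Lemma~\ref{lem:compute-extreme}, Lemma~\ref{lem:compute-extreme-complement}, Lemma~\ref{lem:point-chain} --- is itself an $O(s)$-workspace procedure, so nothing exceeds the budget.

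Next I would account for the running time, step by step. The first step computes and sorts the foot points of the partition vertices: computing them costs $O(n^2/s)$ by Lemma~\ref{lem:compute-footpoints}, and sorting $O(n/\triangle)$ foot points costs $O((n/\triangle)\log(n/\triangle))$, which is $O(n^2/s)$ precisely because $\triangle \geq (s\log n)/n$ (this is where that lower bound on $\triangle$ is used). The second step computes the extreme vertices and their foot points in $O(n^2/s)$ time by Lemma~\ref{lem:compute-extreme} and Lemma~\ref{lem:compute-footpoints}. For the third step, applied to each subpolygon $Q$ from the second step, Lemma~\ref{lem:compute-extreme-complement} gives a bound of $O(|Q|^2/s + m(Q))$; since the number of second-step subpolygons is $O(n/\triangle) = O(s)$, since $\sum_Q |Q| = O(n)$ and hence $\sum_Q |Q|^2/s \leq (\sum_Q |Q|)^2/s = O(n^2/s)$, and since $\sum_Q m(Q) = O(n/\triangle) = O(s) = O(n)$, the third step also runs in $O(n^2/s + n) = O(n^2/s)$ total time. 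Summing the three steps gives $O(n^2/s)$.

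Finally I would establish the combinatorial guarantees, and this is really the heart of the matter, but it has already been done: Corollary~\ref{lem:incident-vertical-line} states that every subpolygon in the final subdivision has $O(1)$ extensions on its boundary, and Lemma~\ref{lem:complexity} states that every such subpolygon has complexity $O(\triangle)$. It remains only to count the subpolygons: the total number of vertical extensions constructed is $O(n/\triangle)$ from the first and second steps, plus at most a constant number of third-step extensions per second-step subpolygon (four, by the proof of Lemma~\ref{lem:num-third}, since each is the extension of a vertex in $\eta_0$ or $\eta_{k'}$), hence $O(n/\triangle)$ in total; as noted in Section~\ref{sec:subdivision}, the number of subpolygons is linear in the number of extensions, so the subdivision has $O(n/\triangle)$ cells. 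Collecting the workspace bound, the time bound, the cell count, and the per-cell complexity bound yields exactly the statement. The only mild subtlety --- and the step I would watch most carefully --- is the bookkeeping that $\sum_Q m(Q)$ and the third-step extension count are both $O(n/\triangle)$ rather than something larger, since a naive bound would charge $m(Q)$ to $|Q|$; but because the first two steps produce only $O(n/\triangle)$ extensions in all, each second-step subpolygon's boundary extensions are globally bounded, and the third step adds only $O(1)$ more per subpolygon, so the total stays $O(n/\triangle)$.
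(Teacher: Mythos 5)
Your proposal follows essentially the same route as the paper: the paper derives this lemma by assembling the three-step construction of Section~\ref{sec:balanaced-subdivision-3steps} (with its stated time and workspace bounds, including the use of $\triangle\geq (s\log n)/n$ for the sorting step and $\triangle\geq n/s$ for the workspace bound) together with Corollary~\ref{lem:incident-vertical-line} and Lemma~\ref{lem:complexity}, exactly as you do, so the overall structure and conclusions are correct.

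One piece of your bookkeeping is justified incorrectly, though the bound you need still holds. You claim that the third step adds ``at most a constant number of third-step extensions per second-step subpolygon (four, by the proof of Lemma~\ref{lem:num-third}).'' Lemma~\ref{lem:num-third} bounds the number of third-step extensions on the boundary of a subpolygon $S$ of the \emph{final} subdivision; it says nothing about how many third-step extensions are \emph{created inside} a single second-step subpolygon $\subp$, and that number can be as large as $\Theta(m(\subp))$ --- one per triple of consecutive extensions --- as the spiral example of Figure~\ref{fig:second-third}(b) already shows (and longer spirals force more). The correct accounting, which you in fact already have the ingredients for, is aggregate rather than per-subpolygon: the third step creates at most one extension per chain $\gamma\in\Gamma$, i.e.\ at most $\sum_{\subp} m(\subp)$ extensions in total, and since each of the $O(n/\triangle)$ first- and second-step extensions lies on the boundary of $O(1)$ second-step subpolygons, $\sum_{\subp} m(\subp)=O(n/\triangle)$. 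With that substitution your count of $O(n/\triangle)$ extensions, and hence $O(n/\triangle)$ subpolygons, goes through as in the paper.
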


\begin{theorem}
	\label{thm:subdivide}
	Given a simple $n$-gon, 
	we can compute a set of $O(\min\{n/s,s\})$ extensions which subdivides the polygon into $O(\min\{n/s,s\})$ subpolygons of complexity 
	$O(\max\{n/s,s\})$ in $O(n^2/s)$ time using $O(s)$ words of workspace.	
\end{theorem}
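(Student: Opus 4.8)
The plan is to derive Theorem~\ref{thm:subdivide} from Lemma~\ref{lem:subdivide-parameter} by instantiating the free parameter $\triangle$ appropriately, splitting into two regimes according to how $s$ compares with $\sqrt{n}$. Recall that Lemma~\ref{lem:subdivide-parameter} produces $O(n/\triangle)$ extensions and $O(n/\triangle)$ subpolygons of complexity $O(\triangle)$ in $O(n^2/s)$ time and $O(s)$ words of workspace, as long as $\max\{n/s,(s\log n)/n\}\le \triangle\le n$. So it suffices, in each regime, to exhibit an integer $\triangle$ in this admissible range with $n/\triangle=\Theta(\min\{n/s,s\})$ and $\triangle=\Theta(\max\{n/s,s\})$; the running time and workspace bounds then transfer verbatim from the lemma.

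First I would treat the case $s\le \sqrt{n}$. Here $n/s\ge\sqrt{n}\ge s$, so $\min\{n/s,s\}=s$ and $\max\{n/s,s\}=n/s$, and I would set $\triangle=\lceil n/s\rceil$. This is an integer with $\triangle\ge n/s$ and, since $s\ge 1$, with $\triangle\le n$; and since $n/s\ge 1$ we have $\triangle=\Theta(n/s)$. The only remaining admissibility condition is $\triangle\ge(s\log n)/n$, and it already holds for $n/s$ itself: $n/s\ge(s\log n)/n$ is equivalent to $n^2\ge s^2\log n$, which follows from $s^2\le n$ together with $\log n\le n$. Hence Lemma~\ref{lem:subdivide-parameter} gives $O(n/\triangle)=O(s)$ extensions and subpolygons of complexity $O(\triangle)=O(n/s)$, i.e. exactly $O(\min\{n/s,s\})$ and $O(\max\{n/s,s\})$ in this regime.

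For the complementary case $s>\sqrt{n}$ we have $n/s<\sqrt{n}<s$, so $\min\{n/s,s\}=n/s$ and $\max\{n/s,s\}=s$, and I would simply set $\triangle=s$. Admissibility is immediate: $\triangle=s\ge n/s$ because $s>\sqrt{n}$; $\triangle=s\ge(s\log n)/n$ because $n\ge\log n$; and $\triangle=s\le n$ by the standing assumption that $s$ is at most the size of the input. Lemma~\ref{lem:subdivide-parameter} then yields $O(n/\triangle)=O(n/s)$ extensions and subpolygons of complexity $O(\triangle)=O(s)$, again matching $O(\min\{n/s,s\})$ and $O(\max\{n/s,s\})$. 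In both regimes the construction takes $O(n^2/s)$ time using $O(s)$ words of workspace, which establishes the theorem.

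I do not expect any genuine obstacle in this last step: all the real work sits in Lemma~\ref{lem:subdivide-parameter} and the chain of structural lemmas (Lemma~\ref{lem:contain-par} through Lemma~\ref{lem:complexity}) behind it. The only points requiring care are verifying the lower bound $(s\log n)/n$ on $\triangle$ in each regime — which is where the extra $\log n$ factor in that bound is designed to be absorbed — and checking that rounding $\triangle$ to an integer does not affect the asymptotics, both of which are the routine computations above.
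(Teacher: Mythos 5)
Your proposal is correct and follows essentially the same route as the paper: a case split at $s=\sqrt{n}$, setting $\triangle=n/s$ (resp.\ $\triangle=s$) and invoking Lemma~\ref{lem:subdivide-parameter}. You merely spell out the admissibility checks $\max\{n/s,(s\log n)/n\}\le\triangle\le n$ that the paper asserts without detail, which is fine.
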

\begin{proof}
	If $s\leq \sqrt{n}$, we set $\triangle$ to $n/s$. In this case, 
	we can subdivide the polygon into $O(s)$ subpolygons of complexity
	$O(n/s)$ by Lemma~\ref{lem:subdivide-parameter}.
	If $s> \sqrt{n}$, we set $\triangle$ to $s$. Note that $\max\{n/s, (s\log n)/n\}\leq \triangle \leq n$
	in both cases.
	We can subdivide the polygon into $O(n/s)$ subpolygons of complexity
	$O(s)$.	
	Therefore, the theorem holds.
\end{proof}

\section{Applications}
We first introduce other subdivision methods frequently used for $s$-workspace algorithms
and provide comparison for our balanced subdivision method with them. Then we will present $s$-workspace
algorithms that improve the previously best known results for three problems without increasing the size of
the workspace.

\label{sec:application}
\subsection{Comparison with Other Subdivision Methods}
There are several subdivision methods which are used for computing the
shortest path between two points in the context of time-space
trade-offs.  Asano et al.~\cite{asano_memory-constrained_2013}
presented a subdivision method that subdivides a simple $n$-gon into
$O(s)$ subpolygons of complexity $O(n/s)$ using $O(s)$ chords. 
They
showed that the shortest path between any two points in the
  polygon can be computed in $O(n^2/s)$ time using $O(s)$ words of
workspace.  However, their algorithm takes $O(n^2)$ time to compute
the subdivision, which dominates the overall running time.
In fact, in the paper they asked whether a subdivision for computing shortest paths 
can be computed more efficiently using $O(s)$ words of workspace.

Instead of answering this question directly,
Har-Peled~\cite{har-peled_shortest_2015} presented a way to subdivide
a simple $n$-gon into $O(n/s)$ subpolygons of complexity $O(s)$. 
The number of segments defining this subdivision
can be strictly larger than $O(s)$, for $s=\omega(\sqrt{n})$,
and therefore the whole subdivision may not be stored in the $O(s)$ words of workspace.
Instead, they gave a procedure to find the
subpolygon of the subdivision containing a query point in $O(n+s\log s\log^4 (n/s))$
expected time without maintaining the subdivision explicitly.
They showed that one can find the shortest path
between any two points using this subdivision in a way similar to the
algorithm by Asano et al. in $O(n^2/s+(n/s)T(n,s))$ time, where $T(n,s)$ is the time for computing
the subpolygon of the subdivision containing a query point.
Therefore, the running time is $O(n^2/s+s\log s\log^4 (n/s))$.

The balanced subdivision that we propose 
can replace the subdivision methods in the algorithms by Asano et al. and Har-Peled for
computing the shortest path between any two points.
Moreover, our subdivision method has two advantages
compared to the subdivision methods by Asano et al. and Har-Peled: 
(1) the subdivision can be computed faster than the one by Asano et al., and
(2) we can keep the whole subdivision in the workspace unlike the one by
Har-Peled. By using our balanced subdivision, we can improve 
the running times of trade-offs that use a subprocedure of 
computing the shortest path between two points.
Moreover, we can solve other application problems efficiently using $O(s)$ words of
workspace.
An example is to compute the shortest path between 
a query point and a fixed point after
preprocessing the input polygon for the fixed point. See Lemma~\ref{lem:path-faster}.

\subsection{Time-space Trade-offs Based on the Balanced Subdivision Method} 
By using our balanced subdivision method, we improve 
the previously best known running times for the following three problems without increasing 
the size of the workspace.
 \paragraph{Computing the  shortest path between two points.}  
 Given any two points $p$ and $q$ in $P$,
 we can report the edges of
 the shortest path $\pi(p,q)$ in order in $O(n^2/s)$ deterministic
 time using $O(s)$ words of workspace.  This improves the
 $s$-workspace randomized algorithm by
 Har-Peled~\cite{har-peled_shortest_2015} which takes
 $O(n^2/s + n\log s \log^4(n/s))$ expected time.
 
 We can compute the
 shortest path between two query points using our balanced subdivision
 as follows.  For $s \geq \sqrt{n}$, we have the subdivision
 consisting of $O(n/s)$ subpolygons of complexity $O(s)$.  Thus we use the
 algorithm by Har-Peled~\cite{har-peled_shortest_2015} described in
 the following lemma.  Har-Peled presented an algorithm that for a given
 query point $q$ computes the subpolygon of the subdivision
 containing $q$ in $O(n+s\log s\log^4(n/s))$ expected time ~\cite[Lemma
 3.2]{har-peled_shortest_2015}. It is shown in the paper that the shortest path
   between any two points can be computed using the algorithm as stated in the
   following lemma.
 
 \begin{lemma}[Implied by~{\cite[Lemma 4.1 and Theorem 4.3]{har-peled_shortest_2015}}]
   For a subdivision of a simple polygon consisting of $O(n/s)$ subpolygons, each of complexity $O(s)$, 
   if the subpolygon containing a query point can be computed in $T(n)$
   time using $O(s)$ words of workspace, the
   shortest path between any two points can be computed in $O((n/s)(T(n)+n))$ time
   using $O(s)$ words of workspace.
 \end{lemma}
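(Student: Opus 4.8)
The plan is to adapt the two-point shortest-path algorithm of Asano et al.~\cite{asano_memory-constrained_2013}, treating the cost of locating a piece of the subdivision as a black box $T(n)$. Write $\mathcal{R}$ for the given subdivision, with $O(n/s)$ pieces, each of complexity $O(s)$. First I would locate the pieces $R_p$ and $R_q$ containing $p$ and $q$, each in $O(T(n))$ time using $O(s)$ workspace. The key geometric fact is that $\pi(p,q)$, being a geodesic, crosses each boundary segment of $\mathcal{R}$ at most once: if it crossed such a segment $c$ at two points $x$ and $y$, the subpath of $\pi(p,q)$ between $x$ and $y$ could be replaced by the sub-segment $xy\subseteq c\subseteq P$, contradicting that $\pi(p,q)$ is shortest and crosses $c$ transversally. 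Because the segments of $\mathcal{R}$ are non-crossing, it follows that $\pi(p,q)$ enters and leaves each piece at most once, so it visits a sequence $R_p=R_0,R_1,\dots,R_m=R_q$ of \emph{distinct} pieces with $m=O(n/s)$, where consecutive pieces are adjacent across the segment through which $\pi(p,q)$ passes.

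Next I would walk along this sequence, maintaining as an invariant the current piece $R_j$, the point $x_j$ at which $\pi(p,q)$ enters $R_j$ (with $x_0=p$), and the last vertex $a_j$ of $\pi(p,x_j)$ preceding $x_j$, i.e.\ the current funnel apex, which is $O(1)$ words. Given $R_j$, I would retrieve its $O(s)$ boundary edges (an $O(n)$-time scan of $\partial P$ and the segments of $\mathcal{R}$ in general, or $O(|R_j|)$ time when the subdivision is kept in workspace, as with our balanced subdivision) and run the linear-time shortest-path algorithm of Guibas et al.~\cite{guibas_linear_1987} inside $R_j$ from $a_j$; this uses $O(|R_j|)=O(s)$ time and workspace. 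If $R_j=R_q$, I output $\pi(a_j,q)$ and stop. Otherwise I must identify the boundary segment $c$ of $R_j$ through which $\pi(p,q)$ leaves $R_j$ toward $R_q$, together with the crossing point $x_{j+1}$ and the new apex $a_{j+1}$; from the shortest-path tree just computed in $R_j$ these can be read off in $O(|R_j|)$ time, after which I cross $c$ into $R_{j+1}$ and repeat. Summing up, the two initial locations cost $O(T(n))$, and each of the $O(n/s)$ iterations costs $O(T(n)+n)$, where the $O(n)$ term accounts for retrieving the next piece (a re-scan of the input, or a point-location query, in the worst case) and dominates the $O(s)$ work done inside the piece. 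The total is therefore $O((n/s)(T(n)+n))$ time in $O(s)$ workspace.

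The main obstacle is the book-keeping in the second paragraph: with only $O(s)$ words one can store neither $\pi(p,q)$ nor the whole subdivision, so at every piece the geodesic must be continued correctly across piece boundaries — this amounts to merging the funnels of consecutive pieces — and one must be sure to pick the exit segment that leads toward $R_q$ rather than into a dead-end branch of $\mathcal{R}$. This is exactly where the ``crosses each segment at most once'' property is used: it guarantees that the visited pieces form a simple walk in the adjacency graph of $\mathcal{R}$, so the exit segment is the unique boundary segment of $R_j$ separating $R_j$ from $R_q$, and the shortest-path tree in $R_j$ determines it. The correctness of this local step and its implementation within $O(s)$ workspace are precisely what is established by Lemma 4.1 and Theorem 4.3 of~\cite{har-peled_shortest_2015}, so I would invoke those statements for the remaining details rather than reprove them here.
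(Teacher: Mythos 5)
Note first that the paper does not actually prove this lemma: it is imported wholesale from Har-Peled (``Implied by Lemma~4.1 and Theorem~4.3 of~\cite{har-peled_shortest_2015}''), so your concluding deferral to those two statements is, in effect, exactly what the paper does. Your surrounding sketch also captures the intended argument at the top level: the geodesic crosses each chord of the subdivision at most once, the dual of a chord subdivision is a tree, so $\pi(p,q)$ visits $O(n/s)$ distinct pieces in a fixed order, and each piece costs $O(T(n)+n)$, giving $O((n/s)(T(n)+n))$.

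However, the incremental invariant you propose has a genuine flaw. You maintain the entry point $x_j$ of $\pi(p,q)$ into the current piece $R_j$ and claim that the exit crossing point $x_{j+1}$ and the new apex $a_{j+1}$ ``can be read off'' from the shortest-path tree computed inside $R_j$. They cannot: the point where the global geodesic crosses the exit gate depends on the geometry of the polygon beyond that gate (ultimately on $q$), not only on $R_j$, so no computation local to $R_j$ determines $x_{j+1}$. (A smaller issue of the same kind: $a_j$ need not lie in $R_j$, so ``run Guibas et al.\ from $a_j$ inside $R_j$'' needs $R_j$ augmented by the segment $a_jx_j$.) The standard repair, which is what Asano et al.~\cite{asano_memory-constrained_2013} and Har-Peled's Lemma~4.1 actually do, is to maintain not the crossing point but the funnel to the \emph{endpoints} of the current gate --- i.e., the shortest paths from $p$ to the two gate endpoints, summarized by their junction/apex --- and to emit edges of $\pi(p,q)$ only as that junction advances (or in a final pass); this is also precisely the junction/w-edge machinery the paper develops in Section~\ref{sec:wall-edges} (cf.\ Lemma~\ref{lem:wall-edge}). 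Since you explicitly delegate the correctness of the local step to Har-Peled, your write-up is acceptable as a citation-level argument, but as a self-contained proof the piece-by-piece invariant would have to be restated in terms of gate-endpoint funnels rather than crossing points.
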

 
 In our case, we can find the subpolygon of the balanced subdivision
 containing a query point in $O(n)$ deterministic time.  Combining
 this result with the lemma, we can compute the shortest path between
 any two points in $O(n^2/s)$ deterministic time.
 
 For $s< \sqrt{n}$, we have the subdivision consisting of $O(s)$ subpolygons
   of complexity $O(n/s)$. Instead of the algorithm by Hal-Peled,
   we use the algorithm by Asano et
   al.~\cite{asano_memory-constrained_2013} to compute the shortest
   path between any two points in the polygon.

 \begin{theorem}
   \label{thm:path}
   Given any two points in a simple polygon with $n$ vertices, we can
   compute the shortest path between them in $O(n^2/s)$ deterministic
   time using $O(s)$ words of workspace.
 \end{theorem}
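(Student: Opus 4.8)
The plan is to combine the balanced subdivision from Theorem~\ref{thm:subdivide} with the two existing shortest-path procedures, splitting on the relation between $s$ and $\sqrt n$. First I would invoke Theorem~\ref{thm:subdivide} to build, in $O(n^2/s)$ deterministic time and $O(s)$ words of workspace, a subdivision of $P$ into $O(\min\{n/s,s\})$ subpolygons each of complexity $O(\max\{n/s,s\})$, together with the list of the $O(\min\{n/s,s\})$ defining extensions stored in clockwise order along $\bd P$. Because the number of extensions is $O(\min\{n/s,s\})=O(s)$, the whole subdivision fits in the workspace; as noted in Section~\ref{sec:subdivision}, this lets us traverse the boundary of any subpolygon, starting from a given edge, in time linear in its complexity. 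This is the structural input both subroutines need.

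Next I would handle the case $s\ge\sqrt n$. Here $\triangle=s$, so the subdivision has $O(n/s)$ subpolygons, each of complexity $O(s)$. I would then apply the (already quoted) lemma implied by \cite[Lemma 4.1 and Theorem 4.3]{har-peled_shortest_2015}: it reduces computing $\pi(p,q)$ to being able to locate the subpolygon containing a query point, in time $O((n/s)(T(n)+n))$. The key point is that with our subdivision explicitly in memory, point location is trivial: scanning the $O(s)=O(n/\triangle)$ extensions and the $O(n/\triangle)$ partition-vertex chains against the query point takes $O(n)$ deterministic time, so $T(n)=O(n)$. Substituting gives $O((n/s)\cdot n)=O(n^2/s)$ deterministic time with $O(s)$ words of workspace — this replaces Har-Peled's $O(n+s\log s\log^4(n/s))$ expected-time locator and removes the randomization.

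For the case $s<\sqrt n$, I would take $\triangle=n/s$, so the subdivision consists of $O(s)$ subpolygons of complexity $O(n/s)$ — exactly the shape of decomposition that the algorithm of Asano et al.~\cite{asano_memory-constrained_2013} consumes in its second phase. Their second phase computes $\pi(p,q)$ from such a subdivision in $O(n^2/s)$ time and $O(s)$ words; since we supply the subdivision in $O(n^2/s)$ deterministic time rather than their $O(n^2)$, the overall bound becomes $O(n^2/s)$ deterministic. Finally I would observe that in both cases the edges of $\pi(p,q)$ are reported in order, since both subroutines already do so, and that $\max\{n/s,s\}=O(n/s)$ when $s\le\sqrt n$ and $O(s)\le O(\sqrt n)=O(n/s)$ otherwise, so the per-subpolygon work is consistent with the claimed time bound. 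The main obstacle, and the only nontrivial point, is verifying that our explicitly-stored balanced subdivision really does plug into the interfaces of \cite{har-peled_shortest_2015} and \cite{asano_memory-constrained_2013} — in particular that the extensions (chords hitting $\bd P$ only at their feet) play the role of the chords those papers assume, and that $O(s)$-time point location together with $O(n/\triangle)=O(s)$ subpolygons keeps everything within $O(s)$ workspace; both follow from the storage and traversal properties established in Section~\ref{sec:subdivision}.
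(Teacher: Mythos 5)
Your proposal is correct and follows essentially the same route as the paper: build the balanced subdivision of Theorem~\ref{thm:subdivide}, then for $s\ge\sqrt{n}$ plug the trivial $O(n)$-time deterministic point location into the quoted Har-Peled reduction to get $O((n/s)\cdot n)=O(n^2/s)$, and for $s<\sqrt{n}$ feed the $O(s)$ subpolygons of complexity $O(n/s)$ to the second phase of Asano et al. (Only your closing aside that $O(s)\le O(\sqrt{n})=O(n/s)$ in the case $s\ge\sqrt{n}$ is backwards, but it plays no role in the argument.)
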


\paragraph{Computing the shortest path tree from a point.}
The \emph{shortest path tree} rooted at $p$ is defined to be the union
of $\pi(p,v)$ over all vertices $v$ of $P$.  Aronov et
al.~\cite{aronov_time-space} gave an $s$-workspace randomized
algorithm for computing the shortest path tree rooted at a given
point. It uses the algorithm by
Har-Peled~\cite{har-peled_shortest_2015} as a subprocedure
and takes $O((n^2\log n)/s + n\log s\log^5{(n/s)})$ expected
time. If one
uses Theorem~\ref{thm:path} instead of Har-Peled's algorithm, the
running time improves to $O((n^2\log n)/s)$ expected time.  In
Section~\ref{sec:SPT}, we improve this algorithm even further using
properties of our balanced subdivision.

\paragraph{Computing a triangulation of a simple polygon.}
Aronov et al.~\cite{aronov_time-space} presented an $s$-workspace algorithm for 
computing a triangulation of a simple $n$-gon. Their algorithm returns the edges
of a triangulation without repetition in $O(n^2/s+n\log s\log^5(n/s))$ expected time.
It uses the shortest path algorithm by Har-Peled~\cite{har-peled_shortest_2015} as a subprocedure,
which takes $O(O(n^2/s+n\log s\log^4(n/s)))$ expected time.
By replacing this shortest path algorithm with ours in Theorem~\ref{thm:path}, we can obtain 
a triangulation of a simple polygon in $O(n^2/s)$ deterministic time using $O(s)$ words of workspace.

\begin{theorem}
	\label{thm:triangulation-polygon}
	Given a simple polygon with $n$ vertices,
	we can compute a triangulation of the simple polygon 
	by returning the edges of the triangulation without
	repetition in $O(n^2/s)$ deterministic time
	using $O(s)$ words of workspace. 
	\end{theorem}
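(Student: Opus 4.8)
The plan is to reduce triangulation to the shortest path problem we have already solved, following the strategy of Aronov et al.~\cite{aronov_time-space} but feeding in our deterministic $O(n^2/s)$-time shortest path algorithm from Theorem~\ref{thm:path} in place of Har-Peled's randomized subprocedure. First I would recall the structure of their algorithm: it computes a triangulation of $P$ by subdividing the polygon and triangulating the pieces, where the key primitive is computing shortest paths (or, equivalently, extracting diagonals via shortest paths between boundary points) inside subpolygons, and then reporting the resulting triangulation edges without repetition. The only randomized ingredient in their pipeline is the shortest path subprocedure of~\cite{har-peled_shortest_2015}, which contributes the additive $O(n\log s\log^5(n/s))$ expected term; every other step is already deterministic and runs in $O(n^2/s)$ time with $O(s)$ words of workspace.

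Next I would verify that Theorem~\ref{thm:path} is a drop-in replacement: it takes any two points in a simple $n$-gon and reports the edges of the shortest path between them in $O(n^2/s)$ deterministic time using $O(s)$ words of workspace, which is exactly the interface their algorithm requires (the endpoints they query are vertices or boundary points of $P$, which are admissible inputs). Since their algorithm invokes the shortest path routine $O(n/s)$ times on instances whose total size is $O(n)$ — or, more crudely, invokes it a number of times so that the aggregate cost of all shortest path calls was $O(n^2/s + n\log s\log^5(n/s))$ with the old subprocedure — substituting ours makes the aggregate cost $O(n^2/s)$, and the bookkeeping needed to report each triangulation edge exactly once remains unchanged and deterministic. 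Combining the $O(n^2/s)$ cost of all shortest path computations with the $O(n^2/s)$ cost of the remaining (already deterministic) steps yields the claimed $O(n^2/s)$ deterministic bound, and the workspace stays $O(s)$ words since we never store more than one subpolygon's worth of data plus the balanced subdivision at a time.

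The main obstacle I anticipate is not the running-time arithmetic but making the reduction honest: I must confirm that the only place randomness (or the extra logarithmic factors) entered the Aronov et al.\ pipeline was the shortest path subprocedure, and that their scheme for reporting triangulation edges without repetition does not itself rely on properties specific to Har-Peled's subdivision rather than on the abstract shortest path interface. If their edge-deduplication step secretly assumed, say, a particular subpolygon decomposition, I would instead rebuild that step on top of our balanced subdivision from Theorem~\ref{thm:subdivide}, using Corollary~\ref{lem:incident-vertical-line} (each subpolygon meets $O(1)$ extensions) and Lemma~\ref{lem:complexity} (each subpolygon has complexity $O(\max\{n/s,s\})$) to argue that triangulating each piece and gluing along the $O(1)$ shared extensions produces each edge in a controlled, deduplicable way within the time and space budget. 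Once that compatibility check is discharged, the theorem follows immediately.
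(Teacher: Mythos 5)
Your proposal matches the paper's argument: the paper proves Theorem~\ref{thm:triangulation-polygon} exactly by observing that the only non-deterministic, super-$O(n^2/s)$ ingredient in the triangulation algorithm of Aronov et al.~\cite{aronov_time-space} is Har-Peled's shortest path subprocedure, and substituting the deterministic $O(n^2/s)$-time, $O(s)$-workspace algorithm of Theorem~\ref{thm:path} gives the stated bound. Your extra compatibility check (and fallback via the balanced subdivision) is prudent but not something the paper elaborates on; the argument is essentially the same.
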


As mentioned by Aronov et al.~\cite{aronov_time-space}, the 
algorithm can be modified to report the resulting triangles of a
triangulation together with their adjacency information in the same
time if $s\geq \log n$.


\section{Improved Algorithm for Computing the Shortest Path Tree}
\label{sec:SPT}
In this section, we improve the algorithm for computing the shortest path tree
from a given point even further to 
$O(n^2/s+(n^2\log n)/s^c)$ expected time for an arbitrary positive constant $c$.
We use the following lemma given by Aronov et al.~\cite{aronov_time-space}.
\begin{lemma}[{\cite[Lemma~6]{aronov_time-space}}]
\label{lem:shortest-path-constant}
For any point $p$ in a simple $n$-gon,
we can compute the shortest path tree rooted at a point in the polygon in $O(n^2\log n)$
expected time using $O(1)$ words of workspace.
\end{lemma}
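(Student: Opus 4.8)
The plan is to reduce reporting the entire shortest path tree to computing, for each individual vertex, a single geodesic, and then to supply a constant-workspace subroutine for one geodesic whose cost is $O(n\log n)$ expected. First I would observe that the shortest path tree is completely described by its parent pointers: for every vertex $v$ other than the root, the edge of the tree incident to $v$ and closer to $p$ is the last edge of $\pi(p,v)$, namely the segment joining $v$ to the last turning point $u$ of $\pi(p,v)$ (with $u=p$ when $v$ is directly visible from $p$). Reporting this single edge for each of the $n$ vertices $v$ therefore yields all $O(n)$ edges of the tree and reports each of them exactly once, so no auxiliary bookkeeping is needed to suppress repetition, which matters since we only have $O(1)$ words. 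As the tree has only $O(n)$ edges, the output is not the bottleneck; the work lies entirely in locating, for each $v$, the last turning point of $\pi(p,v)$.

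Next I would design a subroutine that, given the fixed source $p$ and a single target vertex $v$, traces $\pi(p,v)$ using only $O(1)$ words. Conceptually this is the funnel/taut-string computation: starting at $p$ and advancing toward $v$, whenever the straight segment to $v$ would leave $P$ the geodesic must wrap around a reflex vertex, and the path is the resulting sequence of reflex vertices followed by the final segment to $v$. The obstruction is that the funnel can have $\Theta(n)$ vertices, which we may not store. I would therefore keep only the current apex (one point) together with the two extreme rays of the current cone of still-reachable directions, each described by $O(1)$ points, and recompute the next bend on demand. Each advance of the apex is realized by a ray-shooting or visibility query against $\bd P$, answerable in $O(n)$ time with $O(1)$ workspace by a single scan of all edges.

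The main obstacle is precisely this memory restriction: the classical linear-time geodesic algorithm crucially keeps the funnel on a stack, and with $O(1)$ words we must instead recompute each needed funnel vertex from scratch. I expect the delicate part of the argument to be showing that one target's geodesic can be traced within $O(n\log n)$ expected time, rather than the naive $O(n^2)$ one would obtain from an $O(n)$ scan at each of up to $n$ bends. The natural way to achieve this is a random-sampling scheme that repeatedly halves the portion of $\bd P$ that can still contribute the next turning point, so that the recomputations over a single trace amount to only an $O(\log n)$ overhead over one linear pass; this is also where the expectation in the statement enters. Granting this single-target bound, I would finally sum over all $n$ vertices $v$: each of the $n$ geodesic traces costs $O(n\log n)$ expected time and uses $O(1)$ words, giving the claimed $O(n^2\log n)$ expected running time with $O(1)$ words of workspace and completing the proof.
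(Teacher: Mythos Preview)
The paper does not give its own proof of this lemma; it is quoted verbatim as Lemma~6 of Aronov et al.\ and used as a black box in Section~\ref{sec:SPT}. There is therefore nothing in the present paper to compare your proposal against.

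That said, your plan has a real gap worth flagging. The reduction to parent edges is fine: outputting, for each vertex $v$, the last edge of $\pi(p,v)$ reports every tree edge exactly once and needs no bookkeeping. The unproved part is the per-target cost. You assert that $\pi(p,v)$ can be traced in $O(n\log n)$ expected time with $O(1)$ words via ``a random-sampling scheme that repeatedly halves the portion of $\partial P$ that can still contribute the next turning point,'' but this does not go through as stated. A geodesic may have $\Theta(n)$ bends, and after one bend is located the next bend can come from an entirely unrelated stretch of $\partial P$; there is no single shrinking interval that persists across bends, so the halving intuition does not amortize over the whole trace. With a linear scan per bend you get $O(n^2)$ per target and $O(n^3)$ overall, not $O(n^2\log n)$. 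To meet the stated bound you would need either a direct $O(n\log n)$-time, $O(1)$-space method for locating just the \emph{parent} of $v$ (without tracing the full geodesic), or a genuine amortization showing that the total work across all bends of one trace is $O(n\log n)$; your sketch supplies neither. For the actual mechanism you should consult the cited Aronov et al.\ paper.
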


We apply two different algorithms depending on the size of the workspace: $s=O(\sqrt{n})$ or
$s=\Omega(\sqrt{n})$. We consider the case of $s=O(\sqrt{n})$ first. For the case of
$s=\Omega(\sqrt{n})$, we can store all edges of each subpolygon in the workspace.
\subsection{Case of  \texorpdfstring{$s = O(\sqrt{n})$}{s=O(sqrt{n})}}
\label{sec:small}
Given a point $p\in P$, we want to report all edges of the shortest path tree rooted at $p$.
Recall that there are $O(s)$ extensions on the balanced subdivision in this case. 
We call an edge of a path a \emph{w-edge} if it crosses an extension.
For every extension $a_1a_2$ of the balanced subdivision, we first compute the w-edges of 
$\pi(p,a_1)$ and $\pi(p,a_2)$ in $O(n^2/s^2)$ time in Section~\ref{sec:wall-edges}.
We show that the total number of the w-edges for the two paths is $O(s)$  for every extensions.
These w-edges allow us to compute the shortest path $\pi(p,q)$ for any point $q$ of $P$ 
in $O(n^2/s^2)$ time.

Then we decompose $P$ into subpolygons associated with vertices 
in Section~\ref{sec:all-edges}. For each subpolygon,
we compute the shortest path tree rooted at its associated vertex 
inside the subpolygon recursively.
If a subpolygon satisfies one of \textit{the stopping criteria} (to be defined later), we 
stop the recursion but proceed further to complete the shortest path
tree inside the subpolygon if necessary. 
Because of the space constraint, we restrict the depth of the recurrence to be a constant.

\subsubsection{Computing w-edges}
\label{sec:wall-edges}
We compute all w-edges of the shortest paths between $p$ and 
the endpoints of the extensions.
The following lemma implies that there are $O(s)$ w-edges of the shortest paths. 
For any three points $x, y$ and $z$ in $P$, 
we call a point $x'$ the \emph{junction} of $\pi(x,y)$ and $\pi(x,z)$
if $\pi(x,x')$ is the maximal common path of $\pi(x,y)$ and $\pi(x,z)$.

\begin{lemma}
  \label{lem:wall-edge}
  For an extension $a_1a_2$, there is at most one w-edge of $\pi(p,a_i)$
  for $i=1,2$ which is not a w-edge of $\pi(p,b)$ or $\pi(p,b')$ for any
  other extension $bb'$ crossed by $\pi(p,a_i)$.
\end{lemma}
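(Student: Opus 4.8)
The plan is to order the w-edges of $\pi(p,a_i)$ along the path and show that every one of them except the last is "charged" to the very next extension that $\pi(p,a_i)$ crosses. First I would fix $i\in\{1,2\}$ and list the extensions crossed by $\pi(p,a_i)$, in the order they are met along the path from $p$ to $a_i$, as $\ell_1,\dots,\ell_t$ with $\ell_j=b_jb_j'$; let $e_j$ be the edge of $\pi(p,a_i)$ crossing $\ell_j$, let $c_j$ be the crossing point, and write $e_j=w_jz_j$ with $w_j$ the endpoint closer to $p$. Before the main argument I record two easy facts. (i) We may assume $\pi(p,a_i)$ meets each extension transversally and at most once: if it crossed some extension $\ell$ twice, the subpath of $\pi(p,a_i)$ between the two crossings would be a shortest path between two points of the subpolygon cut off by $\ell$, hence would stay in that subpolygon, hence would be the straight segment along $\ell$; so $\pi(p,a_i)$ would overlap $\ell$, which is ruled out by the general-position assumption on $x$-coordinates from Section~2 (after a perturbation if necessary). (ii) In particular $\pi(p,a_i)$ does not cross $a_1a_2$ itself — it only terminates at the endpoint $a_i$ — so every $\ell_j$ is automatically an extension \emph{other} than $a_1a_2$, and the count only drops if some $e_j$ happens to coincide with another.

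Next I would invoke the funnel structure of shortest paths. Since $c_j$ lies on the chord $\ell_j=b_jb_j'$, the region bounded by $\pi(p,b_j)$, $\ell_j$ and $\pi(p,b_j')$ is a funnel (with the usual common stem from $p$ to the junction of $\pi(p,b_j)$ and $\pi(p,b_j')$), and $\pi(p,c_j)$ — which is exactly the prefix of $\pi(p,a_i)$ up to $c_j$, because prefixes of shortest paths are shortest paths and shortest paths are unique — follows the common stem, then one side of the funnel, then a straight segment to $c_j$. Hence its last vertex $w_j$ is a vertex of $\pi(p,b_j)$ or of $\pi(p,b_j')$; equivalently, in the shortest-path tree rooted at $p$ (in which the parent of a vertex is its predecessor along the shortest path back to $p$), one of $b_j,b_j'$ is a descendant of $w_j$.

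Then comes the key step. Fix $j$ with $1\le j\le t-1$ and consider $\ell_{j+1}=b_{j+1}b_{j+1}'$. By the previous paragraph one of $b_{j+1},b_{j+1}'$ — say $b_{j+1}$ — is a descendant of $w_{j+1}$ in the shortest-path tree. But $w_{j+1}$ lies on $\pi(p,a_i)$ strictly after $z_j$ (the crossing of $\ell_j$ precedes that of $\ell_{j+1}$ along the path), so $w_{j+1}$, and therefore $b_{j+1}$, is a descendant of $z_j$. Since $w_j$ is the parent of $z_j$, the path $\pi(p,b_{j+1})$ passes through $w_j$ and then $z_j$ consecutively; that is, $e_j=w_jz_j$ is an edge of $\pi(p,b_{j+1})$, and since $e_j$ crosses $\ell_j$ it is a w-edge of $\pi(p,b_{j+1})$, where $\ell_{j+1}=b_{j+1}b_{j+1}'$ is an extension crossed by $\pi(p,a_i)$ different from $a_1a_2$. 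Thus none of $e_1,\dots,e_{t-1}$ is "new", so at most one w-edge of $\pi(p,a_i)$ — namely $e_t$ — can be new, which proves the lemma.

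The main obstacle I expect is making the funnel/shortest-path-tree bookkeeping fully rigorous: in particular justifying that $w_j$ lies on $\pi(p,b_j)\cup\pi(p,b_j')$, which requires the standard funnel lemma applied to a chord that need not be visible from $p$, and cleanly handling degenerate situations (an endpoint of an extension coinciding with $w_j$, the two endpoints not being distinct descendants, or a single edge of $\pi(p,a_i)$ crossing several extensions at once). Each of these is routine once the general-position conventions of Section~2 are in force, and in every such case the bound of "at most one" is only easier to meet.
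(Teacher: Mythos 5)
Your proof is correct, but it takes a different route from the paper's. The paper looks only at the \emph{last} extension crossed, i.e.\ the first extension $b_1b_2$ met when walking $\pi(p,a_1)$ from $a_1$, takes the junction $v$ of $\pi(p,a_1)$ with $\pi(p,b_1)$ (or $\pi(p,b_2)$) that is closer to $a_1$, and argues globally: every edge of $\pi(p,a_1)$ before $v$ lies on the common prefix and is therefore already a w-edge of $\pi(p,b_1)$ or $\pi(p,b_2)$, while $\pi(v,a_1)$, except for its edge incident to $v$, stays inside the single subpolygon bounded by $a_1a_2$ and $b_1b_2$ and so crosses no extension; hence the only possibly new w-edge is the edge of $\pi(v,a_1)$ incident to $v$. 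You instead order all crossed extensions $\ell_1,\dots,\ell_t$ along the path and charge each crossing edge $e_j$ ($j<t$) to the next extension $\ell_{j+1}$, using the funnel property of the chord $\ell_{j+1}$ to conclude that $\pi(p,b_{j+1})$ or $\pi(p,b_{j+1}')$ passes through $e_j$. Both arguments rest on the same funnel/junction structure, but the paper's version buys something you do not get for free: it identifies the potentially new w-edge explicitly as the edge incident to the junction $v$ with the last crossed extension, and this characterization is what the algorithm in Section~5.1.1 actually computes (binary search for the junction, then the edge at the junction); your charging argument only locates it as the edge crossing the last extension, which is equivalent but stated less directly. Two small repairs to your write-up: the claim that $w_{j+1}$ lies \emph{strictly} after $z_j$ should be ``at or after'' (they coincide when $e_j$ and $e_{j+1}$ are consecutive edges), which does not affect the argument; and when one edge of $\pi(p,a_i)$ crosses several extensions ($e_j=e_{j+1}$) you cannot charge $e_j$ to $\ell_{j+1}$, so you should phrase the charging over \emph{distinct} crossing edges, sending each non-last one to any extension crossed by the next distinct crossing edge --- the same funnel argument then goes through verbatim, so your parenthetical dismissal of this case, while glib, is fixable in one line. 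The remaining degeneracies you flag (the path overlapping a vertical extension, crossings at $v$ itself) are also glossed over by the paper, so they are not a point of difference.
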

\begin{proof}
  Let $b_1$ and $b_2$ be the endpoints of the first extension that we
  encounter during the traversal of $\pi(p,a_1)$ from $a_1$ towards
  $p$.  See Figure~\ref{fig:path-tree}(a).  Let $v$ be the junction
    closer to $a_1$ between the junction of $\pi(p,a_1)$ and $\pi(p,b_1)$ and
  the junction of $\pi(p,a_1)$ and $\pi(p,b_2)$.
	
  Note that $\pi(p,a_1)$ is the concatenation of $\pi(p,v)$ and
  $\pi(v,a_1)$.  The vertices of $\pi(v,a_1)$ other than $v$ lie in
  the subpolygon incident to $a_1a_2$ and $b_1b_2$.  Thus every edge of
  $\pi(v,a_1)$ not incident to $v$ is contained in this subpolygon, and does
  not cross any extension.  Therefore, the w-edge of $\pi(p,a_1)$ which is
  not a w-edge of $\pi(p,b)$ or $\pi(p,b')$ for any extension $bb'$ crossed
  by $\pi(p,a_1)$ is unique: the edge of $\pi(v,a_1)$ incident to $v$.
\end{proof}

\begin{figure}
  \begin{center}
    \includegraphics[width=0.75\textwidth]{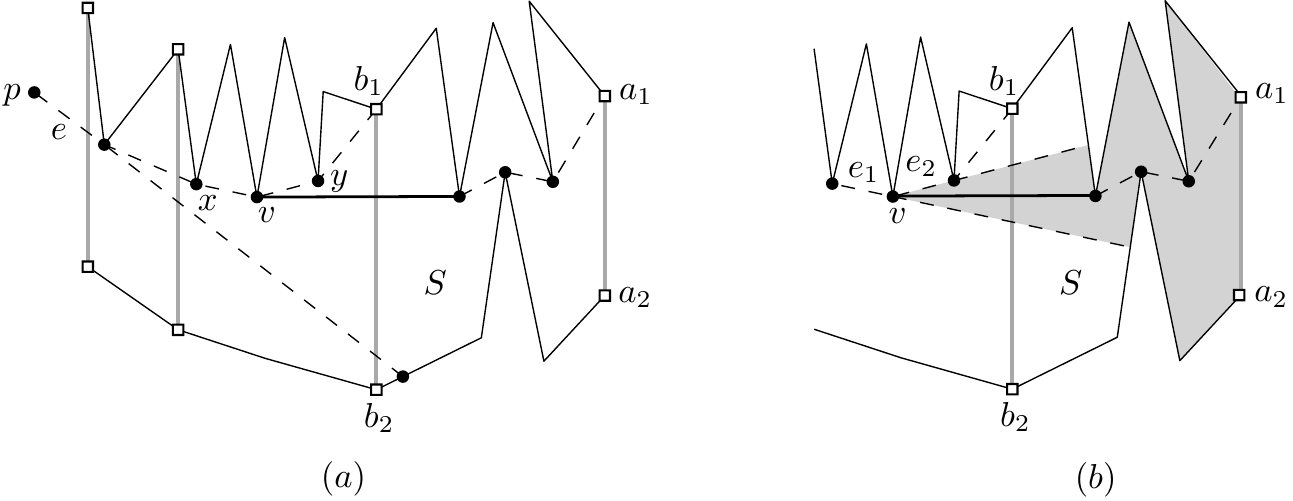}
    \caption{\label{fig:path-tree}\small (a) We compute the junction
      $v$ of $\pi(p,a_1)$ and $\pi(p,b_1)$ by applying binary search
      on the w-edges of $\pi(p,b_1)$.  (b) We extend $e_1$ and $e_2$
      towards $b_1$. The gray region contains the edge of $\pi(v,a_1)$
      incident to $v$ and has complexity of $O(n/s)$.}
  \end{center}
\end{figure}

We consider the extensions one by one in a specific order and compute such
w-edges one by one.  To decide the order for considering the extensions, we
define a \emph{w-tree} $T$ as follows.
Each node $\alpha$ of $T$ corresponds to an extension $d(\alpha)$ of the balanced subdivision of $P$,
except for the root. Also, each extension of the balanced subdivision of $P$ corresponds to a node of $T$. 
The root of $T$ corresponds to $p$ and has
children each of which corresponds to an extension incident to the subpolygon
containing $p$.  A non-root node $\beta$ of $T$ is the parent of a
node $\alpha$ if and only if $d(\beta)$ is the first extension that we
encounter during the traversal of $\pi(p,a_1)$ from $a_1$ for an
endpoint $a_1$ of $d(\alpha)$.  We can compute $T$ in $O(n)$ time.

\begin{lemma}
  The w-tree can be built in $O(n)$ time using $O(s)$ words of
  workspace.
\end{lemma}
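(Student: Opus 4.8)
The w-tree can be built in $O(n)$ time using $O(s)$ words of workspace.

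The plan is to show that the w-tree is, up to a relabeling of its nodes, nothing but the \emph{adjacency tree of the subpolygons} of the balanced subdivision, rooted at the subpolygon that contains $p$; and that this adjacency tree can be assembled directly from the (already stored) list of $O(s)$ extensions without ever computing a shortest path. Concretely, I would first record a structural fact, then prove that the w-tree is combinatorially determined by that structure, and finally turn this into the claimed $O(n)$-time, $O(s)$-space procedure.

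The structural fact is this. The $O(s)$ extensions are vertical segments on pairwise distinct $x$-coordinates, with their endpoints (foot points) on $\bd P$ and their interiors in the interior of $P$; hence they are pairwise non-crossing chords of $P$. Adding them to $P$ one at a time, each new chord lies in the closure of a single current face and splits it into two, so the \emph{region-adjacency graph} $T_R$ --- one node per subpolygon, one edge per extension joining the two subpolygons it bounds --- is a tree with $O(s)$ nodes and $O(s)$ edges (of bounded degree, by Corollary~\ref{lem:incident-vertical-line}). I root $T_R$ at the subpolygon $Q_p$ containing $p$; for a non-root subpolygon $R$ let $e(R)$ be its parent edge in $T_R$. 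Since $T_R$ has one more node than it has edges, $R\mapsto e(R)$ is a bijection from the non-root subpolygons to the extensions.

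Next I would prove that $R\mapsto e(R)$, together with $Q_p\mapsto\text{root}$, is an isomorphism between $T_R$ rooted at $Q_p$ and the w-tree $T$. The only delicate point is that $T$, although \emph{defined} through the geodesics $\pi(p,a_1)$, is in fact combinatorial. A chord $\ell$ splits $P$ into two simple subpolygons, and for two points lying in the closure of one of them the geodesic of $P$ between them stays in that closure: any maximal excursion of the geodesic across $\ell$ enters and leaves through two points $x,y\in\ell$, and replacing that excursion by the segment $xy\subseteq\ell$ cannot lengthen the path and keeps it in that closure (using uniqueness of geodesics in a simple polygon). Consequently $\pi(p,a_1)$ crosses each extension at most once, and it crosses exactly the extensions on the $T_R$-path from $R_\alpha$ to $Q_p$, where $R_\alpha$ is the endpoint-subpolygon of $d(\alpha)$ nearer to $Q_p$. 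Hence the first extension met while walking $\pi(p,a_1)$ from $a_1$ is the parent edge of $R_\alpha$ in $T_R$ --- or, if $R_\alpha=Q_p$, none, so the parent of $\alpha$ is the root; this is independent of which endpoint of $d(\alpha)$ one takes and matches the definition of $T$, including that the children of the root are exactly the extensions bounding $Q_p$. Thus $T\cong T_R$ as rooted trees.

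Finally I would turn this into the algorithm. The $O(s)$ extensions and their endpoints, in clockwise order along $\bd P$, are already in the workspace (Theorem~\ref{thm:subdivide}), and $Q_p$ can be located in $O(n)$ time by scanning the subpolygons. Building $T_R$ costs $O(n)$ time and $O(s)$ space: traverse the boundary of each subpolygon --- each such traversal takes time linear in the subpolygon's complexity, as noted in Section~\ref{sec:subdivision} --- recording for each subpolygon its $O(1)$ incident extensions and for each extension its two bounding subpolygons, for a total of $O(\sum_Q |Q|)=O(n)$. A BFS or DFS of $T_R$ from $Q_p$, using an $O(s)$-size queue, then yields the parent of every subpolygon in $O(s)$ time, and a final $O(s)$-time pass rewrites these pointers into the parent pointers of $T$ via $R\mapsto e(R)$. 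Altogether this is $O(n)$ time and $O(s)$ words. The main obstacle is the structural step of the previous paragraph --- making precise that the w-tree is determined by $T_R$ (the geodesic-stays-in-its-subpolygon argument together with endpoint-independence); once that is established, everything else is bookkeeping that plainly fits in $O(s)$ workspace.
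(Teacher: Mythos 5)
Your proposal is correct and takes essentially the same approach as the paper: the paper also builds the w-tree as the rooted adjacency tree of the subpolygons, creating the root from the subpolygon containing $p$ and then traversing each subpolygon's boundary exactly once to attach the extensions incident to it, for $O(n)$ total time and $O(s)$ words. The only difference is that you make explicit the structural claim (via the geodesic-crosses-each-chord-at-most-once argument) that the geodesically defined w-tree coincides with this adjacency tree, which the paper leaves implicit.
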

\begin{proof}
  We create the root, and its children by traversing the boundary of
  the subpolygon $S_p$ containing $p$.  Then for each subpolygon $S$ incident to
  $S_p$, we traverse its boundary. Let $\alpha$ be the node of the
  tree corresponding to the extension incident to both $S$ and $S_p$.  We
  create nodes for the extensions incident to $S$ other than $d(\alpha)$ as
  children of $\alpha$.  We repeat this until we visit every extension of the
  balanced subdivision.  In this way, we traverse the boundary of each
  subpolygon exactly once, thus the total running time is $O(n)$.
\end{proof}

After constructing $T$, we apply depth-first search on $T$. Let $\mathcal{D}$ be an empty set.
When we visit a node $\alpha$ of $T$, we compute the w-edges of $\pi(p,a_1)$ and $\pi(p,a_2)$ which are not in $\mathcal{D}$ yet, where $a_1a_2=d(\alpha)$, and insert them in $\mathcal{D}$.
Each w-edge in $\mathcal{D}$ has information on the node of $T$ defining it and the subpolygons of the balanced
subdivision containing its endpoints. Due to this information, we can compute the w-edges of $\pi(p,a)$ in order from $a$
in $O(s)$ time for any endpoint $a$ of $d(\alpha)$ and any node $\alpha$ we visited before.
Once the traversal is done, $\mathcal{D}$ contains all w-edges in the shortest paths between $p$ and the 
endpoints of the extensions.

We show how to compute  the w-edge of $\pi(p,a_1)$ which is not in $\mathcal{D}$ yet.
We can compute the $w$-edge of $\pi(p,a_2)$ not in $\mathcal{D}$ yet analogously.
By Lemma~\ref{lem:wall-edge}, there is at most one such edge of $\pi(p,a_1)$.
Moreover, by its proof, it is the edge of $\pi(v,a_1)$ that is incident to $v$. 
Here, $v$ is the junction closer to $a_1$ 
between the junction of $\pi(p,b_1)$ and $\pi(p,a_1)$ 
and the junction of $\pi(p,b_2)$ and $\pi(p,a_1)$,
where $b_1b_2$ is the extension corresponding the parent of $\alpha$.
Thus, to compute the $w$-edge, we first compute the junction of 
 $\pi(p,b_1)$ and $\pi(p,a_1)$ 
and the junction of $\pi(p,b_2)$ and $\pi(p,a_1)$.

\paragraph{Computing junctions.}
We show how to compute the junction $v_1$ of $\pi(p,b_1)$ and $\pi(p,a_1)$ in $O(n^2/s^2)$ time 
for $s=O(\sqrt{n})$. The junction $v_2$ of $\pi(p,b_2)$ and $\pi(p,a_1)$ 
can be computed analogously in the same time. 
Then we choose the one between $v_1$ and $v_2$ that is closer to $a_1$.

To do this, we compute the set of the w-edges in $\mathcal{D}$ appearing on $\pi(p,b_i)$ 
in order from $b_i$ in $O(s)$ time for $i=1,2$.
We denote the set by $\mathcal{D}(b_i)$. Note that the edges in $\mathcal{D}(b_i)$ are 
the w-edges of $\pi(p,b_i)$.
We find two consecutive edges in $\mathcal{D}(b_1)$ containing $v_1$ between them along $\pi(p,b_1)$
by applying binary search on the edges in $\mathcal{D}(b_1)$.

Given any edge $e$ in $\mathcal{D}(b_1)$, we can determine which side of $e$ along $\pi(p,b_1)$ contains $v_1$ 
in $O(n/s)$ time as follows.
We first check whether $e$ is also contained in $\pi(p,b_2)$ in constant time using $\mathcal{D}(b_2)$. If so,
$v_1$ is contained in the side of $e$ along $\pi(p,b_1)$ containing $b_1$. Thus we are done.
Otherwise, we extend $e$ towards $b_1$ until it escapes from $S$, where $S$ is the subpolygon incident to both $a_1a_2$ and $b_1b_2$. See Figure~\ref{fig:path-tree}(a).
Note that the extension crosses $b_1b_2$ since both $\pi(b_1,v_e)$ and $\pi(b_2,v_e)$ are concave
for an endpoint $v_e$ of $e$.
We can compute the point where the extension escapes from $S$ in $O(n/s)$ time by traversing the boundary of $S$ once.
If an endpoint of the extension lies on the part of $\bd S$ between $a_1$ and $b_1$ not containing
$a_2$, $v_1$ lies in the side of $e$ containing $p$ along $\pi(p,b_1)$.
Otherwise, $v_1$ is contained in the other side of $e$.
Therefore, we can find two consecutive w-edges in $\mathcal{D}(b_1)$ containing $v_1$ between them along   $\pi(p,b_1)$ in $O((n/s)\log s)$ time since the size of $\mathcal{D}(b_1)=O(s)$.

The edges of $\pi(p,b_1)$ lying between the two consecutive w-edges are contained in the same subpolygon. 
Let $x$ and $y$ be the endpoints of the two consecutive edges of $\mathcal{D}(b_1)$
contained in the same subpolygon.
Then we compute the edges of $\pi(x,y)$ one by one from $x$ to $y$ inside the subpolygon containing $x$ and $y$.
By Theorem~\ref{thm:path}, we can compute $\pi(x,y)$ in $O(n^2/s^3)$ time since the size
of the subpolygon is $O(n/s)$.
Here, we use extra $O(s)$ words of workspace for computing $\pi(x,y)$.
When the algorithm in Theorem~\ref{thm:path} reports an edge $f$ of $\pi(x,y)$, 
we check which side of $f$ along $\pi(x,y)$ contains $v_1$ in $O(n/s)$ time as we did before.
We repeat this until we find $v_1$.
This takes $O((n/s)^2)$ time since there are $O(n/s)$ edges in $\pi(x,y)$.
Therefore, in total, we can compute the junction $v_1$ 
in $O(s+(n/s)\log s + n^2/s^2)=O(n^2/s^2)$ time since $s=O(\sqrt{n})$.

\paragraph{Computing the edge \texorpdfstring{of $\pi(v,a_1)$}{} incident to the junction \texorpdfstring{$v$}{v}.}
In the following, we compute the edge of $\pi(v,a_1)$ incident to $v$.
We assume that $v$ is the junction of $\pi(v,a_1)$ and $\pi(v,b_1)$.
The case that $v$ is the junction of $\pi(v,a_1)$ and $\pi(v,b_2)$ can be handled
analogously. 
Let $e_1$ and $e_2$ be two edges of $\pi(p,b_1)$ incident to $v$, 
which can be obtained while we compute $v$.
See Figure~\ref{fig:path-tree}(b).
We extend $e_1$ and $e_2$ towards $b_1$ until they escape from $P$ for the first time.
The two extensions and $a_1a_2$ subdivide $P$ into regions. Consider 
the region bounded by the two extensions and $a_1a_2$.
Note that the region can be represented using $O(1)$ words as the boundary
consists of three line segments, one from each of the two extensions and $a_1a_2$, and
two boundary (possibly empty) chains of $P$ connecting the segment of $a_1a_2$ to 
the other segments. 
The number of polygon vertices on the boundary of the region is $O(n/s)$.
Moreover, $\pi(v,a_1)$ is contained in the region.
Thus, the edge of $\pi(v,a_1)$ incident to $v$ inside the region is the edge we want to compute.
We can compute it in $O(n^2/s^3)$ time by applying Theorem~\ref{thm:path} to this region.

In summary, we compute the w-edge of $\pi(p,a_1)$ which has not been computed yet
in $O(n^2/s^2)$ time, assuming that we have done this for every node we have visited so far.
More specifically, computing the junction of $\pi(p,a_1)$ and $\pi(p,b_i)$ takes $O(n^2/s^2)$ time for $i=1,2$, and 
computing the edge incident to each junction takes $O(n^2/s^3)$ time. 
One of the edges is the w-edge that we want to compute.
Since the size of the w-tree is $O(s)$, we can do this for every node in $O(n^2/s)$ time in total.
Thus we have the following lemma.

\begin{lemma}
\label{lem:computing-wall-edges}
Given a point $p$ in a simple polygon with $n$ vertices,
we can compute all w-edges of the shortest paths between $p$ 
and the endpoints of the extensions
in $O(n^2/s)$ time using $O(s)$ words of workspace for $s=O(\sqrt{n})$.
\end{lemma}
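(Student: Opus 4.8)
The plan is to organize the whole computation around the \emph{w-tree} $T$, which has $O(s)$ nodes because the balanced subdivision has $O(s)$ extensions when $s=O(\sqrt n)$, and to traverse $T$ by depth-first search while maintaining a set $\mathcal{D}$ of all w-edges found so far, each tagged with the node of $T$ that defines it and with the subpolygons containing its two endpoints (this tagging is what lets us read off, in $O(s)$ time, the ordered list of w-edges of $\pi(p,a)$ for any already-visited endpoint $a$). The governing invariant is that when we reach a node $\alpha$ with $d(\alpha)=a_1a_2$ and parent $\beta$ with $d(\beta)=b_1b_2$, the set $\mathcal{D}$ already holds every w-edge of $\pi(p,b_1)$ and of $\pi(p,b_2)$: this is immediate since $\beta$ precedes $\alpha$ in the depth-first search, and Lemma~\ref{lem:wall-edge} (applied along the path from $p$ to each endpoint) guarantees that each extension introduces at most one \emph{new} w-edge per endpoint-path, so $|\mathcal{D}|=O(s)$ at all times and the workspace stays within $O(s)$ words, leaving $O(s)$ scratch words for the subroutines below.

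At $\alpha$, I compute the at most one new w-edge of $\pi(p,a_1)$ (and symmetrically for $\pi(p,a_2)$). By the proof of Lemma~\ref{lem:wall-edge} this edge is the edge of $\pi(v,a_1)$ incident to the junction $v$, where $v$ is the closer-to-$a_1$ of the junctions $v_1$ of $\pi(p,b_1),\pi(p,a_1)$ and $v_2$ of $\pi(p,b_2),\pi(p,a_1)$. To locate $v_1$: extract the ordered w-edge lists $\mathcal{D}(b_1),\mathcal{D}(b_2)$ from $\mathcal{D}$ in $O(s)$ time, then binary search $\mathcal{D}(b_1)$; for a test edge $e$, first check in $O(1)$ time whether $e\in\mathcal{D}(b_2)$ (if so, $v_1$ lies on the $b_1$-side), otherwise extend $e$ towards $b_1$ inside the subpolygon $S$ incident to both $a_1a_2$ and $b_1b_2$ --- done in $O(n/s)$ time by one walk around $\partial S$ --- and use the arc of $\partial S$ through which this extension exits to decide the side. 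This confines $v_1$ to the stretch of $\pi(p,b_1)$ between two consecutive w-edges whose endpoints $x,y$ lie in one subpolygon of complexity $O(n/s)$; I then run the shortest-path algorithm of Theorem~\ref{thm:path} inside that subpolygon to report $\pi(x,y)$ edge by edge in $O((n/s)^2/s)=O(n^2/s^3)$ time with $O(s)$ scratch words (valid since $s\le\sqrt n\le n/s$), performing an $O(n/s)$-time side test at each of the $O(n/s)$ edges, for a further $O(n^2/s^2)$. Knowing $v$, I extend the two edges $e_1,e_2$ of $\pi(p,b_1)$ incident to $v$ towards $b_1$; these, with $a_1a_2$, bound an $O(1)$-describable region of complexity $O(n/s)$ that contains $\pi(v,a_1)$, and one more call to Theorem~\ref{thm:path} inside it returns the wanted incident edge in $O(n^2/s^3)$ time.

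Summing, each node costs $O(s+(n/s)\log s+n^2/s^2)=O(n^2/s^2)$ because $s=O(\sqrt n)$, and over the $O(s)$ nodes this is $O(n^2/s)$; the workspace stays $O(s)$ since $|\mathcal{D}|=O(s)$ and all recursive shortest-path calls reuse a fixed $O(s)$-word scratch area. I expect the main obstacle to be the correctness of the $O(n/s)$-time ``which side'' test inside the binary search: one must show that extending a w-edge $e$ of $\pi(p,b_1)$ towards $b_1$ indeed crosses $b_1b_2$ and leaves $S$ through a boundary arc whose identity pins down the side of the junction, and that this test is monotone along $\mathcal{D}(b_1)$ so binary search applies. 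The second delicate point is the region argument for the final incident edge --- that the region bounded by the extended $e_1,e_2$ and $a_1a_2$ really captures all of $\pi(v,a_1)$ and contains only $O(n/s)$ polygon vertices. Everything else is bookkeeping of the pointers attached to the w-edges in $\mathcal{D}$.
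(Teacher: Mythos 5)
Your proposal follows essentially the same route as the paper's proof: a depth-first traversal of the w-tree maintaining the tagged set $\mathcal{D}$, binary search over $\mathcal{D}(b_1)$ with an $O(n/s)$-time side test obtained by extending the tested edge towards $b_1$ inside the common subpolygon, a call to Theorem~\ref{thm:path} between two consecutive w-edges to pin down the junction, and a final call to Theorem~\ref{thm:path} inside the $O(1)$-describable region bounded by the extensions of $e_1,e_2$ and $a_1a_2$ to extract the edge incident to $v$, giving $O(n^2/s^2)$ per node and $O(n^2/s)$ overall. The two ``delicate points'' you flag are exactly the ones the paper handles (the extension of a tested edge crosses $b_1b_2$ because $\pi(b_1,v_e)$ and $\pi(b_2,v_e)$ are concave, and the bounded region contains $\pi(v,a_1)$ with only $O(n/s)$ vertices), so no substantive gap remains.
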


Due to the w-edges, we can compute the shortest path $\pi(p,q)$ in
$O(n^2/s^2)$ time for any point $q$ in $P$. Note that $n^2/s^2$ is at least $n$ for $s=O(\sqrt{n})$.

\begin{lemma}
\label{lem:path-faster}
Given a fixed point $p$ in $P$ and a parameter $s=O(\sqrt{n})$, 
we can compute $\pi(p,q)$ in $O(n^2/s^2)$ time for any point $q$ in $P$ using $O(s)$ words of workspace 
after an $O(n^2/s)$-time preprocessing for $P$ and $p$.
\end{lemma}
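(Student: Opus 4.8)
The plan is to push all the expensive work into a preprocessing phase and to answer each query by a constant number of shortest-path computations inside regions of complexity $O(n/s)$, together with one junction computation of the kind already developed in Section~\ref{sec:wall-edges}. In the preprocessing, since $s=O(\sqrt n)$, I would apply Theorem~\ref{thm:subdivide} with $\triangle=n/s$ to build, in $O(n^2/s)$ time and $O(s)$ words, the balanced subdivision: $O(s)$ pairwise non-crossing extensions cutting $P$ into $O(s)$ subpolygons of complexity $O(n/s)$, all of which I keep in the workspace. I then build the w-tree $T$ rooted at $p$ in $O(n)$ time and keep it, and apply Lemma~\ref{lem:computing-wall-edges} to compute, in $O(n^2/s)$ time and $O(s)$ words, the set $\mathcal{D}$ of all w-edges of the shortest paths from $p$ to the endpoints of the extensions, annotated as in Section~\ref{sec:wall-edges}; note $|\mathcal{D}|=O(s)$. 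This is all the preprocessing, and it fits in $O(n^2/s)$ time and $O(s)$ words.

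Given a query point $q$, I first locate the subpolygon $S_q$ containing $q$ by traversing the subdivision in $O(n)$ time, which is within budget since $n\le n^2/s^2$ for $s=O(\sqrt n)$. Because the extensions are parallel vertical chords with distinct $x$-coordinates, they are pairwise non-crossing, so the adjacency graph of the subdivision pieces is a tree; hence $S_q$ has a unique parent extension $b_1b_2$ on the tree path towards $S_p$, which I read off from $T$. If $S_q=S_p$, then $\pi(p,q)$ stays inside $S_q$, and I compute it directly by Theorem~\ref{thm:path} in $O((n/s)^2/s)=O(n^2/s^3)$ time. Otherwise, a shortest path from $S_p$ to $S_q$ must follow the tree path of pieces and cannot re-enter $S_q$, so $b_1b_2$ is exactly the last extension crossed by $\pi(p,q)$; by the argument in the proof of Lemma~\ref{lem:wall-edge}, $\pi(p,q)=\pi(p,v)\cdot\pi(v,q)$, where $v$ is the junction of $\pi(p,q)$ with whichever of $\pi(p,b_1),\pi(p,b_2)$ is closer to $q$, the edge of $\pi(v,q)$ incident to $v$ crosses $b_1b_2$, and the rest of $\pi(v,q)$ lies inside $S_q$.

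I compute $v$ exactly as junctions are computed in Section~\ref{sec:wall-edges}, with $q$ in the role of $a_1$ and $S_q$ in the role of $S$: retrieve $\mathcal{D}(b_1),\mathcal{D}(b_2)$ in $O(s)$ time, binary-search inside $\mathcal{D}(b_1)$ down to the single subpolygon containing $v$ using $O(n/s)$-time side tests, and inside that $O(n/s)$-complexity subpolygon run Theorem~\ref{thm:path} with an $O(n/s)$-time test per reported edge; this costs $O(s+(n/s)\log s+(n/s)^2)=O(n^2/s^2)$. I then report $\pi(p,v)$ by walking its w-edges in $\mathcal{D}$ (at most $|\mathcal{D}|=O(s)$ of them) and filling each gap between consecutive w-edges, which lies in a single subpolygon of complexity $O(n/s)$, by Theorem~\ref{thm:path} in $O(n^2/s^3)$ time each, hence $O(n^2/s^2)$ in total; I report the edge of $\pi(v,q)$ incident to $v$ by the region construction of Section~\ref{sec:wall-edges} (a region of complexity $O(n/s)$) in $O(n^2/s^3)$ time, and the remaining part of $\pi(v,q)$, which lies in $S_q$, by Theorem~\ref{thm:path} in $O(n^2/s^3)$ time. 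Summing, each query runs in $O(n)+O(n^2/s^2)=O(n^2/s^2)$ time using $O(s)$ extra words.

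The two points that need care are (i) justifying that the last extension crossed by $\pi(p,q)$ is exactly the parent extension of $S_q$ — this relies on the tree structure of the subdivision induced by the non-crossing extensions and on the fact that a shortest path cannot leave and re-enter the same piece — and (ii) verifying that every ingredient of the junction computation and of the ``edge incident to $v$'' computation of Section~\ref{sec:wall-edges} still works when one endpoint is a query point in the interior of $S_q$ rather than an endpoint of an extension. The remaining bookkeeping, in particular that $\pi(p,v)$ carries only $O(s)$ w-edges, is immediate from $|\mathcal{D}|=O(s)$.
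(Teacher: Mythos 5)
Your proposal is correct and follows essentially the same route as the paper's own proof: the same preprocessing (balanced subdivision plus all w-edges via Lemma~\ref{lem:computing-wall-edges}), locating the subpolygon of $q$, identifying the parent extension in the w-tree as the extension of $\pi(p,q)$ nearest $q$, computing the one new w-edge/junction with the Section~\ref{sec:wall-edges} machinery in $O(n^2/s^2)$ time, and then reporting $\pi(p,q)$ piecewise by filling the gaps between the $O(s)$ w-edges inside subpolygons of complexity $O(n/s)$ using Theorem~\ref{thm:path}. The two points you flag for care (the parent extension being the first crossed from $q$, and reusing the junction computation with $q$ in place of an extension endpoint) are handled in the paper at the same level of detail, so no substantive gap remains.
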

\begin{proof}
	As a preprocessing, we compute the balanced subdivision of $P$. Then we compute all w-edges of the shortest paths
	between $p$ and the endpoints of the extensions 
	in $O(n^2/s)$ time using Lemma~\ref{lem:computing-wall-edges}.
	
	To compute $\pi(p,q)$, we first find the subpolygon of the balanced subdivision containing $q$ in $O(n)$ time.
	The subpolygon is incident to $O(1)$ extensions due to Corollary~\ref{lem:incident-vertical-line}.
	Consider the nodes in the w-tree corresponding to these extensions. One of the nodes is the parent
	of the others. We find the extension corresponding to the parent and denote it by $a_1a_2$.
	This extension is the first extension crossed by $\pi(p,q)$ we encounter
	during the traversal of $\pi(p,q)$ from $q$.
	
	Then we compute the w-edge $e$ of $\pi(p,q)$ which is not in $\mathcal{D}$
	in $O(n^2/s^2)$ time as we did before, where  $\mathcal{D}$ is the set of all w-edges 
	of the shortest paths between $p$ and the endpoints of the extensions. 
	Let $v$ be the endpoint of $e$ closer to $q$. We report the edges of $\pi(q,v)$ from $q$ one by one
	using the algorithm in Theorem~\ref{thm:path}. Note that $\pi(q,v)$ is contained in a single subpolygon of the balanced subdivision.
	We can report them in $O(n^2/s^3)$ time since the subpolygon has complexity of $O(n/s)$.
	Then we report $e$ as an edge of $\pi(p,q)$. 	
	
	The remaining procedure is to report the edges of $\pi(p,v')$, where $v'$ is the endpoint of $e$ other than $v$. 
	Note that $v'$ lies on $\pi(p,a_1)\cup\pi(p,a_2)$.
	Without loss of generality, we assume that it lies on $\pi(p,a_1)$. We can find all w-edges of $\pi(p,v')$ by computing all w-edges
	of $\pi(p,a_1)$ in $O(s)$ time.
	We consider the w-edges one by one from the one closest to $v'$ to the one farthest from $v'$.
	For two consecutive w-edges $e$ and $e'$ along $\pi(p,v')$, we report the edges of $\pi(p,v)$ lying between $e$ and $e'$.
	This takes $O(n^2/s^3)$ time since all such edges are contained in a single subpolygon of complexity $O(n/s)$.
	Since there are $O(s)$ w-edges, we can report all edges of $\pi(p,q)$ in $O(n^2/s^2)$ time in total.
\end{proof}

\subsubsection{Decomposing the Shortest Path Tree into Smaller Trees}
\label{sec:all-edges}
We subdivide $P$ into subpolygons each of which is associated with a vertex of it in a way different from the one for the balanced subdivision. 
Then inside each such subpolygon, 
we report all edges of the shortest path tree rooted at its associated vertex recursively.
We guarantee that the edges reported in this way are the edges of the shortest path tree rooted at $p$.
We also guarantee that all edges of the shortest path tree rooted at $p$ are reported.
We use a pair $(P',p')$ to denote the problem of reporting the shortest path tree rooted at a point $p'$ inside a simple subpolygon $P'$ of $P$. 
Initially, we are given the problem $(P, p)$.

\paragraph{Structural properties of the decomposition.}
We use the following two steps of the decomposition.
In the first step, we decompose $P$ into a number of subpolygons by the 
shortest path $\pi(p,a)$ for every endpoint $a$ of the extensions.
The boundary of each subpolygon consists of a polygonal chain of $\bd P$ with endpoints $g_1, g_2$ and 
the shortest paths $\pi(v,g_1)$ and $\pi(v,g_2)$, 
where $g_1,g_2$ are endpoints of extensions and $v$ is the junction of $\pi(p,g_1)$ and $\pi(p,g_2)$.
In the second step, we decompose each subpolygon further into smaller subpolygons
by extending the edges of the shortest paths $\pi(v,g_1)$ and $\pi(v,g_2)$ towards $g_1$ and $g_2$,
respectively. See Figure~\ref{fig:path-tree-subproblem}.

Consider a subpolygon $P'$ in the resulting subdivision. Its boundary
consists of a polygonal chain of $\bd P$ and two line segments sharing a common endpoint $p'$.
We can represent $P'$ using $O(1)$ words. Moreover, $P'$ has complexity of $O(n/s)$.
For any point $q$ in $P'$, $\pi(p,q)$ is the concatenation of $\pi(p,p')$ and $\pi(p',q)$. Therefore,
the shortest path rooted at $p'$ inside $P'$ coincides with the shortest path tree rooted at $p$ inside 
$P$ restricted to $P'$.
We can obtain the entire shortest path tree rooted at $p$ inside $P$ by computing it on $(P',p')$ for every subpolygon $P'$ in the resulting subdivision
and its associated vertex $p'$.

We define the orientation of an edge of the shortest path tree using the indices of its endpoints (for example, from a smaller index to a larger index.)
Note that the endpoints of an edge of the shortest path tree are vertices of $P$ labeled from $v_0$ to $v_{n-1}$.
We do not report an edge $e$ of the shortest path if $P'$ contains $e$ on its boundary and lies locally to the right of $e$
for a base problem $(P',p')$. Then every edge is reported exactly once. 
\begin{figure}
	\begin{center}
		\includegraphics[width=0.65\textwidth]{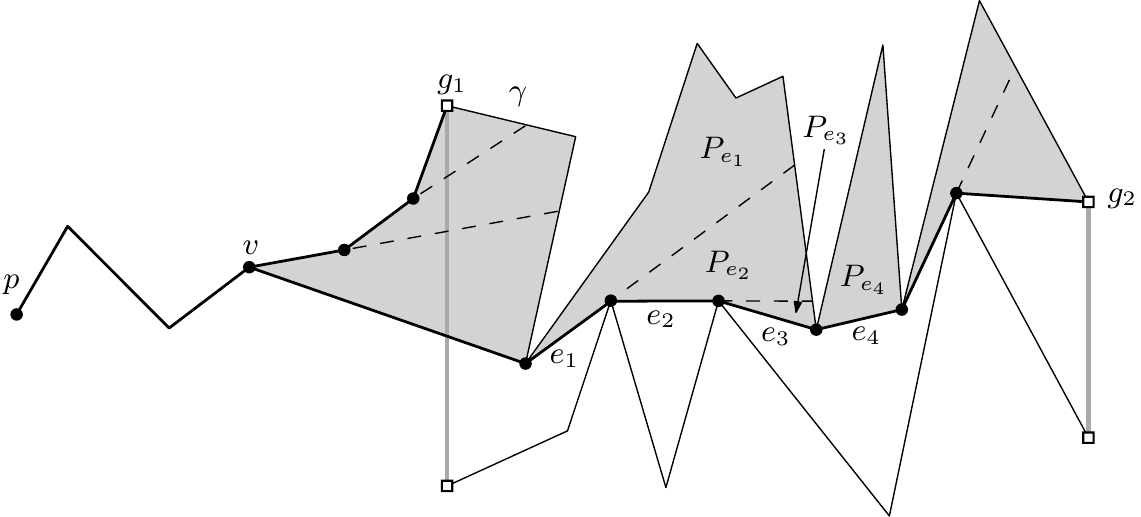}
		\caption{\label{fig:path-tree-subproblem} \small
			Subdivision of the region bounded by $\pi(v,g_1)\cup\pi(v,g_2)$ and the polygonal chain
			of $\bd P$ from $g_1$ to $g_2$ in clockwise order along $\bd P$ by the extensions of 
			edges of $\pi(v,g_1)$ and $\pi(v,g_2)$ towards $g_1$ and $g_2$, respectively. 
			We extend an edge
			of the paths if at least one of its endpoints are not on $\gamma$.}
	\end{center}
\end{figure}

\paragraph{Computing the subpolygons with their associated vertices.}
In the following, we show how to obtain this subdivision.
Recall that the boundary of a subpolygon $S$ in the balanced subdivision consists of extensions and polygonal chains from $\bd P$.
For each maximal polygonal chain $\gamma$ of $\bd S$ containing no endpoint 
of extensions in its interior, we do the followings.
Let $g_1$ and $g_2$ be the endpoints of $\gamma$.
We compute the junction $v$ of $\pi(p,g_1)$ and $\pi(p,g_2)$ in $O(n^2/s^2)$ time
as we did in Section~\ref{sec:wall-edges}.

Consider the region (subpolygon) of $P$ bounded by $\pi(v,g_1)$, $\pi(v,g_2)$ and $\gamma$. 
We compute the edges of $\pi(p,g_i)$ lying between $v$ and $g_i$ in order in 
$O(n^2/s^2)$ time using Lemma~\ref{lem:path-faster} for $i=1,2$.
Clearly, these edges are the edges of $\pi(v,g_i)$.
Whenever we compute an edge $e$ of $\pi(v,g_i)$, we check whether the endpoints of $e$
are on $\gamma$ or not, and obtain a subproblem $(P_e,v_e)$ as follows. 
Let $v_e$ be the endpoint closer to $v$.
See Figure~\ref{fig:path-tree-subproblem} for an illustration. 

\begin{itemize}
	\item Both endpoints are on $\gamma$: 
	$P_e$ is the subpolygon bounded by $e$ and a part of $\gamma$ 
	connecting the two endpoints of $e$. 
	(See $e_4$ in Figure~\ref{fig:path-tree-subproblem}.)
	\item Exactly one of the endpoints are on $\gamma$: 
	If $v_e$ is not on $\gamma$, we extend the edge incident to $v_e$ other than $e$ towards $g_i$ until it hits $\gamma$ 
	in $O(n/s)$ time. (See $e_3$ in Figure~\ref{fig:path-tree-subproblem}.)
	If $v_e$ is on $\gamma$, we extend $e$ towards $g_i$ until it hits $\gamma$ 
	in $O(n/s)$ time.
	(See $e_1$ in Figure~\ref{fig:path-tree-subproblem}.)
	Let $P_e$ is the subpolygon bounded by the extension (including $e$) 
	and the part of $\gamma$ connecting
	the endpoint of $e$ and the endpoint of the extension lying on $\gamma$.
	\item No endpoint is on $\gamma$:
	We extend both edges of $\pi(v,g_2)$ incident to $v_e$ towards $g_i$ in $O(n/s)$ time.
	Let $P_e$ be the subpolygon bounded by the two extensions (including $e$) and the part of 
	$\gamma$ connecting the endpoints of the extensions lying on $\gamma$. 
	(See $e_2$ in Figure~\ref{fig:path-tree-subproblem}.)
\end{itemize}

Therefore, we can compute the decomposition of the region of $P$ bounded by 
$\pi(v,g_1)$, $\pi(v,g_2)$ and $\gamma$ in $O(n^2/s^2+nk/s)$ time, 
where $k$ is the number of edges of $\pi(v,g_1)\cup\pi(v,g_2)$ for the junction 
$v$ of $\pi(p,g_1)$ and $\pi(p,g_2)$.
Since there are $O(s)$ such maximal polygonal chains containing no
endpoint of extensions in their interiors and the sum of $k$ over all such maximal polygonal chains
is $O(n)$, the running time for decomposing the problem $(P,p)$ into smaller 
problems is $O(n^2/s)$.
\begin{lemma}
	We can decompose the problem $(P,p)$ into smaller problems 
	in $O(n^2/s)$ time. 
\end{lemma}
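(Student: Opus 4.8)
The plan is to account for the running time of the decomposition procedure described just above, chain by chain, on top of the one-time $O(n^2/s)$ preprocessing of Section~\ref{sec:wall-edges} (computing the balanced subdivision and all w-edges). Recall that each subpolygon of the balanced subdivision carries $O(1)$ extensions on its boundary by Corollary~\ref{lem:incident-vertical-line}, and that there are $O(\min\{n/s,s\})=O(s)$ of them when $s=O(\sqrt n)$ by Theorem~\ref{thm:subdivide}; hence the maximal polygonal chains $\gamma$ of $\partial P$ that lie on a single balanced subpolygon and contain no extension endpoint in their interior number $O(s)$, and each has complexity $O(n/s)$. For a fixed such $\gamma$ with endpoints $g_1,g_2$ I would (i) compute the junction $v$ of $\pi(p,g_1)$ and $\pi(p,g_2)$ exactly as in Section~\ref{sec:wall-edges}, in $O(n^2/s^2)$ time; (ii) compute $\pi(p,g_i)$, and hence its suffix $\pi(v,g_i)$, for $i=1,2$ using Lemma~\ref{lem:path-faster} in $O(n^2/s^2)$ time; and (iii) walk along $\pi(v,g_1)\cup\pi(v,g_2)$ performing, for each edge, the appropriate one of the three extension cases, each costing $O(n/s)$ since it only traverses a single balanced subpolygon. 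So the work charged to $\gamma$ is $O(n^2/s^2 + (n/s)\,k_\gamma)$, where $k_\gamma$ denotes the number of edges of $\pi(v,g_1)\cup\pi(v,g_2)$. Only $O(s)$ words of workspace are needed, since each invocation of Lemma~\ref{lem:path-faster} and of the junction computation fits in $O(s)$ words and each generated subproblem $(P_e,v_e)$ is described by $O(1)$ words.

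Summing over the $O(s)$ chains, the first term contributes $O(s)\cdot O(n^2/s^2)=O(n^2/s)$, so the crux is to show $\sum_\gamma k_\gamma = O(n)$; this is the step I expect to be the real obstacle. I would argue that each piece $\pi(v_\gamma,g_i)$ is a sub-path of the root-to-leaf path $\pi(p,g_i)$ of the shortest path tree $T_p$ of $P$ (up to one terminal edge when $g_i$ is a foot point interior to a polygon edge), so the union of all these pieces lies inside the subtree $T'$ of $T_p$ spanned by $p$ and the $O(s)$ extension endpoints, together with $O(s)$ extra edges; in particular $T'$ has $O(n+s)=O(n)$ edges. This bounds the number of \emph{distinct} edges used. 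To bound $\sum_\gamma k_\gamma$, which counts edges with multiplicity, I would use that the polygon vertices occur in the same cyclic order along $\partial P$ and along $T_p$, so the chains $\gamma$ (the arcs between consecutive extension endpoints) correspond to consecutive leaf pairs and $v_\gamma$ is their least common ancestor; a standard charging then shows that every edge of $T'$ belongs to $\pi(v_\gamma,g_1)\cup\pi(v_\gamma,g_2)$ for at most two chains $\gamma$ — equivalently, the non-$\partial P$ edges on the boundary of each first-step subpolygon are exactly its $k_\gamma$ edges, and each such edge bounds at most two first-step subpolygons. Hence $\sum_\gamma k_\gamma \le 2\,|T'| = O(n)$, the second term sums to $O(n/s)\cdot O(n)=O(n^2/s)$, and the total is $O(n^2/s)$.

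The point that needs care is precisely that the pieces $\pi(v_\gamma,g_i)$ are \emph{not} pairwise edge-disjoint: for two chains adjacent along $\partial P$, the two pieces ending at their shared endpoint are nested suffixes of a common root path. So one cannot invoke disjointness directly, and it is the bounded-multiplicity charging (to $T'$, or equivalently to the first-step subdivision) that makes $\sum_\gamma k_\gamma = O(n)$ — and hence the $O(n^2/s)$ bound — go through; everything else is the routine summation above.
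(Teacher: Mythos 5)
Your proposal follows essentially the same route as the paper: per chain $\gamma$, compute the junction in $O(n^2/s^2)$ time, recover $\pi(v,g_1),\pi(v,g_2)$ via Lemma~\ref{lem:path-faster}, spend $O(n/s)$ per edge on the extension cases, and sum over the $O(s)$ chains using $\sum_\gamma k_\gamma = O(n)$. The only difference is that you explicitly justify $\sum_\gamma k_\gamma = O(n)$ by charging each edge of the first-step decomposition (the union of the shortest paths to the extension endpoints) to at most two first-step subpolygons, a correct argument for a bound the paper simply asserts.
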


We decompose each problem recursively unless the problem satisfies one of 
the three stopping criteria in Definition~\ref{def:stopping}.
Then we solve each base problem directly, that is, we report the edges of the shortest path tree.
But for non-base problems, we do not report any edge of the shortest path tree.

\begin{definition}[Stopping criteria] 
\label{def:stopping}
There are three stopping criteria for $(P',p')$:\\
(1) $P'$ has $O(s)$ vertices, (2) $s\geq \sqrt{|P'|}$,  where $|P'|$ is the complexity of $P'$, 
and (3) the depth of the recurrence is a positive constant $c$.
\end{definition}

When stopping criterion (1) holds, we compute the shortest path tree directly using
the algorithm by Guibas et al.~\cite{guibas_linear_1987}.
When stopping criterion (2) holds, we apply the algorithm described in Section~\ref{sec:large-space-tree}   
that computes the shortest path tree rooted at $p'$ inside $P'$ in $O(|P'|^2/s)$ time for the case that
$s\geq \sqrt{|P'|}$, where $|P'|$ is the complexity of $P'$. 
When stopping criterion (3) holds, we compute the shortest path tree using Lemma~\ref{lem:shortest-path-constant}.


\subsubsection{Analysis of the Recurrence}
\label{sec:faster}

\paragraph{Time complexity.}
Consider the base problems.
All base problems induced by stopping criterion (1) can be handled in $O(n)$ time in total
because the subpolygons corresponding to them are pairwise interior-disjoint.
For the base problems induced by stopping criterion (2), 
the subpolygons corresponding to them are also pairwise interior-disjoint.
The time for handling these problems is the sum of $O(|P'|^2/s)$ over all the problems
$(P',p')$. The running time is $O(n^2/s)$ because we have
$O(\sum |P'|^2/s)=O(n/s\sum |P'|)=O(n^2/s).$

Now we consider the base problems induced by stopping criterion (3).
For depth $c$ of the recurrence, every subpolygon has complexity of $O(n/s^c)$. Moreover, the total complexity of 
all subpolygons at depth $c$ is $O(n)$. By Lemma~\ref{lem:shortest-path-constant},
the expected time for computing the shortest path trees in all subpolygons 
is the sum of $O(|P'|^2 \log n)$ over all subpolygons $P'$ at depth $c$.
Therefore, we can solve all problems at depth $c$ in $O((n^2\log n)/s^c)$ time because we have
$O(\sum_i |P_i|^2\log n)=O(n/s^c \sum_i |P_i| \log n)=O((n^2\log n)/s^c)$.

We analyze the running time for decomposing a problem into smaller problems.
Consider depth $k$ for $1\leq k < c$.
Let $(P_1,p_1),\ldots,(P_t,p_t)$ be the problems at depth $k$.
Note that the sum of $|P_i|$ over all indices from $1$ to $t$ is $O(n)$.
For each $P_i$, we construct the balanced subdivision of $P_i$ in $O(|P_i|^2/s)$ time, 
compute $O(s)$ w-edges of the shortest paths between $p_i$ and the endpoints of the extensions
in $O(|P_i|^2/s^2)$ time,
and decompose the problem into smaller problems in $O(|P_i|^2/s)$ time.
Thus, the decomposition takes $O(\sum_i |P_i|^2/s)=O(n^2/s)$ time for the problems at depth $k$.
Since $c$ is a constant, the decomposition over the $c$ depths takes $O(n^2/s)$ time.

Therefore, the total running time is $O(n^2/s+(n^2\log n)/s^c)$ for an arbitrary
constant $c>0$.

\paragraph{Space complexity.}
To handle each problem $(P',p')$, we maintain the balanced subdivision of $P'$ using $O(s)$ words of workspace.
Until all subproblems of $(P',p')$ for all depths are handled, we keep this balanced subdivision.
However, we do not keep the subdivision for two distinct problems in the same depth at the same time.
Therefore, the total space complexity is $O(cs)$, which is $O(s)$.

\begin{lemma}
	Given a point $p$ in a simple polygon with $n$ vertices,
	we can compute the shortest path tree rooted at $p$ in $O(n^2/s+ (n^2\log n)/s^c)$ expected time 
	using $O(s)$ words of workspace for $s=O(\sqrt{n})$, where $c$ is an arbitrary positive constant. 
\end{lemma}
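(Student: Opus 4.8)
The plan is to assemble the pieces built up in Sections~\ref{sec:wall-edges}--\ref{sec:faster} into a single recursive procedure and then verify its time and space bounds. First I would compute the balanced subdivision of $P$ by Theorem~\ref{thm:subdivide}, then invoke Lemma~\ref{lem:computing-wall-edges} to obtain, in $O(n^2/s)$ time and $O(s)$ words of workspace, the set $\mathcal{D}$ of all w-edges of the shortest paths from $p$ to the endpoints of the extensions. Using $\mathcal{D}$ together with Lemma~\ref{lem:path-faster}, I would decompose the problem $(P,p)$ into $O(s)$ subproblems $(P',p')$, each with $|P'|=O(n/s)$ and each representable in $O(1)$ words, in $O(n^2/s)$ time; here the structural property that $\pi(p,q)=\pi(p,p')\cdot\pi(p',q)$ for every $q\in P'$ guarantees that the shortest path tree rooted at $p'$ inside $P'$ is exactly the restriction of the tree rooted at $p$ to $P'$. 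I would then recurse on each subproblem, stopping whenever one of the three criteria of Definition~\ref{def:stopping} is met, and at each base problem solve it directly: by Guibas et al.~\cite{guibas_linear_1987} under criterion~(1), by the algorithm of Section~\ref{sec:large-space-tree} under criterion~(2), and by Lemma~\ref{lem:shortest-path-constant} under criterion~(3).

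For the time analysis I would argue depth by depth. The problems at a common depth are pairwise interior-disjoint, so their complexities sum to $O(n)$, and at depth $k$ with $1\le k<c$ every subpolygon has complexity $O(n/s^k)$; hence $\sum_i|P_i|^2\le O(n/s^k)\cdot O(n)=O(n^2/s^k)$ and the decomposition work $\sum_i O(|P_i|^2/s)$ at that depth is $O(n^2/s^{k+1})=O(n^2/s)$. Adding the root level and summing over the $c=O(1)$ depths bounds all decomposition work by $O(n^2/s)$. For the base problems, those satisfying criterion~(1) are handled in $O(n)$ total time since their subpolygons are interior-disjoint; those satisfying criterion~(2) take $\sum O(|P'|^2/s)=O(n^2/s)$ by the same summation argument; and those reached at depth $c$ under criterion~(3) have complexity $O(n/s^c)$ each and sum to $O(n)$, so Lemma~\ref{lem:shortest-path-constant} gives expected time $\sum O(|P'|^2\log n)=O((n^2\log n)/s^c)$. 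Adding these contributions yields the claimed $O(n^2/s+(n^2\log n)/s^c)$ expected running time.

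For correctness and space I would note that the leaf subpolygons cover $P$, so every edge of the shortest path tree rooted at $p$ lies in some leaf subpolygon and is produced by the corresponding base problem, while the orientation convention (never reporting an edge from a non-base problem, and, within a base problem $(P',p')$, not reporting an edge $e$ that lies on $\bd P'$ with $P'$ locally to the right of $e$) ensures each edge is output exactly once. For the workspace, handling a single problem $(P',p')$ needs $O(s)$ words for its balanced subdivision, its w-edges, and the invocations of Lemma~\ref{lem:path-faster}; since this data is retained only along one root-to-leaf path of the recursion and that path has length $c=O(1)$, the total workspace is $O(cs)=O(s)$.

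I expect the main obstacle to be the bookkeeping that makes the depth-by-depth summation go through cleanly: checking that each subpolygon at depth $k$ genuinely has complexity $O(n/s^k)$, so that even building its balanced subdivision is cheap enough to fit inside the $O(n^2/s)$ budget, and verifying that interior-disjointness of the subproblems is preserved by all three stopping rules, which is precisely what lets the estimate $\sum|P'|^2\le(\max|P'|)(\sum|P'|)$ close each level of the analysis.
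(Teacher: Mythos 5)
Your proposal is correct and follows essentially the same route as the paper: the same w-edge computation via Lemma~\ref{lem:computing-wall-edges}, the same decomposition into subproblems $(P',p')$ with the three stopping criteria of Definition~\ref{def:stopping} and the same base-case algorithms, the same depth-by-depth bound $\sum_i|P_i|^2\le(\max_i|P_i|)\sum_i|P_i|$, and the same $O(cs)=O(s)$ space argument along a single root-to-leaf path. The only slip is the claim that the top-level decomposition yields $O(s)$ subproblems --- their number is governed by the number of edges on the paths $\pi(v,g_1)\cup\pi(v,g_2)$ over all chains and can be up to $\Theta(n)$ --- but since each subproblem is represented in $O(1)$ words and your analysis uses only interior-disjointness and the per-subpolygon complexity bound, nothing in the argument breaks.
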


\subsection{Case of \texorpdfstring{$s=\Omega(\sqrt{n})$}{}}
\label{sec:large-space-tree}
For the case of $s=\Omega(\sqrt{n})$, the balanced subdivision consists of $O(n/s)$ subpolygons
of complexity $O(s)$. The algorithm for this case is similar to the one for the case of $s=O(\sqrt{n})$, 
except that we do not use Theorem~\ref{thm:path} and Lemma~\ref{lem:shortest-path-constant}. 
Instead, we use the fact that we can store all edges of each subpolygon in the workspace.

As we did before, we compute all w-edges of the shortest paths between $p$ and the endpoints of the extensions.
Using them, we decompose $(P,p)$ into a number of subproblems. In this case, we will see that
every subproblem of $(P,p)$ is a base problem due to stopping criterion~(1)
in Definition~\ref{def:stopping}. Then we solve each subproblem directly using the algorithm
by Guibas et al.~\cite{guibas_linear_1987}.

\begin{lemma}
	\label{lem:w-walls-large}
	We can compute all w-edges of the shortest paths between $p$ and the endpoints of the extensions in $O(n)$ time.
\end{lemma}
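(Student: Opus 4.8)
The claim is that for $s=\Omega(\sqrt{n})$ we can compute all w-edges of the shortest paths between $p$ and the endpoints of the extensions in $O(n)$ time. The plan is to mimic the structure of the $s=O(\sqrt{n})$ case (Lemma~\ref{lem:computing-wall-edges}) but to exploit the fact that now each subpolygon of the balanced subdivision has complexity $O(s)$, so an entire subpolygon fits in the workspace, and the shortest path between any two points of a subpolygon can be computed in linear time in the size of the subpolygon using the algorithm of Guibas et al.~\cite{guibas_linear_1987} rather than via the recursive Theorem~\ref{thm:path}.

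First I would build the balanced subdivision (there are $O(n/s)$ subpolygons, each of complexity $O(s)$, and $O(n/s)$ extensions) and the w-tree $T$ as before, both in $O(n)$ time. Then I would traverse $T$ in depth-first order, maintaining the set $\mathcal{D}$ of already-discovered w-edges together with, for each endpoint $a$ of an already-visited extension, the list of w-edges of $\pi(p,a)$ in order. When visiting a node $\alpha$ with $d(\alpha)=a_1a_2$ whose parent corresponds to $b_1b_2$, I would compute the single new w-edge of $\pi(p,a_i)$ for $i=1,2$ guaranteed by Lemma~\ref{lem:wall-edge}: this is the edge of $\pi(v,a_i)$ incident to the junction $v$, where $v$ is the junction of $\pi(p,a_i)$ with $\pi(p,b_1)$ or $\pi(p,b_2)$ (whichever is closer to $a_i$). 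The key point is that locating $v$ now costs only $O(s)$: the relevant portion of $\pi(p,b_1)$ (resp.\ $\pi(p,b_2)$) between two consecutive w-edges lies in a single subpolygon of complexity $O(s)$, so we load that subpolygon and run Guibas et al.\ to get that piece of the path and find $v$ directly, and similarly the edge of $\pi(v,a_i)$ incident to $v$ lies in the single subpolygon incident to $a_1a_2$ and $b_1b_2$, computable in $O(s)$ time. Since $T$ has $O(n/s)$ nodes and we spend $O(s)$ per node, the total is $O(n)$.

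The main obstacle I anticipate is bookkeeping rather than any genuine difficulty: one must argue that the per-node work is truly $O(s)$ and not $O(s\log s)$ or more — in particular that we do not need the binary search used in the $s=O(\sqrt n)$ case. I expect this is handled by observing that with $b_1b_2$ the parent extension, $v$ lies inside (or on the boundary of) the \emph{single} subpolygon $S$ incident to both $a_1a_2$ and $b_1b_2$: the portion of $\pi(p,b_1)$ inside $S$ is a contiguous subpath with $O(s)$ edges, obtainable in one linear-time call, and $v$ is found by walking along it and testing each edge against the escape criterion (extend the edge toward $b_1$, check which side of $S$'s boundary it meets) in $O(s)$ total time — no search over $\mathcal{D}(b_1)$ is required because the candidate subpath already has only $O(s)$ edges. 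A secondary point to verify is that $\mathcal{D}$ never grows beyond $O(n/s)$ entries (one new w-edge per endpoint, $O(n/s)$ endpoints) so it fits in $O(s)$ words when $s=\Omega(\sqrt n)$, i.e.\ $n/s=O(s)$; this is exactly the regime hypothesis. With these observations the linear bound follows.
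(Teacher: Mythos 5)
There is a genuine gap at the heart of your argument: the claim that the junction $v$ of $\pi(p,a_1)$ with $\pi(p,b_1)$ (or $\pi(p,b_2)$) lies inside, or on the boundary of, the single subpolygon $S$ incident to both $a_1a_2$ and $b_1b_2$ is false. By the proof of Lemma~\ref{lem:wall-edge}, only the vertices of $\pi(v,a_1)$ \emph{other than} $v$ are guaranteed to lie in $S$; the junction itself is a vertex of the funnel formed by $\pi(p,b_1)$ and $\pi(p,b_2)$, which lies on the opposite side of the extension $b_1b_2$ from $S$ and can sit arbitrarily far back along those paths, many subpolygons away (a spiral polygon makes this concrete). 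In fact $\pi(p,b_1)$ essentially does not enter $S$ at all, so ``loading $S$ and running Guibas et al.'' cannot find $v$, and your plan to skip the search over $\mathcal{D}(b_1)$ collapses. Falling back to the binary search of the $s=O(\sqrt{n})$ case does not rescue the bound either: each side test now costs $O(s)$ (traversing $\bd S$), giving $O(s\log(n/s))$ per node and $O(n\log(n/s))$ overall, not $O(n)$.

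The paper closes exactly this gap with an idea your proposal is missing: it \emph{compresses} $\pi(p,b_1)$ and $\pi(p,b_2)$ into polygonal chains $\mu_1,\mu_2$ obtained by connecting the endpoints of their w-edges in order (each of size $O(n/s)=O(s)$), and then takes the union of $S$ with the funnel-like region bounded by $\mu_1$, $\mu_2$ and $b_1b_2$. This union has complexity $O(s)$, so one run of the linear-time shortest-path-tree algorithm of Guibas et al.\ inside it reveals where the (compressed) path to $a_1$ leaves $\mu_1$; this identifies the pair of consecutive w-edges $e,e'$ of $\pi(p,b_1)$ sandwiching the true junction without any binary search. Only then does one refine inside the single subpolygon containing the portion of $\pi(p,b_1)$ between $e$ and $e'$, again in $O(s)$ time, to locate $v_1$ and the w-edge incident to it. Your outer structure (w-tree, depth-first traversal, one new w-edge per endpoint by Lemma~\ref{lem:wall-edge}, $O(s)$ per node times $O(n/s)$ nodes) matches the paper, but without the compressed-funnel step the per-node $O(s)$ bound is unsubstantiated, and that step is the substance of the lemma.
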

\begin{proof}
	As we did in Section~\ref{sec:wall-edges}, we apply depth-first search on the w-tree 
	and compute the w-edges one by one.
	When we reach a node $\alpha$ of the w-tree, we compute the w-edges of $\pi(p,a_1)$ and $\pi(p,a_2)$ which are not computed yet,
	where $a_1$ and $a_2$ are the endpoints of the extension $d(\alpha)$. 
	We show how to compute the edges of $\pi(p,a_1)$ only. The 
	case for $\pi(p,a_2)$ can be handled analogously.
	By Lemma~\ref{lem:wall-edge}, there is at most one such w-edge of $\pi(p,a_1)$.
	Moreover, by the proof of the lemma, such an edge is incident to $v$ on $\pi(v,a_1)$,
	where $v$ is the one closer to $a_1$ than the other 
	between the junction of $\pi(p,b_1)$ and $\pi(p,a_1)$ and
	the junction of $\pi(p,b_2)$ and $\pi(p,a_1)$,
	where $b_1b_2$ is the extension corresponding the parent of $\alpha$.
	
	We compute the junction $v_1$ of $\pi(p,b_1)$ and $\pi(p,a_1)$ as follows.
	Consider the endpoints of the w-edges of $\pi(p,b_1)$ sorted along $\pi(p,b_1)$ from $p$.
	We connect them by line segments in this order to form a polygonal chain.
	We denote the resulting polygonal chain by $\mu_1$. Notice that it
	might intersect the boundary of $P$.
	We also do this for $b_2$ and denote the resulting polygonal chain by $\mu_2$.
	We can compute $\mu_1$ and $\mu_2$ in $O(n/s)=O(s)$ time.
	
	Consider the union of the subpolygon of the balanced subdivision incident to both $a_1a_2$ and
	$b_1b_2$, and the region (funnel) bounded by $\mu_1$, $\mu_2$ and $b_1b_2$.
	The complexity of this union is $O(s)$.
	Thus, we can compute the shortest path tree rooted at $p$ restricted in this union using an
	algorithm by Guibas et al.~\cite{guibas_linear_1987}.
	This algorithm computes the shortest path tree rooted at a given point in time linear to the complexity of the input simple polygon
	using linear space.
	We find the maximal subchain of $\mu_1$ which is a part of $\pi(p,a_1)$. One endpoint of the subchain
	is $p$. Let $v'$ be the other endpoint.
	We find two consecutive w-edges $e$ and $e'$ of $\pi(p,b_1)$ containing $v'$ between them along $\pi(p,b_1)$.
	Then they also contain the junction $v_1$ between them along $\pi(p,b_1)$.
	
	Let $x$ and $y$ be the endpoints of $e$ and $e'$, respectively, that are contained in the 
	same subpolygon. We compute $\pi(x,y)$ one by one using the algorithm by Guibas et al.
	We compute the extensions of the edges of $\pi(x,y)$ towards the subpolygon
	containing $a_1a_2$ and $b_1b_2$ on its boundary in $O(s)$ time.
	Then we can decide which vertex of $\pi(x,y)$ is the junction $v_1$.
	This takes $O(s)$ time in total.
	
	Moreover, while we compute $v_1$, we can obtain the edge of $\pi(v_1,a_1)$ incident to $v_1$.
	Thus, we can obtain the w-edge of $\pi(p,a_1)$ which is not computed yet in $O(s)$ time.
	Since there are $O(n/s)$ nodes in the w-tree, we can compute all w-edges of the shortest paths between $p$ and 
	the endpoints of the extensions in $O(n)$ time in total.
\end{proof}

We decompose the problem $(P,p)$ into smaller problems in $O(n^2/s)$ time in a way
similar to the one in Section~\ref{sec:all-edges}. 
\begin{lemma}
	We can decompose the problem $(P,p)$ into smaller problems defined in
	Section~\ref{sec:all-edges} in $O(n^2/s)$ time.
\end{lemma}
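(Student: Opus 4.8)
The plan is to mirror the decomposition analysis of Section~\ref{sec:all-edges} while exploiting that for $s=\Omega(\sqrt{n})$ every subpolygon of the balanced subdivision has complexity $O(s)$, so that all its edges fit in the workspace and the ingredients of the $O(\sqrt n)$ argument (Theorem~\ref{thm:path}, Lemma~\ref{lem:path-faster}, Lemma~\ref{lem:shortest-path-constant}) can be replaced by direct calls to the linear-time algorithm of Guibas et al.~\cite{guibas_linear_1987}. First I would recall the two-step decomposition: in the first step cut $P$ along the shortest paths $\pi(p,a)$ for every endpoint $a$ of the $O(n/s)$ extensions of the balanced subdivision; in the second step, inside each resulting region bounded by $\pi(v,g_1)$, $\pi(v,g_2)$ and a boundary chain $\gamma$ of $\bd P$, extend the edges of $\pi(v,g_1)\cup\pi(v,g_2)$ toward $g_1,g_2$ exactly as in the three bulleted cases of Section~\ref{sec:all-edges}. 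The resulting subpolygons each have complexity $O(s)$ (a boundary chain of $\bd P$ of length $O(s)$ plus two line segments), so they are all base problems under stopping criterion~(1), and can be represented in $O(1)$ words.

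Next I would account for the running time of producing this decomposition. For each maximal boundary chain $\gamma$ of a subpolygon of the balanced subdivision with endpoints $g_1,g_2$, I need the junction $v$ of $\pi(p,g_1)$ and $\pi(p,g_2)$ and then the edges of $\pi(v,g_1)$ and $\pi(v,g_2)$. Using the precomputed w-edges $\mathcal{D}$ (available in $O(n)$ time by Lemma~\ref{lem:w-walls-large}), the junction $v$ is found by the same funnel-based argument as in the proof of Lemma~\ref{lem:w-walls-large}: build the polygonal chains $\mu_1,\mu_2$ through the endpoints of the w-edges of $\pi(p,g_1),\pi(p,g_2)$, take the union of these funnels with the $O(1)$ relevant subpolygons of the balanced subdivision (total complexity $O(s)$), run Guibas et al. on this union, and locate $v$; then compute $\pi(v,g_1)$ and $\pi(v,g_2)$ edge by edge, each edge lying in a single $O(s)$-complexity subpolygon where Guibas et al. applies directly. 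Handling the three extension cases costs $O(s)$ per extended edge, so the chain $\gamma$ with $k$ edges on $\pi(v,g_1)\cup\pi(v,g_2)$ costs $O(s\cdot k' + (\text{junction time}))$ where across all $O(n/s)$ chains $\sum k = O(n)$; bounding the junction computations by $O(n^2/s^2)$ per chain (as in Section~\ref{sec:all-edges}, but with the $O(s)$ subpolygon bound replacing Lemma~\ref{lem:path-faster}) and summing, the total is $O(n/s \cdot n^2/s^2 + sn/s)=O(n^3/s^3 + n)$, which is $O(n^2/s)$ since $s=\Omega(\sqrt n)$ gives $n/s^2=O(1)$, hence $n^3/s^3 = (n^2/s)\cdot(n/s^2)=O(n^2/s)$.

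The correctness argument is identical in structure to Section~\ref{sec:all-edges}: for any point $q$ in a subpolygon $P'$ with associated vertex $p'$, $\pi(p,q)$ is the concatenation of $\pi(p,p')$ and $\pi(p',q)$, so the shortest path tree rooted at $p'$ inside $P'$ is exactly the restriction of the shortest path tree rooted at $p$ to $P'$; the orientation convention (not reporting an edge $e$ when $P'$ lies locally to the right of $e$ on $\bd P'$) guarantees each edge is reported exactly once. I would state that since every subpolygon has complexity $O(s)$, it is a base problem of type~(1), so no recursion is needed and we simply invoke Guibas et al. on each; the subpolygons are pairwise interior-disjoint with total complexity $O(n)$, so solving all of them costs $O(n)$ in total, dominated by the $O(n^2/s)$ decomposition cost.

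The main obstacle I anticipate is verifying the two claimed complexity bounds cleanly: first, that the subpolygons produced in the second step really have complexity $O(s)$ (this needs the fact that each maximal boundary chain $\gamma$ lies inside one subpolygon of the balanced subdivision, hence has $O(s)$ vertices, together with the observation that each of the three extension cases adds only $O(1)$ line segments per subpolygon and each boundary vertex of $\bd P$ is charged to $O(1)$ subpolygons); and second, that the junction-finding step genuinely costs $O(n^2/s^2)$ per chain rather than more — here I would appeal directly to the analysis already carried out for the $s=O(\sqrt n)$ case, noting only the substitutions (Guibas et al. on $O(s)$-complexity regions in place of Theorem~\ref{thm:path} on $O(n/s)$-complexity regions). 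Everything else is bookkeeping that parallels Section~\ref{sec:all-edges} line by line.
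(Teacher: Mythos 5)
Your overall structure matches the paper's proof (decompose along each maximal boundary chain $\eta$ of a subpolygon of the balanced subdivision, find the junction $v$ of $\pi(p,g_1)$ and $\pi(p,g_2)$ by the funnel construction of Lemma~\ref{lem:w-walls-large}, then produce the subproblems by extending path edges toward $\eta$), and your treatment of the junction step is fine: whether you charge $O(s)$ per chain as in the paper or the looser $O(n^2/s^2)$, the total stays within $O(n^2/s)$ for $s=\Omega(\sqrt n)$. The gap is in the accounting of the extension step. You charge $O(s)$ per extended edge (a walk along $\eta$, which has complexity $O(s)$), and you correctly note that the number of edges of $\pi(v,g_1)\cup\pi(v,g_2)$ summed over all chains is $\Theta(n)$ in the worst case. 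That gives a total of $O(ns)$ for this step, not the $O(n)$ you wrote: the term ``$sn/s$'' in your sum is an arithmetic slip, or it silently assumes that only $O(n/s)$ edges in total get extended, which is not justified (almost every path edge can have an endpoint off $\eta$ and hence require an extension). Since $ns=\omega(n^2/s)$ whenever $s=\omega(\sqrt n)$, your analysis does not establish the claimed $O(n^2/s)$ bound precisely in the regime this lemma is about.

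The paper avoids the per-edge charge by batching. First, the hit points on $\eta$ of the extensions of all w-edges of $\pi(v,g_1)\cup\pi(v,g_2)$ are computed in one shot: connect the w-edges into a polygonal chain $\mu$, form the simple polygon bounded by $\mu$ and the portion of $\eta$ between its endpoints, and run the algorithm of Guibas et al.\ once on it, in $O(s)$ time per chain. Second, for each pair of consecutive w-edges, all path edges lying between them belong to a single subpolygon of the balanced subdivision of complexity $O(s)$, and one $O(s)$-time application of Guibas et al.\ computes these edges \emph{and} their extensions toward $\eta$ together; with $O(n/s)$ pairs per chain this is $O(n)$ per chain and $O(n^2/s)$ over the $O(n/s)$ chains. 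Your proposal needs this amortized, per-pair computation (or some equivalent amortization argument) in place of the per-edge $O(s)$ walk; as written, the second step of your time analysis fails for $s=\omega(\sqrt n)$.
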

\begin{proof}
	Recall that the boundary of a subpolygon $S$ in the balanced subdivision consists of extensions 
	and polygonal chains from $\bd P$.
	For each maximal polygonal chain $\eta$ of $\bd S$ containing no endpoint 
	of extensions in its interior, we do the followings.
	Let $g_1$ and $g_2$ be the endpoints of $\eta$.
	We compute the junction $v$ of $\pi(p,g_1)$ and $\pi(p,g_2)$ in $O(s)$ time
	as we showed in the proof of Lemma~\ref{lem:w-walls-large}.
	
	Then we compute the w-edges of $\pi(v,g_1)$ and $\pi(v,g_2)$ in $O(n/s)=O(s)$ time.
	We are to compute the first point hit by the extension (ray) of each w-edge
	towards $\eta$.
	To do this, we connect the w-edges by line segments 
	to form a polygonal chain $\mu$ as we
	did in Lemma~\ref{lem:w-walls-large}.
	We compute the union of $\mu$ and the part of $\eta$ connecting the two endpoints
	of $\mu$, which is a simple polygon.
	Then we apply the shortest path tree
	algorithm by Guibas et al.~\cite{guibas_linear_1987}.
	This takes $O(s)$ time since there are $O(s)$ such w-edges and $\eta$ has complexity of $O(s)$.
	
	For the edges of $\pi(v,g_1)$ and $\pi(v,g_2)$ lying between two consecutive w-edges, we observe that they
	are contained in the same subpolygon of the balanced subdivision. Thus we can compute such edges and
	extend them towards $\eta$ by applying the algorithm by Guibas et al.
	For a pair of consecutive w-edges, we can do this in $O(s)$ time.
	Since there are $O(n/s)$ such pairs, this takes $O(n)$ time for each maximal polygonal chain $\eta$.
	There are $O(n/s)$ maximal polygonal chains $\eta$, and thus the total running time is $O(n^2/s)$.
\end{proof}

Note that the boundary of each subpolygon $P'$ consists of two line segments
and a part of $\bd P$ containing no endpoint of extensions in its interior. Thus, the complexity of $P'$ is $O(s)$.
This means that all subproblems of $(P,p)$ are base problems due to stopping criterion (1)
in Definition~\ref{def:stopping}. We can solve all base problems in $O(n)$ time in total.
Therefore, we can compute the shortest path tree in $O(n^2/s)$ deterministic time in total.
\begin{lemma}
	Given a point $p$ in a simple polygon with $n$ vertices,
	we can compute the shortest path tree rooted at $p$ in $O(n^2/s)$ deterministic time
	using $O(s)$ words of workspace for $s = \Omega(\sqrt{n})$.
\end{lemma}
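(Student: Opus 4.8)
The plan is to mirror the structure already used for the $s = O(\sqrt{n})$ case, but exploit the fact that when $s = \Omega(\sqrt{n})$ every subpolygon of the balanced subdivision has complexity $O(s)$, so its entire edge set fits in workspace and the linear-time algorithm of Guibas et al.~\cite{guibas_linear_1987} can be invoked directly on it. First I would compute the balanced subdivision of $P$ in $O(n^2/s)$ time using Theorem~\ref{thm:subdivide}; since $s = \Omega(\sqrt{n})$ this yields $O(n/s)$ subpolygons of complexity $O(s)$ and $O(n/s)$ extensions. Next I would build the w-tree and run the depth-first traversal to compute all w-edges of $\pi(p,a)$ over all extension endpoints $a$; by Lemma~\ref{lem:w-walls-large} this takes $O(n)$ total time (each of the $O(n/s)$ nodes costs $O(s)$, using Guibas et al.\ on the $O(s)$-size union of a subpolygon and a funnel built from $\mu_1,\mu_2$). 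The total number of w-edges is $O(s)$ by Lemma~\ref{lem:wall-edge}.

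Then I would decompose $(P,p)$ into subproblems exactly as in Section~\ref{sec:all-edges}: for each maximal polygonal chain $\eta \subseteq \bd S$ of a balanced-subdivision subpolygon $S$ with endpoints $g_1,g_2$ and no interior extension endpoint, compute the junction $v$ of $\pi(p,g_1)$ and $\pi(p,g_2)$, compute the edges of $\pi(v,g_1)$ and $\pi(v,g_2)$ by stitching together w-edges and intra-subpolygon shortest-path pieces (each computed via Guibas et al.\ on an $O(s)$-complexity region), and extend those edges towards $\eta$ to carve out the subpolygons $P_e$ with associated roots $v_e$, following the same three-case split (both endpoints on $\eta$, one endpoint on $\eta$, no endpoint on $\eta$). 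The key structural observation is that each resulting $P'$ is bounded by two line segments sharing the vertex $p'$ and a piece of $\bd P$ that contains \emph{no} extension endpoint in its interior; since consecutive extension endpoints along $\bd S$ are separated by $O(s)$ polygon vertices, such a chain — and hence $P'$ — has complexity $O(s)$. The decomposition lemma already proved gives the bound $O(n^2/s)$ for this step.

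Finally, because every $P'$ has complexity $O(s)$, every subproblem satisfies stopping criterion (1) of Definition~\ref{def:stopping}, so the recursion halts at depth one: there is no recursive descent at all. Each base problem $(P',p')$ is solved directly by running Guibas et al.~\cite{guibas_linear_1987} inside $P'$, which reports the edges of the shortest path tree rooted at $p'$ in $O(|P'|)$ time using $O(|P'|) = O(s)$ workspace; by the structural property that $\pi(p,q) = \pi(p,p') \cdot \pi(p',q)$ for $q \in P'$, these edges are exactly the edges of the global shortest path tree restricted to $P'$, and the orientation convention for boundary edges ensures each edge is reported exactly once. Summing $O(|P'|)$ over all interior-disjoint $P'$ gives $O(n)$ for the reporting phase. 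Adding the $O(n^2/s)$ for the subdivision, $O(n)$ for the w-edges, and $O(n^2/s)$ for the decomposition yields $O(n^2/s)$ deterministic time, with workspace $O(s)$ since we maintain only one balanced subdivision at a time. The main obstacle — and the part that needs the most care — is verifying the complexity bound $|P'| = O(s)$: one must confirm that extending edges of $\pi(v,g_1)\cup\pi(v,g_2)$ towards $\eta$ never pulls in polygon vertices beyond the $O(s)$ that already lie on $\eta$ together with the segment interiors, i.e.\ that each extended edge stays within the single balanced-subdivision subpolygon whose boundary chain is $\eta$, so that the carved region inherits the $O(s)$ bound rather than growing.
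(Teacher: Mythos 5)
Your proposal is correct and follows essentially the same route as the paper: compute the balanced subdivision, obtain all w-edges via the w-tree in $O(n)$ time using Guibas et al.\ on $O(s)$-size unions (Lemma~\ref{lem:w-walls-large}), decompose $(P,p)$ as in Section~\ref{sec:all-edges} in $O(n^2/s)$ time, and observe that every resulting subpolygon is bounded by two segments and an extension-endpoint-free chain of $\bd P$, hence has complexity $O(s)$, so every subproblem is a base problem under stopping criterion (1) and is solved directly by Guibas et al.\ in $O(n)$ total time. The concern you flag at the end is exactly the structural point the paper settles by construction, so no gap remains.
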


Combining the algorithm for case of $s=O(\sqrt{n})$ in Section~\ref{sec:small} with the lemma above, 
we have the following theorem.

\begin{theorem}\label{thm}
Given a point $p$ in a simple polygon with $n$ vertices,
we can compute the shortest path tree rooted at $p$ in $O(n^2/s+(n^2\log n)/s^c)$ expected time
using $O(s)$ words of workspace for an arbitrary positive constant $c$.
\end{theorem}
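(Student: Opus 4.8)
The plan is to combine the two workspace regimes already established in the section and wrap them into a single uniform statement. For $s = O(\sqrt{n})$, the algorithm of Section~\ref{sec:small} already gives the bound $O(n^2/s + (n^2\log n)/s^c)$ in expectation, as summarized in the last lemma of that subsection. For $s = \Omega(\sqrt{n})$, Section~\ref{sec:large-space-tree} gives a deterministic $O(n^2/s)$ bound, and since $(n^2\log n)/s^c$ is nonnegative this is trivially subsumed by $O(n^2/s + (n^2\log n)/s^c)$. So the two cases together cover all admissible values of $s$, and the claimed running time holds throughout. The space bound $O(s)$ holds in both cases — in the small-workspace case because we only ever keep one balanced subdivision per recursion depth and the depth is the constant $c$, giving $O(cs) = O(s)$; in the large-workspace case the subdivision has $O(n/s)$ extensions and each base subpolygon has $O(s)$ complexity, so $O(s)$ words suffice.

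Concretely, I would structure the proof as a short case split. First state that by Theorem~\ref{thm:subdivide} we may compute the balanced subdivision in $O(n^2/s)$ time using $O(s)$ words of workspace regardless of $s$; this is the common preprocessing. Then invoke the lemma at the end of Section~\ref{sec:small} for the case $s = O(\sqrt{n})$ and the lemma at the end of Section~\ref{sec:large-space-tree} for the case $s = \Omega(\sqrt{n})$, noting that the overlap at $s = \Theta(\sqrt{n})$ is handled by either branch. Finally observe that $O(n^2/s)$ is dominated by $O(n^2/s + (n^2\log n)/s^c)$, so the deterministic bound in the large-$s$ case fits the stated expected-time bound.

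Since the two component lemmas are already proved in the excerpt, there is essentially no hard part left: the theorem is a packaging step. The only thing to be slightly careful about is the boundary case $s = \Theta(\sqrt{n})$ — one must make sure the constant thresholds in ``$s = O(\sqrt{n})$'' and ``$s = \Omega(\sqrt{n})$'' are chosen so that every value of $s$ falls into at least one regime; this is immediate since one can take, e.g., the small-workspace algorithm for $s \le \sqrt{n}$ and the large-workspace algorithm for $s \ge \sqrt{n}$. One should also note that $s$ is assumed at most $n$ (and at least a constant) throughout the paper, so no degenerate values need separate treatment. With these observations the theorem follows directly.
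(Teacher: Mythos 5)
Your proof is correct and matches the paper's own argument: the theorem is exactly the combination of the case $s=O(\sqrt{n})$ lemma from Section~\ref{sec:small} and the $O(n^2/s)$ deterministic lemma from Section~\ref{sec:large-space-tree}, with the latter bound subsumed by the stated expected-time expression. Your extra remarks on the boundary $s=\Theta(\sqrt{n})$ and the $O(cs)=O(s)$ space accounting are consistent with what the paper already establishes, so nothing further is needed.
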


Here, the size of the workspace is $O(cs)$. Thus, by changing the roles of $c$ and $s$, we can
achieve another $s$-workspace algorithm. In specific, 
by setting $c$ to the size of workspace and $s$ to $2$, we have the following theorem. 

\begin{theorem}
	Given a point $p$ in a simple polygon with $n$ vertices,
	we can compute the shortest path tree rooted at $p$ in $O((n^2\log n)/2^s)$ expected time
	using $O(s)$ words of workspace for $s\leq \log\log n$.
\end{theorem}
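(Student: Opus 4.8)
The plan is to derive this statement from Theorem~\ref{thm} by exchanging the roles of its two parameters, as already hinted in the remark preceding it. Recall that the algorithm behind Theorem~\ref{thm} takes a workspace size and a recursion depth, uses workspace proportional to (depth)$\times$(workspace size), and runs in $O(n^2/s + (n^2\log n)/s^c)$ expected time, where $s$ is the workspace size and $c$ the recursion depth. First I would run that algorithm with $s$ fixed to the constant $2$ and with recursion depth equal to the parameter $s$ of the present theorem; then the workspace used is $O(2s)=O(s)$ words and the running time becomes $O(n^2/2 + (n^2\log n)/2^{s})=O(n^2+(n^2\log n)/2^{s})$ expected. Finally I would use the hypothesis $s\le\log\log n$, which gives $2^{s}\le\log n$, hence $n^2\le(n^2\log n)/2^{s}$; so the additive $n^2$ is absorbed and the bound becomes $O((n^2\log n)/2^{s})$, as claimed.

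The only step that genuinely needs verification --- and hence the main, though modest, obstacle --- is that Theorem~\ref{thm} is stated for a \emph{constant} recursion depth, whereas here the depth grows with $n$. So I would revisit the analysis of Section~\ref{sec:faster} with the workspace size fixed to $2$ and the depth $c$ allowed to be as large as $\log\log n$, and check that both running-time contributions still behave as advertised. For the decomposition work (building the balanced subdivisions, computing the $O(1)$ w-edges, and splitting each problem), the subpolygons at depth $k$ have total complexity $O(n)$ and individual complexity $O(n/2^{k})$, so the work at depth $k$ is $O(n^2/2^{k})$; summing this geometric series over all depths gives $O(n^2)$ independently of $c$, which is the refinement of the ``$c$ is a constant'' argument in Section~\ref{sec:faster} that is needed here. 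For the base problems, those produced by stopping criteria~(1) and~(2) in Definition~\ref{def:stopping} are pairwise interior-disjoint and so total $O(n^2)$, whereas those produced by criterion~(3) are solved by Lemma~\ref{lem:shortest-path-constant} in $\sum_i O(|P_i|^2\log n)=O((n/2^{c})\cdot n\log n)=O((n^2\log n)/2^{c})$ expected time. This last term is the only one depending on $c$, and the constraint $s=c\le\log\log n$ is exactly what makes $2^{c}\le\log n$ and hence this term $O(n^2)$ as well, consistent with the absorption step.

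It then remains to record the workspace accounting: as in Section~\ref{sec:faster}, at most one balanced subdivision is kept per recursion level at a time, each of size $O(1)$ words since the workspace size is $2$, so the total is $O(c)=O(s)$ words. Combining the three observations yields the claimed $O((n^2\log n)/2^{s})$ expected-time, $O(s)$-workspace algorithm for every $s\le\log\log n$.
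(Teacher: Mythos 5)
Your proposal is correct and takes essentially the same route as the paper, which obtains this theorem from Theorem~\ref{thm} exactly by swapping the roles of $c$ and $s$ (running the algorithm with subdivision/workspace parameter $2$ and recursion depth $s$, using the $O(cs)$ workspace bound, and absorbing the $O(n^2)$ term via $2^s\leq\log n$). Your additional check that the analysis survives a nonconstant depth --- replacing the paper's ``$c$ is a constant'' step by the per-depth bound $O(n^2/2^{k})$ and the geometric sum $\sum_k O(n^2/2^{k})=O(n^2)$ --- is a careful treatment of a point the paper leaves implicit, and is indeed what makes the bound go through for all $s\leq\log\log n$.
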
 

\section{Conclusion}
We present an $s$-workspace algorithm
for computing a balanced subdivision of a simple polygon consisting of 
$O(\min\{n/s,s\})$ subpolygons of complexity $O(\max\{n/s,s\})$. 
This subdivision can be computed more efficiently than other subdivisions suggested 
in the context of time-space trade-offs, and therefore can be used for solving several
fundamental problems in a simple polygon more efficiently.
Since our subdivision method keeps all extensions of the balanced subdivision in the workspace,
it has a few other application problems, including
the problem for answering a single-source shortest path query.
We also believe that we can preprocess a simple polygon and 
maintain a data structure of size $O(s)$ so that for any
two points $x$ and $y$ in a simple polygon $\pi(x,y)$ can be computed in $o(n^2/s)$ time with $O(s)$ words of workspace
by combining the ideas from Guibas and Hershberger~\cite{guibas_optimal_1989} with
our subdivision method. We leave this as a future work.

\bibliographystyle{abbrv} \bibliography{paper}{}
\end{document}